\declaretheorem{theorem}
\declaretheoremstyle[%
  spaceabove=-6pt,%
  spacebelow=6pt,%
  headfont=\normalfont\itshape,%
  postheadspace=1em,%
  qed=\qedsymbol%
]{mystyle}
\declaretheorem[name={Proof},style=mystyle,unnumbered,
]{prf}
\newtheorem{lemma}{Lemma}
\newcommand{\head}[1]
 {\markright{\hbox to 0pt{\vtop to 0pt{\hbox{}\vskip 3mm \hrule
 width  \textwidth \vss} \hss}{\sc #1}}}
\begin{document}

\title{\bf On the optimality of exact and approximation algorithms for
scheduling problems\thanks{Part of the work was done when the first
author was visiting Kiel University. Supported in part
by Chinese Scholarship Council, NSFC (11271325) and the DFG, project, Laufzeitschranken f\"ur Scheduling- und Packungsprobleme
unter Annahme der Exponentialzeithypothese, JA 612 /16-1.}}

\author{Lin Chen$^1$\ \ Klaus Jansen$^2$\ \ Guochuan
Zhang$^1$
\\{\small $^1$College of Computer Science, Zhejiang University, Hangzhou, 310027, China
}
\\{\small chenlin198662@zju.edu.cn, zgc@zju.edu.cn}
\\  {\small $^2$ Department of Computer Science, Kiel University, 24098 Kiel,
  Germany }
\\{\small kj@informatik.uni-kiel.de}
 }
\date{}
\maketitle

\begin{abstract}
We consider the classical scheduling problem on parallel identical
machines to minimize the makespan. There is a long history of
studies on this problem, focusing on exact and approximation
algorithms, and it is thus natural to consider whether these
algorithms are optimal in terms of the running time. Under the
exponential time hypothesis (ETH), we achieve the following results
in this paper:
\begin{itemize}
\item The scheduling problem
on a constant number $m$ of identical machines, denoted by
$Pm||C_{max}$, is known to admit a fully polynomial time
approximation scheme (FPTAS) of running time $O(n) +
(1/\epsilon)^{O(m)}$ (indeed, the algorithm works for an even more
general problem where machines are unrelated). We prove this
algorithm is essentially the best possible in the sense that a
$(1/\epsilon)^{O(m^{1-\delta})}+n^{O(1)}$ time PTAS implies that ETH
fails.
\item The scheduling problem
on an arbitrary number of identical machines, denoted by
$P||C_{max}$, is known to admit a polynomial time approximation
scheme (PTAS) of running time
$2^{O(1/\epsilon^2\log^3(1/\epsilon))}+O(n^{O(1)})$. We prove this
algorithm is nearly optimal in the sense that a
$2^{O((1/\epsilon)^{1-\delta})}+n^{O(1)}$ time PTAS for any
$\delta>0$ implies that ETH fails, leaving a small room for
improvement.
\item The traditional dynamic programming algorithm for $P||C_{max}$
is known to run in $2^{O(n)}$ time. We prove this is essentially the
best possible in the sense that even if we restrict that there are
$n$ jobs and the processing time of each job is bounded by $O(n)$,
an exact algorithm of running time $2^{O(n^{1-\delta})}$ for any
$\delta>0$ implies that ETH fails.
\end{itemize}

To obtain our results we will provide two new reductions from 3SAT,
one for $Pm||C_{max}$ and one for $P||C_{max}$. Indeed, the new
reductions explore the structure of scheduling problems and can also
lead to other interesting results. For example, the recent paper of
Bhaskara et al.~\cite{lowrank} consider the minimum makespan
scheduling problem where the matrix of job processing times
$P=(p_{ij})_{m\times n}$ is of a low rank. They prove that rank 4
scheduling is APX-hard while the rank 2 scheduling is not, leaving
the classification of rank 3 scheduling as an open problem. Using
the framework of our reduction for $P||C_{max}$, rank 3 scheduling
is proved to be APX-hard~\cite{note}.

\bigskip
\smallskip\noindent{\bf Keywords:} {Approximation schemes; Scheduling; Lower bounds; Exponential time hypothesis}
\end{abstract}
\thispagestyle{empty}
\newpage
\setcounter{page}{1}

\vspace{-2mm}
\section{Introduction}
\vspace{-3mm}
The complexity theory allows us to rule out polynomial time
algorithms for many fundamental optimization problems under the
assumption $P \not= NP$. On the other hand, however, this does not
give us (non-polynomial) lower bounds on the running time for such
algorithms. For example, under the assumption $P\not= NP$, there
could still be an algorithm with running time $n^{O(\log n)}$ for
3-SAT or bin packing. A stronger assumption, the Exponential Time
Hypothesis (ETH), was introduced by Impagliazzo, Paturi, and Zane
\cite{IPZ2001}:

\noindent{\bf Exponential Time Hypothesis (ETH)}: There is
a positive real $\delta$ such that 3-SAT with $n$ variables and $m$
clauses cannot be solved in time $2^{\delta n} (n+m)^{O(1)}$.

Using the Sparsification Lemma by Impagliazzo et al. \cite{IPZ2001},
the ETH assumption implies that there is no algorithm for 3-SAT with
$n$ variables and $m$ clauses that runs in time $2^{\delta m}
(n+m)^{O(1)}$ for a real $\delta > 0$ as well. Under the ETH
assumption, lower bounds on the running time for several graph
theoretical problems have been obtained via reductions between
decision problems. For example, there is no $2^{\delta n}$ time
algorithm for 3-Coloring, Independent Set, Vertex Cover, and
Hamiltonian Path unless the ETH assumption fails. An essential
property of the underlying strong reductions to show these lower
bounds is that the main parameter, the number of vertices, is
increased only linearly. These lower bounds together with matching
optimal algorithms of running time $2^{O(n)}$ gives us some evidence
that the ETH is true, i.e. that a subexponential time algorithm for
3-SAT is unlikely to exist. For a nice survey about lower bounds via
the ETH we refer to~\cite{marx2}.
Interestingly, using the ETH assumption one can also prove lower
bounds on the running time of approximation schemes. For example,
Marx \cite{marx1} proved that there is no PTAS of running time
$2^{O((1/\epsilon)^{1-\delta})} n^{O(1)}$ for Maximum Independent
Set on planar graphs.

There are only few lower bounds known for scheduling and packing
problems. Chen et al. \cite{chen2006} showed that precedence
constrained scheduling on $m$ machines cannot be solved in time
$f(m) |I|^{o(m)}$ (where $|I|$ is the length of the instance),
unless the parameterized complexity class $W[1] = FPT$. Kulik and
Shachnai \cite{kulik2010} proved that there is no PTAS for the 2D
knapsack problem with running time $f(\epsilon)
|I|^{o(\sqrt{1/\epsilon})}$, unless all problems in SNP are solvable
in sub-exponential time. Patrascu and Williams \cite{pat2010} proved
using the ETH assumption a lower bound of $n^{o(k)}$ for sized
subset sum with $n$ items and cardinality value $k$. Recently,
Jansen et al. \cite{jansen2013} showed a lower bound of $2^{o(n)}
|I|^{O(1)}$ for the subset sum and partition problem and proved that
there is no PTAS for the multiple knapsack and 2D knapsack problem
with running time $2^{o(1/\epsilon)} |I|^{O(1)}$ and
$n^{o(1/\epsilon)} |I|^{O(1)}$, respectively.

In this paper, we consider the classical scheduling problem of jobs
on identical machines with the objective of minimizing the makespan,
i.e., the largest completion time. Formally, an instance $I$ is
given by a set ${\cal M}$ of $m$ identical machines and a set ${\cal
J}$ of $n$ jobs with processing times $p_j$. The objective is to
compute a non-preemptive schedule or an assignment $a: {\cal J} \to
{\cal M}$ such that each job is executed by exactly one machine and
the maximum load $\max_{i=1,\ldots,m} \sum_{j: a(j) = i} p_j$ among
all machines is minimized. In scheduling theory, this problem is
denoted by $Pm||C_{max}$ if $m$ is a constant or $P||C_{max}$ if $m$
is an arbitrary input.

This problem is NP-hard even if $m=2$, and is strongly NP-hard if
$m$ is an input. On the other hand, for any $\epsilon>0$ there is a
$(1+\epsilon)$-approximation algorithm for $Pm||C_{max}$
\cite{sahni} and $P||C_{max}$
\cite{hochbaum}. Furthermore, there is a long history of
improvements on the running time of such algorithms, the reader may refer to
the full version for the literature, and we mention that currently the best
known FPTAS for $Pm||C_{max}$ has a running time
of $O(n)+(1/\epsilon)^{O(m)}$ \cite{jansen unrelated2} for sufficiently small
$\epsilon$ (e.g., $\epsilon < 1/m$), and the best known PTAS for $P||C_{max}$
has a running time of $2^{O(1/\epsilon
\log^2(1/\epsilon))} + n^{O(1)}$ \cite{jansen}.

Exact algorithms for the scheduling problem are also under extensive
research. Recently Lente et al. \cite{kindt} provided algorithms of
running time $2^{n/2}$ and $3^{n/2}$ for $P2||C_{max}$ and
$P3||C_{max}$, respectively. O'Neil \cite{sub-ex,sub neil} gave a
sub-exponential time algorithm of running time $2^{O(m\sqrt{|I|})}$
for the bin packing problem with $m$ bins where $|I|$ is the length of the
input, and it works also for $Pm||C_{max}$.

The main contribution of this paper is to characterize lower bounds
on the running times of exact and approximation algorithms for the
classical scheduling problem. We prove the following.

\vspace{-2mm}
\begin{theorem}
\label{th:id-sche-linear} For any $\delta > 0$, there is no
$2^{O((1/{\epsilon})^{1-\delta})}+n^{O(1)}$ time PTAS for
$P||C_{max}$, unless ETH fails.
\end{theorem}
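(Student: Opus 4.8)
The plan is to reduce from 3SAT (or, equivalently via the Sparsification Lemma, from 3SAT with a linear number of clauses) to the scheduling problem $P||C_{max}$ in such a way that a PTAS running faster than $2^{O((1/\epsilon)^{1-\delta})}+n^{O(1)}$ would yield a $2^{o(n+m)}$ algorithm for 3SAT, contradicting ETH. The central design constraint is a parameter trade-off: given a 3SAT instance $\phi$ with $N$ variables and $M=O(N)$ clauses, I want to produce a scheduling instance $I$ whose number of jobs is polynomial in $N$, whose processing times are encoded so that the target makespan question ``is $\opt(I)\le T$?'' is equivalent to satisfiability of $\phi$, and — crucially — such that setting $\epsilon = \Theta(1/N)$ (so that $1/\epsilon = \Theta(N)$) makes a $(1+\epsilon)$-approximation strong enough to distinguish $\opt \le T$ from $\opt \ge T(1+\epsilon)$, i.e. to decide $\phi$. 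Then a PTAS of running time $2^{O((1/\epsilon)^{1-\delta})}+n^{O(1)} = 2^{O(N^{1-\delta})}+N^{O(1)}$ would solve 3SAT in subexponential time.

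The technical heart is the reduction itself, which I expect to build on the new reduction for $P||C_{max}$ advertised in the abstract. The natural approach is a variant of the classical Partition/bin-packing style gadget, but with the job sizes chosen so that the gap created by an unsatisfiable instance is only a $1/\mathrm{poly}(N)$ fraction of the makespan rather than a constant fraction — this is what forces $\epsilon$ down to $\Theta(1/N)$ and is exactly why we get an exponential-in-$1/\epsilon$ lower bound rather than merely APX-hardness. Concretely I would expect machines to come in groups corresponding to variables and clauses, with ``selector'' jobs whose placement encodes a truth assignment, and ``consistency/slack'' jobs that can be packed without overflow if and only if the encoded assignment satisfies every clause; an unsatisfied clause should force one machine's load to exceed the target by an additive $1$, against a target makespan of order $\mathrm{poly}(N)$, giving a multiplicative gap of $1+\Theta(1/\mathrm{poly}(N))$. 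One then either tunes the construction so the polynomial is linear (giving the clean $1/\epsilon = \Theta(N)$ and hence the stated bound), or absorbs a polynomial blow-up into the $\delta$ in the exponent — the statement quantifies over all $\delta>0$, so a $1/\epsilon = \Theta(N^{c})$ relationship for constant $c$ still suffices.

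The main obstacle, and the step I would spend the most care on, is controlling the arithmetic of the gadget so that (a) all processing times remain polynomially bounded in $N$ (needed so that ``$n$'' in the target running time is $\mathrm{poly}(N)$ and so that the reduction is genuinely polynomial-time), (b) there are no ``accidental'' packings — every schedule achieving makespan $\le T$ must correspond to a genuine, globally consistent truth assignment, which typically requires a careful choice of weights (e.g. base-$B$ positional encodings with $B$ a polynomial, plus guard terms) to prevent carries and cross-talk between variable/clause coordinates, and (c) the completeness direction: from a satisfying assignment one can actually realize a schedule of makespan exactly $T$. A secondary subtlety is making sure the number of machines $m$ is polynomial in $N$ (it will be, being $O(N+M)=O(N)$), since $P||C_{max}$ allows arbitrary $m$; this is what distinguishes the argument from the $Pm||C_{max}$ case. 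Once the reduction's correctness and the $1+\Theta(1/\mathrm{poly}(N))$ gap are established, the ETH consequence is a one-line substitution as sketched above, using the contrapositive: a fast PTAS decides the gap problem, hence 3SAT, in time $2^{o(N)}N^{O(1)}$, contradicting ETH (in the linear-clause form guaranteed by the Sparsification Lemma).
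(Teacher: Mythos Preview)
Your high-level framework is correct: reduce 3SAT to an exact-makespan question, then set $\epsilon=1/(K+1)$ so that a $(1+\epsilon)$-approximation decides whether $\opt\le K$, and read off an ETH contradiction. This matches the paper. However, there is a genuine gap in your parameter accounting, and it is precisely the place where the real work lies.

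You write that ``a $1/\epsilon=\Theta(N^{c})$ relationship for constant $c$ still suffices'' because the theorem quantifies over all $\delta>0$. This is false. If your reduction produces a target makespan $K=\Theta(N^{c})$ for a \emph{fixed} $c>1$, then a hypothetical PTAS of running time $2^{O((1/\epsilon)^{1-\delta})}$, applied with $\epsilon=1/(K+1)$, runs in time $2^{O(N^{c(1-\delta)})}$. This is $2^{o(N)}$ only when $\delta>1-1/c$; for smaller $\delta$ you get nothing. In other words, a single reduction with makespan $N^{c}$ yields only a $2^{(1/\epsilon)^{1/c}}$ lower bound --- exactly what the paper notes the classical $O(n^{16})$-makespan reduction gives --- and does \emph{not} prove the theorem as stated.

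What the paper actually does, and what your sketch does not supply, is a \emph{family} of reductions parameterized by $\delta$: given the $\delta_0$ of a purported fast PTAS, they build an instance whose makespan is $K=O(n^{1+\delta_0})$ (with hidden constants depending on $\delta_0$), so that $K^{1-\delta_0}=O(n^{1-\delta_0^2})=o(n)$. Achieving this near-linear makespan is the technical heart and is far from a ``tuning'' of a standard gadget. The obstacle is that a naive positional encoding to keep variable and clause indices from interfering requires a base of size $\Theta(n)$, driving the makespan to $\Theta(n^2)$ or worse. The paper circumvents this by introducing, for each variable, a chain of $\Theta(1/\delta)$ ``agent'' jobs across $1/\delta-1$ layers, together with carefully designed index functions $f_k,g_k$ obtained by recursively partitioning the clause set into $n^{\delta}$-sized groups; this lets each coordinate carry only $O(n^{\delta})$ worth of information, so that the numbers stay in $O(n^{1+\delta})$ while the combinatorics of the staircase of ``huge-job'' gaps still forces a consistent truth assignment. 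Your proposal does not contain this idea, and without it the argument cannot reach exponent $1-\delta$ for arbitrary $\delta>0$.
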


\begin{theorem}
\label{th:id-sche-number} For any $\delta>0$, there is no
$2^{O(n^{1-\delta})}$ time exact algorithm for $P||C_{max}$ with $n$
jobs even if we restrict that the processing time of each job is
bounded by $O(n)$, unless ETH fails.
\end{theorem}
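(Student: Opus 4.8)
The plan is to give a polynomial-time reduction from 3-SAT to $P||C_{max}$ that is \emph{linear in the instance size} and whose output already lies in the restricted class of the theorem, i.e. every processing time is $O(n)$. Starting from a 3-SAT formula we first apply the Sparsification Lemma \cite{IPZ2001}, so that it is enough to rule out a $2^{o(N)}$-time decision procedure for 3-SAT on formulas $\varphi$ with $N$ variables and $M=O(N)$ clauses (the excerpt records that ETH forbids this). From such a $\varphi$ we construct, in time $N^{O(1)}$, an instance $I(\varphi)$ of $P||C_{max}$ with $n=\Theta(N)$ jobs, each of processing time $O(N)=O(n)$, on $m=\Theta(N)$ machines, together with an integer threshold $T=\Theta(N)$, such that $\varphi$ is satisfiable if and only if $I(\varphi)$ admits a schedule of makespan at most $T$ (a non-satisfiable $\varphi$ forces makespan $\ge T+1$). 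Granted the reduction, a $2^{O(n^{1-\delta})}$-time exact algorithm for the restricted problem would decide satisfiability of $\varphi$ in time $2^{O(N^{1-\delta})}+N^{O(1)}=2^{o(N)}$, contradicting ETH; this is the theorem. The same construction, read with the multiplicative gap $T$ versus $T+1$ and $\epsilon\approx 1/T=\Theta(1/N)$, is also what drives Theorem~\ref{th:id-sche-linear}.

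The core is the reduction. Jobs and machines are organised into \emph{variable gadgets} and \emph{clause gadgets} that are coupled by an \emph{exact-packing} device: the numbers are chosen so that the total processing time equals $mT$, whence in any schedule of makespan $\le T$ every machine is loaded to exactly $T$ and a feasible schedule is precisely an exact partition of the jobs into capacity-$T$ bins. A variable gadget for $x_i$ supplies a constant number of jobs and one machine that admits essentially two exact fillings, corresponding to $x_i=\texttt{true}$ and $x_i=\texttt{false}$; a clause gadget for a width-$3$ clause $C_j$ supplies a constant number of jobs and a machine that can be completed to load $T$ exactly when at least one literal of $C_j$ has been set true by the variable gadgets; small ``connector'' and ``dummy'' jobs carry the variable--clause incidences and absorb the slack. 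The step that departs from textbook hardness proofs is that we are not permitted a positional (base-$b$) encoding of the assignment, which would create numbers of order $b^{\Theta(N)}$ or at least a large base; instead we spread the encoding over the $\Theta(N)$ machines, so that each machine's load is governed by only $O(1)$ local choices, and we pin the intended decompositions down by (i) confining each job type's size to a narrow interval, in the style of \textsc{3-Partition}, and (ii) making each machine receive a prescribed constant number of jobs (again enforced through the size intervals).

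Correctness has the two standard directions. Forward: given a satisfying assignment we fill the machines gadget by gadget and check each load equals $T$ exactly, the dummy jobs taking care of the literals that are true but unused. Backward: the interval restrictions together with the fixed number of jobs per machine make the exact filling of each variable and clause machine essentially rigid, so that any schedule of makespan $\le T$ reads off a single truth value for each variable — consistency of this value across all clauses containing the variable is exactly what the connector jobs enforce — and the clause machines certify that every clause is satisfied.

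The main obstacle is precisely this rigidity under a linear size budget. Forgoing positional encodings costs us the ``free'' uniqueness of decompositions, so the narrow-interval-plus-cardinality bookkeeping must be engineered so that, simultaneously, (a) the exact fillings of the variable and clause machines are inflexible enough to force a globally consistent satisfying assignment, (b) there is still enough room inside the $O(N)$ range to represent width-$3$ clauses and the (bounded) number of occurrences of each variable, and (c) only $\Theta(N)$ jobs and $\Theta(N)$ machines are used, so that the reduction is genuinely linear and the conclusion is the strong ``$2^{O(n^{1-\delta})}$ for every $\delta>0$''. Verifying the rigidity of the exact fillings under these cramped numbers is where essentially all of the technical effort concentrates; the ETH bookkeeping and the forward direction are routine.
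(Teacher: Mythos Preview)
Your plan hinges on a \emph{truly linear} reduction: from a sparsified 3-SAT instance on $N$ variables you want $n=\Theta(N)$ jobs with processing times $O(N)=O(n)$ and threshold $T=\Theta(N)$. That is strictly stronger than what the paper builds, and your write-up does not supply the construction that would make it work. The whole difficulty is the step you label ``rigidity under a linear size budget'': with only $O(N)$-sized numbers and $O(1)$ jobs per machine, you must force every occurrence of a variable across different clause machines to take the \emph{same} truth value. Your ``connector jobs'' are asserted to do this, but you give no mechanism; the 3-\textsc{Partition} interval trick pins down how many jobs go on a machine, not which indices they carry, and without a positional encoding there is no evident way to couple a variable gadget to the several clause gadgets that mention it while keeping all numbers $O(N)$. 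The paper's entire technical apparatus (agent jobs, the $f,g$ index functions, the recursive layer structure) exists precisely to solve this coupling problem, and even so it only drives the makespan down to $O(n^{1+\delta})$ for arbitrary fixed $\delta>0$, not to $O(n)$.

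The paper's proof of this theorem does \emph{not} use a linear reduction. It uses the $O(n^{1+\delta})$ reduction parametrised by $\delta$, and then a parameter-matching argument: assuming a $2^{O(n^{1-\delta_0})}$ algorithm exists for the restricted class, set $\delta=\delta_0$ in the reduction, obtaining an instance with $O(n)$ jobs and processing times $O(n^{1+\delta_0})$; after padding to bring the instance into the restricted class, the hypothetical algorithm runs in time $2^{O(n^{(1+\delta_0)(1-\delta_0)})}=2^{O(n^{1-\delta_0^2})}=2^{o(n)}$, contradicting ETH. Note that this yields exactly the ``for every $\delta>0$'' statement and no more; your proposed linear reduction, if it existed, would give the stronger conclusion ``no $2^{o(n)}$ algorithm'', which the paper does not claim. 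Unless you can actually exhibit the linear-size gadgetry and prove its rigidity, you should retreat to the $O(n^{1+\delta})$ construction and the $(1+\delta_0)(1-\delta_0)<1$ trick.
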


\begin{theorem}
\label{th:sche-machine parameter} For any $\delta > 0$, there is no
$(1/\epsilon)^{O(m^{1-\delta})}+n^{O(1)}$ time FPTAS for
$Pm||C_{max}$, unless ETH fails.
\end{theorem}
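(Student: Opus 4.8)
The plan is to design a new polynomial-time reduction from 3SAT to $Pm||C_{max}$ whose parameters can be tuned so that an overly fast FPTAS would decide 3SAT in subexponential time, contradicting ETH. Start from a 3SAT formula $\phi$ with $N$ variables and $M$ clauses; it is standard (via the Sparsification Lemma, followed by a linear-size transformation) that ETH rules out a $2^{o(N)}$-time algorithm for $\phi$ even when $M=O(N)$ and each variable occurs in $O(1)$ clauses. Fix $\delta>0$, put $m_0=\lceil\log N\rceil$ and $k=\lceil N/m_0\rceil$, split the variables into $m_0$ groups of at most $k$ variables each, and the clauses into $m_1=\Theta(m_0)$ batches of $O(k)$ clauses each. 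The target instance $\iii_\phi$ of $Pm||C_{max}$ will have $m=m_0+m_1=\Theta(\log N)$ machines: one \emph{variable machine} per group and one \emph{clause machine} per batch. The accounting to be achieved is that $\iii_\phi$ has $n=O(N)$ jobs, all processing times being $\mathrm{poly}(N)$-bit integers, a uniform target $T=2^{\Theta(k)}$, total load exactly $mT$, and $\phi$ satisfiable $\iff\opt(\iii_\phi)\le T$; since loads are integral and sum to $mT$, the latter forces $\opt(\iii_\phi)=T$ if $\phi$ is satisfiable and $\opt(\iii_\phi)\ge T+1$ otherwise.

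Granting such a reduction, the theorem follows immediately. Suppose $Pm||C_{max}$ had an FPTAS running in time $(1/\epsilon)^{c\,m^{1-\delta}}+n^{c}$, and run it on $\iii_\phi$ with $\epsilon=1/(2T)$. The returned value is the makespan of an actual schedule, hence an integer $\ge\opt(\iii_\phi)$, and it is also $\le(1+\epsilon)\opt(\iii_\phi)$; thus it equals $T$ precisely when $\phi$ is satisfiable, i.e.\ the algorithm decides 3SAT. Now $1/\epsilon=2^{\Theta(k)}=2^{\Theta(N/m_0)}$, $n=O(N)$ and $m=\Theta(m_0)$, so the running time is $2^{O(c\,m^{1-\delta}N/m_0)}+\mathrm{poly}(N)=2^{O(cNm_0^{-\delta})}+\mathrm{poly}(N)=2^{O(cN/(\log N)^{\delta})}+\mathrm{poly}(N)=2^{o(N)}$, contradicting ETH. (Any $m_0=m_0(N)$ with $m_0\to\infty$, $m_0\le N$ works; all that is used is $m^{1-\delta}=O(m_0^{1-\delta})=o(m_0)$, hence $m^{1-\delta}\log(1/\epsilon)=o(N)$; more aggressive choices such as $m_0=\Theta(N^{\delta/2})$ additionally keep every processing time of magnitude $\mathrm{poly}(N)$.)

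The work is therefore in constructing $\iii_\phi$, which I would do as a ``distributed'' variant of the classical reduction from 3SAT to Subset Sum. On variable machine $i$ I place, at a large common scale $D$, selector jobs of sizes $D,2D,4D,\dots,2^{k-1}D$ together with complementary jobs and fillers, so that in any schedule of makespan $\le T$ exactly one job of each complementary pair sits on machine $i$ and the part of its load at scale $D$ equals $D\cdot\langle\alpha_i\rangle$, encoding an arbitrary assignment $\alpha_i$ of the group's variables; the $2^{k}$ subset sums being pairwise distinct is exactly what pins $T$ at magnitude $2^{\Theta(k)}$ and is the source of the $(1/\epsilon)^{\Theta(m)}$ barrier. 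At scales well below $D$, in a separate block per literal occurrence, I add for every clause $c$ touching a literal $\ell$ of a variable on machine $i$ a \emph{migratory} job $J_{c,\ell}$ and an accompanying anti-slack job, designed so that $J_{c,\ell}$ may stay on machine $i$ (the load there still reaching $T$) iff $\ell$ is \emph{false} under $\alpha_i$, and must otherwise be evicted. On the clause machine holding the batch of $c$, I reserve a ``clause digit'' for $c$ and add slack jobs so that this digit can be completed to value $T$ iff at most two of $J_{c,\ell_1},J_{c,\ell_2},J_{c,\ell_3}$ are evicted onto it --- equivalently, iff some literal of $c$ is true. Choosing $D$, the per-clause scales and the batch separations wide enough that carries between blocks are impossible, one verifies that every machine is forced to load exactly $T$ precisely along schedules of this canonical form, that canonical schedules correspond bijectively to satisfying assignments, and (by a global count) that the job multiset has total size $mT$; that $n=O(N)$ and that all sizes are $\mathrm{poly}(N)$-bit integers are then routine.

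The main obstacle is precisely this gadget, for two reasons. First, a makespan objective constrains only the load of each single machine and can never directly couple two machines, whereas a clause is genuinely $3$-ary across three variable machines; the device that breaks this --- one job $J_{c,\ell}$ that migrates between a variable machine and a clause machine --- must be set up so that both ``it stays'' and ``it is evicted'' are forced in exactly the intended cases, which is where the anti-slack and slack jobs and the precise scale hierarchy have to be nailed down. Second, the makespan is not free to grow: it must stay $2^{O(N/m)}$ so that the additive gap of $1$ is visible to a scheme of precision $1/\epsilon\approx2^{N/m}$, which caps \emph{every} machine --- variable and clause alike --- at $O(N/m)$ bits of information; this is exactly why the $M$ clause digits must be spread over $\Theta(\log N)$ clause machines rather than concentrated on few, and why bounded-occurrence 3SAT and balanced batching enter. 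As is usual for this kind of reduction, the soundness direction --- ruling out \emph{non}-canonical schedules of makespan $\le T$ --- will be the most delicate part, and it is there that the width of the scale separation does the real work.
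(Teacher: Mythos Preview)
Your outer framework is sound and matches the paper's: build an $m$-machine instance with integer target makespan $T$, total load exactly $mT$, and the property that $\phi$ is satisfiable iff $\opt\le T$ (hence $\opt\in\{T,T+1\}$); then run the hypothetical FPTAS at $\epsilon\approx 1/(2T)$ and read off satisfiability. Your counting, with $m=m(N)\to\infty$ and $T=2^{\Theta(N/m)}$ so that $m^{1-\delta}\log(1/\epsilon)=o(N)$, is correct and is exactly how the paper derives the contradiction (the paper uses a large constant $m$ and gets $T=2^{O((n/m)\log^2 m)}$, but the arithmetic is the same).

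The reduction itself, however, is where your proposal diverges from the paper and where there is a real gap. The paper does \emph{not} encode SAT directly. It first reduces (linearly) to a structured variant of 3-dimensional matching, and then 3DM $\to$ scheduling. In the second step every element $\eta$ is assigned a digit $f(\eta)\in\{1,\dots,O((n/m)\log m)\}$ in base $\alpha=m^{O(1)}$, where $f$ is produced by a \emph{greedy coloring} of an auxiliary bipartite graph; the key lemma is a universal property: for \emph{every} proper matching, the matches can be split into $m$ groups so that elements within a group receive distinct digits. On top of $f$ the paper layers a second function $g(\eta)\in[m,2m)$: a match $(w,x,y)$ becomes a job $\sum g(\cdot)\alpha^{f(\cdot)}$ and each element $\eta$ gets a cover job $(6m^3-g(\eta))\alpha^{f(\eta)}$. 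Since the target coefficient at every digit is $6m^3$, the only way to hit it is either a dummy job or a cover job for $\eta$ paired with a match job that actually contains $\eta$. This $(f,g)$ pair is the mechanism that \emph{routes} jobs to the right machines and lets one read a proper matching back out of any makespan-$T$ schedule.

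Your sketch has no analog of this routing device, and scale separation alone does not supply one. Separation prevents carries across digits, but it does not stop a migratory job $J_{c,\ell}$ from landing on a clause machine that happens to use the same digit for a different clause, nor does it tie $J_{c,\ell}$ (or the ``anti-slack'' job) to the particular variable machine holding $\ell$'s selector pair --- on identical machines, ``variable machine $i$'' is not an address you can write into the instance. This is exactly the failure the paper's $g$-coefficients are engineered to exclude, and your closing remark that ``scale separation does the real work'' for soundness underestimates the difficulty. There is also a slip in the clause gadget: with ``$J_{c,\ell}$ stays iff $\ell$ is false'' and ``digit completed iff at most two are evicted'', what you enforce is ``some literal is false'', not ``some literal is true''; swapping the stay/evict convention fixes it, but it signals that the gadget has not been worked through. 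In short, the shape of your argument is right, but the reduction you outline is not yet a proof, and the paper's route through 3DM with the coloring and coefficient functions is a substantively different (and complete) construction.
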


\begin{theorem}
\label{th:length of the input1} For any $\delta
> 0$, there is no $2^{O(m^{1/2-\delta}
\sqrt{|I|})}$ time exact algorithm for $Pm||C_{max}$,  unless ETH
fails. Here $|I|$ is the length of the input.
\end{theorem}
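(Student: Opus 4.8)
\para\textit{Proof sketch.} The plan is to reuse the reduction from 3SAT to $Pm||C_{max}$ constructed in the proof of Theorem~\ref{th:sche-machine parameter} and merely to re-examine the running time of the hypothetical algorithm on the instances it produces. Recall the shape of that reduction: from a 3SAT formula $\phi$ with $N$ variables and $O(N)$ clauses (the bound on the number of clauses may be assumed by the Sparsification Lemma), it builds in polynomial time an instance $I$ of $Pm||C_{max}$ and a target makespan $T$ with $\phi$ satisfiable if and only if $\opt(I)\le T$, and such that: (i) the number of machines is $m=\Theta(N/\log N)$; (ii) every processing time and $T$ are bounded by $N^{O(1)}$; (iii) the encoding length is $|I|=N\,(\log N)^{O(1)}$ --- in particular $I$ has $N\,(\log N)^{O(1)}$ jobs, each of $O(\log N)$ bits. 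Properties (ii)--(iii) are exactly what lets the same instance serve for the FPTAS lower bound in Theorem~\ref{th:sche-machine parameter}: taking $\epsilon<1/(T+1)$, a $(1+\epsilon)$-approximation decides whether $\opt(I)\le T$, while $1/\epsilon=N^{O(1)}$.

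Granting this reduction, the theorem reduces to a short calculation. Suppose that for some fixed $\delta>0$ there is an exact algorithm for $Pm||C_{max}$ of running time $2^{O(m^{1/2-\delta}\sqrt{|I|})}$. Run it on the instance $I$ above; using (i) and (iii) its running time is
\[
2^{O(m^{1/2-\delta}\sqrt{|I|})}\;=\;2^{O\left(N^{1-\delta}(\log N)^{O(1)}\right)}\;=\;2^{o(N)},
\]
since a fixed positive power of $N$ dominates any fixed power of $\log N$. Hence $\phi$, and therefore 3SAT on $N$ variables and $O(N)$ clauses, would be decided in $2^{o(N)}$ time, contradicting the ETH via the Sparsification Lemma. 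No additive $n^{O(1)}$ term is needed in the hypothesis, as $n=N\,(\log N)^{O(1)}$ is itself $2^{o(N)}$; and since O'Neil's $2^{O(m\sqrt{|I|})}$ algorithm runs on these instances, the bound is off from a known upper bound only by the factor $m^{1/2+\delta}$ in the exponent.

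The substance, and the expected main obstacle, is not this calculation but making the reduction behind Theorem~\ref{th:sche-machine parameter} meet property (iii): keeping $|I|$ near-linear in $N$ while still using $m=\Theta(N/\log N)\gg 1$ machines and a polynomially bounded $T$. The textbook way of enforcing clause satisfaction --- one positional digit per clause in the processing times, in a carry-free base --- produces numbers of $\Theta(N)$ bits, hence $|I|=\Theta(N^2)$, which would only give the theorem for $\delta>1/2$. To obtain it for every $\delta>0$ one must keep all processing times polynomial ($O(\log N)$ bits) and use only $N\,(\log N)^{O(1)}$ jobs --- so assignments must be selected by a compact binary encoding rather than by one job per configuration, and the clause checks must be arranged so that each machine confronts only polynomially many relevant configurations. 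This is precisely the purpose of the block construction of Theorem~\ref{th:sche-machine parameter}, which places $\Theta(\log N)$ variables on each machine; once it is available, Theorems~\ref{th:sche-machine parameter} and~\ref{th:length of the input1} are just the two ways --- through a PTAS and through an exact algorithm --- of reading off the same family of instances.
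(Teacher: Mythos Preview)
Your approach is correct and is essentially the paper's: Theorems~\ref{th:sche-machine parameter} and~\ref{th:length of the input1} share the same reduction from Section~3, and the latter follows from the input-size calculation you give.

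One correction to your description of that reduction, though. In the paper, $m$ is a \emph{free parameter}: from a 3SAT$'$ instance with $n$ variables one builds, for any $m\le n$, an instance on $m+1$ machines with $O(n)$ jobs and target makespan $K=2^{O((n/m)\log^2 m)}$. Your choice $m=\Theta(N/\log N)$ is a legitimate instantiation, but with it the makespan is $K=2^{O(\log^3 N)}$, i.e.\ quasi-polynomial in $N$, not $N^{O(1)}$ as your property~(ii) asserts; correspondingly the side remark that ``$1/\epsilon=N^{O(1)}$'' already suffices for Theorem~\ref{th:sche-machine parameter} is off. This does not affect your proof of Theorem~\ref{th:length of the input1}, which uses only (i) and (iii): with that choice of $m$ each processing time has $O(\log^3 N)$ bits, so $|I|=O(N\log^3 N)=N(\log N)^{O(1)}$ as you need, and your computation $m^{1/2-\delta}\sqrt{|I|}=N^{1-\delta}(\log N)^{O(1)}=o(N)$ goes through. (Leaving $m$ free, the reduction gives $|I|=O((n^2/m)\log^2 m)$ and hence $m^{1/2-\delta}\sqrt{|I|}=O(n\,m^{-\delta}\log m)=o(n)$ whenever $m\to\infty$, which is the one-line form of the same argument.)
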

\vspace{-2mm}

We also prove the traditional dynamic programming algorithm for the
scheduling problem runs in $2^{O(\sqrt{m|I|\log m}+ m\log
|I|)}$ time, and is thus essentially the best exact algorithm in
terms of running time. An overview about the known and new results
for $P||C_{max}$ is in the Table 1.
\vspace{-6mm}
\begin{table}[!hbp]
\begin{center}
\caption{Lower and upper bounds on the running
time}
\begin{tabular}{|c|c|c|}
\hline Algorithms & Upper bounds & Lower bounds\\
\hline Approximation scheme
&$2^{O(1/\epsilon^2\log^3(1/\epsilon))}+n^{O(1)}$&
$2^{O((1/{\epsilon})^{1-\delta})}+n^{O(1)}$\\
\hline Approximation scheme& $(1/\epsilon)^{O(m)}+O(n)$ &
$(1/\epsilon)^{O(m^{1-\delta})}+n^{O(1)}$\\
\hline Exact algorithm & $2^{O(\sqrt{m|I|\log m}+ m\log
|I|)}$ &$2^{O(m^{1/2-\delta}\sqrt{|I|})}$\\
 \hline
 Exact algorithm & $2^{O(n)}$ &$2^{O(n^{1-\delta})}$ ($O(n)$ jobs and processing times)\\
 \hline
\end{tabular}
\label{table:running time}
\end{center}
\end{table}
\vspace{-6mm}

Our results imply that the existing exact and approximation algorithms for the scheduling problem
are essentially the best possible, except a minor gap for the PTAS of $P||C_{max}$ that runs in $2^{O(1/\epsilon^2\log^3(1/\epsilon))}+n^{O(1)}$ time.
Given our results, it seems that a $2^{1/\epsilon\log^{O(1)}(1/\epsilon)}+n^{O(1)}$ time PTAS is possible, and it is indeed the case under certain
conditions. Jansen and Robenek
\cite{jansen chri} give a $2^{O(1/\epsilon
\log^2(1/\epsilon))} + n^{O(1)}$ time PTAS under a certain conjecture, and Chen et al. \cite{cardinality} provide a $2^{O(1/\epsilon\log^2(1/\epsilon))}+O(n)$
time PTAS if every machine can accept only a constant number of jobs (the lower bound also holds true if we restrict that each machine can accept at most $c$ jobs
for $c\ge 4$).

Briefly speaking, Theorem~\ref{th:id-sche-linear} and Theorem~\ref{th:id-sche-number}
rely on a nearly linear reduction, which reduces the 3SAT problem
with $n$ clauses and at most $3n$ variables to the scheduling
problem whose (optimal) makespan is bounded by $O(n^{1+\delta})$ for
any $\delta>0$. In contrast, the traditional reduction constructs a scheduling problem whose
makespan is bounded by $O(n^{16})$~\cite{garey}, and thus yields a
lower bound of $2^{(1/\epsilon)^{1/16}}$ assuming ETH. Theorem~\ref{th:sche-machine parameter} and Theorem~\ref{th:length
of the input1} rely on a different reduction, which reduces the 3SAT
problem with $O(n)$ variables and clauses to the scheduling problem
on $m$ machines whose makespan is bounded by
$2^{O(n/m\log^{O(1)}m)}$. The traditional reduction~\cite{garey}, however, is not able to characterize the
dependency on the number of machines. We remark that the framework of our reductions can also lead
to other interesting results, for example, to prove the APX-hardness~\cite{note} of the low rank scheduling problem
mentioned in~\cite{lowrank}.

\vspace{-5mm}
\section{Scheduling on Arbitrary Number of Machines}
\vspace{-2mm}

\begin{theorem}
\label{th:sche-makespan} Assuming ETH, there is no
$2^{O(K^{1-\delta})}|I_{sche}|^{O(1)}$ time algorithm which
determines whether there is a feasible schedule of makespan no more
than $K$ for any $\delta>0$.
\end{theorem}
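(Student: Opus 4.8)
The plan is to give a polynomial-time (indeed, nearly linear) reduction from 3SAT to the decision version of $P||C_{max}$ in which the target makespan $K$ is as small as possible relative to the number of clauses. Concretely, starting from a 3SAT instance $\phi$ with $n$ clauses and at most $3n$ variables, I want to build in polynomial time a scheduling instance $I_{sche}$ together with a bound $K$ such that: (i) $\phi$ is satisfiable iff $I_{sche}$ has a schedule of makespan $\le K$, and (ii) $K = O(n^{1+\delta})$ for any desired $\delta>0$ (or more precisely, $K$ is quasi-linear in $n$ up to the slack the theorem allows). Granting such a reduction, a $2^{O(K^{1-\delta})}|I_{sche}|^{O(1)}$ algorithm for the decision problem would solve 3SAT in time $2^{O(n^{(1+\delta)(1-\delta')})}(n+m)^{O(1)}$, which for suitable $\delta'$ is $2^{o(n)}(n+m)^{O(1)}$, contradicting ETH (using, as in the introduction, the version of ETH in terms of the number of clauses via the Sparsification Lemma). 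So Theorem~\ref{th:sche-makespan} follows immediately once the reduction is in place.

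**The core gadgetry.**

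The standard way to encode SAT into makespan minimization (as in Garey--Johnson) is to use job sizes that are large integers whose digit blocks, written in a suitable radix, track which variable or clause a job belongs to; a schedule of the target makespan then forces, in each digit position, exactly one of "true-job/false-job" per variable onto each of two machines, and the clause digits get satisfied iff at least one literal-job of the right polarity lands appropriately. The obstacle with the textbook construction is precisely that packing $O(n)$ variables and $O(n)$ clauses into disjoint digit blocks makes the numbers — hence the makespan — of magnitude $2^{\Theta(n)}$ (Garey--Johnson already get $O(n^{16})$ by being more careful, which still only yields a $2^{(1/\epsilon)^{1/16}}$-type bound). The new idea, which the introduction advertises, is to avoid giving every variable and every clause its own private digit block: instead one reuses digit positions by grouping variables/clauses and separating "colliding" items by a more clever combinatorial layout — for instance, using a system of blocks indexed so that any two variables appearing in a common clause are assigned different positions, the kind of structure one gets from bounded-degree/sparsified 3SAT where each variable occurs in $O(1)$ clauses. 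After sparsification we may assume each variable appears in a constant number of clauses, so a proper coloring of the "conflict" structure uses $O(1)$ colors, letting us compress the digit representation down to $\mathrm{poly}\log$ many bits per item and hence makespan $n\cdot\mathrm{polylog}(n) = O(n^{1+\delta})$.

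**Carrying it out; the main obstacle.**

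I would proceed in four steps. First, invoke the Sparsification Lemma to reduce general 3SAT to 3SAT where the number of clauses is $O(n)$ and each variable occurs $O(1)$ times; relabel so there are $n$ clauses and $\le 3n$ variables. Second, set up the radix and the block layout: choose the base large enough (a function of the maximum variable occurrence, a constant) so that there are no unwanted carries, and assign to each variable a "true-job" and a "false-job" and to each clause a small set of slack jobs, with sizes given by summing $1$'s in the digit positions dictated by the coloring/layout; this is the step where one must carefully argue no carries propagate and that the only makespan-$K$ schedules are the "canonical" ones. Third, verify both directions of correctness: a satisfying assignment yields a schedule of makespan exactly $K$ by the usual true-job/false-job split plus slack-job balancing, and conversely a makespan-$\le K$ schedule, read digit by digit, yields a consistent assignment satisfying every clause. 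Fourth, bound $K$: with $O(1)$ digits per item times $O(\log n)$ bits per digit position, $K = n^{1+o(1)}$, which is $O(n^{1+\delta})$ for any $\delta>0$, completing the chain to the ETH contradiction. The main obstacle is Step two — designing the block layout and radix so that simultaneously (a) the makespan stays quasi-linear, (b) no carries corrupt the digit semantics despite many items sharing digit positions, and (c) the "only canonical schedules achieve $K$" rigidity still holds; balancing these three requirements against one another is exactly the technical heart of the new reduction, and getting the dependence on the variable-occurrence constant right is where the care is needed.
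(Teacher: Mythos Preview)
Your high-level chain is exactly right: a reduction from (sparsified) 3SAT to the decision problem with $K=O(n^{1+\delta})$, followed by the obvious substitution, yields the ETH contradiction. The problem is the gadgetry you sketch for Step two. A coloring that separates variables occurring in a common clause addresses only the \emph{intra-clause} collisions; it does nothing to enforce \emph{consistency} of a variable's truth value across the several clauses it occurs in. In a Garey--Johnson style radix encoding, that consistency is precisely what the per-variable digit positions buy you: the schedule forces exactly one of $\{\text{true-job}_i,\text{false-job}_i\}$ onto each side because $z_i$ owns a private digit. Once $\Theta(n)$ variables share $O(1)$ positions, many true/false jobs become numerically interchangeable and your ``only canonical schedules achieve $K$'' rigidity collapses: a schedule can mix $z_i$'s true-job in one clause with $z_i$'s false-job in another without changing any digit sum. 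Your claimed bound ``$O(1)$ digits per item $\Rightarrow$ polylog bits per item $\Rightarrow$ $K=n\cdot\mathrm{polylog}\,n$'' also does not type-check: with $\Theta(n)$ items sharing a position the radix must be $\Omega(n)$ to avoid carries, so $O(1)$ positions already give $K=n^{O(1)}$, not $n^{1+o(1)}$.

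The paper's route is structurally different from a radix encoding. It uses $\Theta(n)$ machines, each carrying one \emph{huge} job that leaves a precisely sized gap; the scheduling problem becomes ``fill every gap exactly with two or three small jobs.'' Processing times are polynomials in $x=\Theta(n^{\delta})$ with well-separated coefficient ranges (so one can reason term by term without carries), and a re-indexing makes every clause of $C_1$ involve three \emph{consecutive} variable indices, so the variable--clause gaps only need to discriminate indices that differ by $O(1)$. The hard part is the $C_2$ clauses $(z_i\vee\neg z_k)$ with $|i-k|$ potentially $\Theta(n)$: here the paper introduces a hierarchy of \emph{agent jobs}, $1/\delta-1$ layers deep, together with functions $f_j,g_j$ coming from a recursive $n^{\delta}$-ary partition of the clauses; each layer-transition gap only needs to distinguish indices within one block of size $n^{1-\delta j}$, so every gap stays $O(n^{1+\delta})$. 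A graph-coloring idea \emph{is} used in the paper, but for the other reduction (Theorem~\ref{th:sche-machine}, $Pm||C_{max}$), where the goal is makespan $2^{O((n/m)\log^2 m)}$ on $m$ machines---a quite different regime.
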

\vspace{-3mm}
We prove the above theorem in this section, and then Theorem~\ref{th:id-sche-linear} follows
directly. To see why, suppose Theorem~\ref{th:id-sche-linear} fails,
then for some $\delta_0>0$ there exists a
$2^{O((1/{\epsilon})^{1-\delta_0})}\,\, n^{O(1)}$ time PTAS. We use
this algorithm to test if there is a feasible schedule of makespan
no more than $K$ for the scheduling problem by taking
$\epsilon=1/(K+1)$. If there exists a feasible schedule of makespan
no more than $K$, then the PTAS returns a solution with makespan no
more than $K(1+\epsilon)<K+1$. Otherwise, the makespan of the
optimal solution is larger than or equal to $K+1$, and the PTAS thus
returns a solution at least $K+1$. In a word, the PTAS determines
whether there is a feasible schedule of makespan no more than $K$ in
$2^{O((K+1)^{1-\delta_0})}\,\, |I|^{O(1)}$ time, which is a
contradiction to Theorem~\ref{th:sche-makespan}.

To prove Theorem~\ref{th:sche-makespan}, we start with a modified
version of the 3SAT problem, say, 3SAT' problem, in which the set of
clauses could be divided into sets $C_1$ and $C_2$ such that
\vspace{-2mm}
\begin{itemize}
\setlength\itemsep{-1mm}
\item In $C_1$, every clause contains three variables, and every variable appears once
\item In $C_2$, every clause is of the form $(z_i\vee\neg z_k)$, and every positive (negative) literal appears once
\item If a 3SAT' instance is satisfiable, every clause of $C_2$ is satisfied by exactly one literal
\end{itemize}
\vspace{-2mm}

There is a reduction from 3SAT (with $m$ clauses) to 3SAT' via
Tovey's method~\cite{tovey} which only increases the number of
clauses and variables by $O(m)$, and thus ensures the following
lemma (The reader may refer
to Lemma 1 in the full version for details).
\vspace{-2.5mm}
\begin{lemma}
\label{le:3sat'}Assuming ETH, there exists some $s>0$ such that
there is no $2^{sn}$ time algorithm for the 3SAT' problem with $n$
variables.
\end{lemma}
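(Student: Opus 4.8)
The plan is to establish Lemma~\ref{le:3sat'} by composing the standard ETH hypothesis with a chain of gadget reductions that transform an arbitrary 3SAT instance into a 3SAT$'$ instance while keeping the number of variables linear. First I would recall Tovey's classical trick for bounding variable occurrences: given a 3SAT instance $\varphi$ with $n$ variables and $m$ clauses, for each variable $x$ that occurs $t \ge 4$ times, replace the occurrences by fresh copies $x^{(1)}, \dots, x^{(t)}$ and add the cyclic implication clauses $(\neg x^{(1)} \vee x^{(2)}), (\neg x^{(2)} \vee x^{(3)}), \dots, (\neg x^{(t)} \vee x^{(1)})$, which force all copies to take the same value in any satisfying assignment. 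Since $\sum_x t_x = 3m$, this introduces $O(m)$ new variables and $O(m)$ new binary clauses, and the resulting formula has every variable occurring at most $3$ times and is satisfiable iff $\varphi$ is. By the Sparsification Lemma we may assume $m = O(n)$, so the new instance still has $O(n)$ variables.

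Next I would massage this bounded-occurrence instance into the precise normal form demanded by 3SAT$'$. The binary clauses produced above already have the shape $(z_i \vee \neg z_k)$ after relabeling, which is exactly the form required for $C_2$; to guarantee that each positive and each negative literal appears at most once in $C_2$ and, more importantly, that in any satisfying assignment every $C_2$ clause is satisfied by \emph{exactly} one literal, I would route each implication through additional fresh intermediary variables so that the implication chains are arranged as disjoint paths rather than sharing endpoints, and pad with trivial clauses where necessary; the ``exactly one literal'' condition is the delicate part and is handled by observing that an implication $z_i \vee \neg z_k$ is falsified only when $z_i$ is false and $z_k$ is true, so it is satisfied by both literals precisely when $z_i$ is true and $z_k$ is false — by choosing the chain orientation consistently and introducing a dummy clause partner, one forces the assignment on a satisfiable instance to never hit that doubly-satisfied case, or else pays a bounded price in variables. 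For $C_1$, I would take every ternary clause of the (bounded-occurrence) formula and split repeated variable occurrences across clauses using the same copy-and-chain device, so that within $C_1$ every variable appears exactly once, pushing all the ``gluing'' information into $C_2$. Throughout, each transformation multiplies the variable count by at most a constant and preserves satisfiability.

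Finally I would conclude the lemma by contradiction: if there were a $2^{sn}$-time algorithm for 3SAT$'$ for every $s > 0$, then composing with the above reduction (which maps an $n$-variable 3SAT instance to a $cn$-variable 3SAT$'$ instance in polynomial time) would solve $n$-variable 3SAT in time $2^{s c n} \cdot \mathrm{poly}(n)$ for every $s > 0$, hence in time $2^{\delta n} (n+m)^{O(1)}$ for every $\delta > 0$, contradicting ETH. Thus there is a fixed $s > 0$ (namely $s = \delta/c$ where $\delta$ is the ETH constant and $c$ the blow-up factor) below which 3SAT$'$ admits no $2^{sn}$ algorithm.

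The main obstacle I anticipate is the third bullet of the 3SAT$'$ specification — the requirement that on a satisfiable instance \emph{every} clause of $C_2$ be satisfied by exactly one literal, which is a structural constraint on \emph{all} satisfying assignments, not just on the instance syntax. Ordinary occurrence-reduction gadgets do not give this automatically, so the reduction must be engineered (e.g.\ by duplicating each binary clause together with a complementary constraint, or by a careful orientation of the implication digraph into a DAG of bounded width) so that the ``both literals true'' configuration is globally excluded or made irrelevant. Verifying that this engineering still costs only $O(m)$ extra variables and clauses, and that it genuinely preserves the satisfiability equivalence, is where the real work lies; I would defer the detailed gadget bookkeeping to the full version, as the excerpt itself does.
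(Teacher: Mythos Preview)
Your approach is essentially the paper's: apply Tovey's occurrence-reduction and observe that the blow-up is $O(m)$, then use ETH (via Sparsification, so that $m=O(n)$) to conclude. That part is fine.

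Where you diverge is in your treatment of the third bullet, which you flag as ``the main obstacle'' and for which you propose extra routing gadgets, orientation tricks, and duplicated clauses. None of that is needed, and the paper does none of it. The point you are missing is that Tovey's cycle already enforces the ``exactly one literal'' property automatically, for \emph{every} satisfying assignment. The $C_2$ clauses coming from a variable with $d$ occurrences are the cycle $(z_1\vee\neg z_2),(z_2\vee\neg z_3),\dots,(z_d\vee\neg z_1)$; read as implications this is $z_2\to z_1,\ z_3\to z_2,\ \dots,\ z_1\to z_d$, a full cycle, so in any satisfying assignment $z_1=z_2=\cdots=z_d$. But if $z_i=z_k$ then $(z_i\vee\neg z_k)$ has exactly one true literal. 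The $d=1$ case $(z\vee\neg z)$ is trivially of this form too. Likewise each copy $z_i$ appears exactly once positively and exactly once negatively in $C_2$, and exactly once in $C_1$, straight from the construction. So the ``delicate part'' you identify is a one-line observation, not an engineering problem; you can drop the intermediary variables, the DAG orientation, and the deferred bookkeeping entirely.
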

\vspace{-2.5mm}
Given a 3SAT' instance $I_{sat}$ with $n$ variables, we construct a
scheduling instance with $O(n/\delta)$ jobs and $O(n/\delta)$
machines such that it admits a feasible solution with makespan no
more than $K=O(2^{3/\delta}n^{1+\delta})$ if and only if $I_{sat}$
is satisfiable. This would be enough to prove
Theorem~\ref{th:sche-makespan}. To see why, suppose the theorem
fails, then an exact algorithm of running time
$2^{O(K^{1-\delta_0})}|I_{sche}|^{O(1)}$ exists for some
$\delta_0>0$. We take $\delta=\delta_0$ in the reduction, and we can
determine in $2^{O(n^{1-\delta_0^2})}|I|^{O(1)}=2^{o(n)}$ time
whether the constructed scheduling instance admits a schedule of
makespan $K$, and thus determine whether the given 3SAT' instance is
satisfiable, which is a contradiction.

For simplicity, all the subsequent proofs in this section take
$\delta=1/2$ and are thus simplified versions of that in the
full version, nevertheless, the main idea is
similar.
\vspace{-4mm}
\subsection{Overview of the reduction}
\vspace{-2mm}
\label{sec:overview}
We construct 5 kinds of jobs, among them variable jobs
and clause jobs correspond to the variables and clauses respectively, and are
the 'key jobs'. Other jobs serve as 'assistant jobs', including the huge jobs, dummy jobs
and truth-assignment jobs.

Recall that every positive (negative) literal
appears at most twice, while every clause of $C_1$ contains three literals. For each positive (negative) literal, say,
$z_i$ (or $\neg z_i$), two pairs of jobs $v_{i,1}^{\gamma}$ and
$v_{i,2}^{\gamma}$ ($v_{i,3}^{\gamma}$ and $v_{i,4}^{\gamma}$) are
constructed where $\gamma\in\{T,F\}$. For each clause of $C_1$, say,
$c_j$, one job $u_j^T$ and two copies of job $u_j^F$ are
constructed.

We construct huge jobs to create gaps. Precisely speaking, every huge job has a processing time larger than $1/2K$,
and thus one huge job occupies one machine, leaving a gap if all the jobs on this machine
should add up to $K$. Depending on their sizes, all the gaps form a staircase structure, and
the scheduling problem becomes to determine whether the remaining jobs fit into all the gaps.

Suppose we can design 4 kinds of gaps (huge jobs) satisfying the following condition:

\vskip 0.5mm\noindent\textbf{[1.]} Variable-assignment gaps. To fill
up these gaps, for any $i$ either $v_{i,1}^F$, $v_{i,2}^F$, $v_{i,3}^T$,
$v_{i,4}^T$, or $v_{i,1}^T$,
$v_{i,2}^T$, $v_{i,3}^F$, $v_{i,4}^F$ are used.

\vskip 0.5mm\noindent\textbf{[2.]} Variable-clause gaps. If the
positive (or negative) literal $z_i$ (or $\neg z_i$) is in $c_j\in
C_1$, then a variable-clause gap is created so that it could only be
filled up by $u_j$ and $v_{i,1}$ (or $v_{i,3}$). Furthermore, for the
superscripts of $u_j$ and $v_{i,1}$ (or $v_{i,3}$), the gap enforces
that only three combinations are valid: (T,T), (F,F) and (F,T).

\vskip 0.5mm\noindent\textbf{[3.]} Variable-agent and agent-agent
gaps. To fill up these gaps, for any $(z_i\vee\neg z_k)\in C_2$
either $v_{i,2}^T$ and $v_{k,4}^F$, or
$v_{i,2}^F$ and $v_{k,4}^T$ are used.

\vskip 0.5mm\noindent\textbf{[4.]} Variable-dummy gaps. Recall that 8 variable jobs
are constructed for a variable and only 7 of them are used (either
$v_{i,1}$ or $v_{i,3}$ is left), the remaining one will be used to
fill these gaps.

It is not difficult to verify that if every gap is filled up, $I_{sat}$ is satisfiable. To
see why, if $v_{i,1}^F$, $v_{i,2}^F$, $v_{i,3}^T$,
$v_{i,4}^T$ are used in the variable-assignment gaps, then we let variable $z_i$ be true, otherwise
we let it be false. For any clause of $C_1$, say, $c_j$, there is one $u_j^T$ and it must
be scheduled with a true variable job, say, $v_{i,1}^T$ (or $v_{i,3}^T$). If $v_{i,1}^T$ is scheduled
with $u_j^T$, then the positive literal $z_i$ is in $c_j$. Meanwhile the variable $z_i$ is true since
otherwise $v_{i,1}^T$ are used to fill variable-assignment gaps. Thus $c_j$ is satisfied. Similar argument
shows that $c_j$ is also satisfied if $v_{i,3}^T$ is with $u_j^T$. For any clause of $C_2$, say, $(z_i\vee\neg z_k)$,
if $v_{i,2}^T$ and $v_{k,4}^F$ ($v_{i,2}^F$ and $v_{k,4}^T$) are used to fill up the third type of gaps, then
it is easy to verify that variables $z_i$ and $z_k$ are both true (false), implying that $(z_i\vee\neg z_k)$ is satisfied (by exactly
one literal).

\noindent\textbf{Technical Part. }
The difficult part of the reduction is that, how can we design the size of a gap so that it is
filled up by two specific jobs. Roughly speaking, to ensure that a gap is filled up by $\alpha_i$ and
$\beta_j$ rather than $\alpha_{i'}$ and $\beta_{j'}$, a straightforward way is to create 'gaps' between $\alpha_i$
and $\beta_j$. If we let $r=\Theta(n)$, and let $\alpha_i=ir$, $\beta_j=j<r$, then a gap of $ir+j$ has to be
filled up by $\alpha_i$ and $\beta_j$, but then $ir+j=\Theta(n^2)$, which is not favorable.

We notice that, if the indices $i$
and $j$ are not 'free', but satisfies $|i-j|\le h$ for some $h$, then a gap of $\Theta(hr)$ suffices. To
get an intuition, suppose we want to ensure that $\alpha_i$ is always scheduled with $\beta_j=\beta_{i+h}$ for $i=1,\cdots,n$,
then we let $\beta_j=j-h$ (here $j=h+1,\cdots,h+n$), $\alpha_i=hr+i$, and create $n$ gaps of size $hr+2i$ for $i=1,\cdots,n$. Then
$hr+2$ has to be filled up $\beta_{1+h}$ and $\alpha_{1}$. Given that $\beta_{h+1}$ and $\alpha_{1}$ are scheduled, $hr+4$ has
to be filled up by $\beta_{h+2}$ and $\alpha_{2}$, and so on.

How can we establish relationship between indices?
Recall that every variable appears once in $C_1$, we re-index
variables and clauses of $C_1$ so that clause $c_i\in C_1$ contains
three variables $z_i$, $z_{i+1}$ and $z_{i+2}$ where $i\in
R=\{1,4,7,\cdots, n\}$. In this way variable-claus gaps could be constructed using the above idea.

For variable-assignment gaps, we create truth assignment jobs $a_i^{\gamma}$, $b_i^{\gamma}$, $c_i^{\gamma}$
and $d_i^{\gamma}$ as assistant jobs, and create 4 gaps so that they admit $(v_{i,1},a_i,c_i)$, $(v_{i,2},b_i,d_i)$,
$(v_{i,3},a_i,d_i)$, $(v_{i,4},b_i,c_i)$ respectively, and the three jobs for each gap are either all true or
all false. It is not difficult to verify that in this way either $v_{i,1}^F$, $v_{i,2}^F$, $v_{i,3}^T$,
$v_{i,4}^T$, or $v_{i,1}^T$,
$v_{i,2}^T$, $v_{i,3}^F$, $v_{i,4}^F$ are used, and furthermore, the indices of the three jobs are the same, and
again we may use the above idea.

Consider any clause of $C_2$, say, $(z_i\vee\neg z_k)$. Since variables have been re-indexed, indices $i$
and $k$ are arbitrary and it is possible that $|i-k|=O(n)$. To handle this, we try to map indices $i$ and
$k$ to $i'$ and $k'$ respectively, such that $|i'-k'|\le O(\sqrt{n})$, and then gaps of $O(n^{3/2})$ suffice. Precisely,
for $v_{i,2}$ ($v_{k,4}$), we construct a pair of agent jobs, namely $\eta_{i,+}^{\gamma}$ (or
$\eta_{i,-}^{\gamma}$) where $\gamma=\{T,F\}$. We create one variable-agent gap which could
only be filled up by $v_{i,2}$ and its agent $\eta_{i,+}$, and they
should be one true and one false (i.e., their superscripts are $T$
and $F$). Similarly another variable-agent gap is created which
could only be filled up by $v_{k,4}$ and $\eta_{k,-}$ that are one
true and one false. We further create an agent-agent gap which could
only be filled up by $\eta_{i,+}$ and $\eta_{k,-}$ that are one true
and one false. Combining the three gaps, we can conclude that the
$v_{i,2}$ and $v_{k,4}$ used in these gaps are one true and one
false. Indeed, such a method changes the designing of a gap that enforces
$v_{i,2}$ and $v_{k,4}$ are together to the designing of a gap that enforces
their agent jobs are together. The 'agent index' $i'$ of $i$ is implicitly implied in the
processing time of $\eta_{i,+}$, and is defined through the functions $f$ and $g$,
as is shown in the following.
\vspace{-2.5mm}
\paragraph{Defining functions $f$ and $g$.}
\vspace{-2.5mm}
\label{subsection:modi-sat-1} Given a 3SAT' instance $I_{sat}$ with
$n$ variables, we know that $|C_1|=n/3$ and $|C_2|=n$. We may
further assume that $n$ is sufficiently large (i.e., $n\ge 2^{26}$)
and $\sqrt{n}$ is an integer.

Recall that there are $n$ clauses in $C_2$ and every positive
(negative) literal appears once in them. We partition clauses of
$C_2$ equally into $\sqrt{n}$ groups. Let $S_n=\{1,2,\cdots,n\}$. We
define the function $f: S_n\rightarrow S_{\sqrt{n}}$ such that the
positive literal $z_i$ is in group $f(i)$, and we define the
function $\bar{f}: S_n\rightarrow S_{\sqrt{n}}$ such that the
negative literal $\neg z_i$ is in group $\bar{f}(i)$.

In each group, say, group $i$, there are $\sqrt{n}$ different
positive literals. Let their indices be $i_1<
i_2<\cdots<i_{\sqrt{n}}$, then we define $g: S_n\rightarrow
S_{\sqrt{n}}$ such that $g(i_k)=k$. Similarly the indices of
negative literals could be listed as $\bar{i}_1<
\bar{i}_2<\cdots<\bar{i}_{\sqrt{n}}$ and we define $\bar{g}:
S_n\rightarrow S_{\sqrt{n}}$ such that $\bar{g}(\bar{i}_k)=k$.

Our definition of $g$ and $\bar{g}$ implies the following lemma.
\vspace{-2mm}
\begin{lemma}
\label{le:sympl g and bar g} For any $i,i'\in S_n$ and $i<i'$, if
$f(i)=f(i')$, then $g(i)<g(i')$. Similarly if
$\bar{f}(i)=\bar{f}(i')$, then $\bar{g}(i)<\bar{g}(i')$.
\end{lemma}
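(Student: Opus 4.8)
The plan is to simply unwind the definitions of $f$ and $g$. The content of the statement is that, inside any one of the $\sqrt{n}$ groups into which the clauses of $C_2$ were partitioned, the positive literals are enumerated in strictly increasing order of their indices; equivalently, $g$ is monotone on each fiber of $f$. So the whole proof is a substitution, once this monotonicity is made explicit.

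Concretely, I would fix $i,i'\in S_n$ with $i<i'$ and $f(i)=f(i')=t$ for some $t\in S_{\sqrt{n}}$. By the definition of $f$, the positive literals $z_i$ and $z_{i'}$ both lie in group $t$. Recall that the indices of the positive literals occurring in group $t$ were listed as $i_1<i_2<\cdots<i_{\sqrt{n}}$, and that every positive literal occurs exactly once in $C_2$; hence $i$ and $i'$ each appear exactly once in this list, say $i=i_k$ and $i'=i_l$. Since the list $i_1<\cdots<i_{\sqrt{n}}$ is strictly increasing and $i<i'$, we must have $k<l$, and therefore $g(i)=g(i_k)=k<l=g(i_l)=g(i')$, which is exactly the claim. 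The statement for $\bar f$ and $\bar g$ follows by the verbatim same argument, applied to the negative literals and their enumeration $\bar i_1<\bar i_2<\cdots<\bar i_{\sqrt{n}}$ in place of $i_1<\cdots<i_{\sqrt{n}}$.

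I do not anticipate any genuine obstacle here: the lemma is a bookkeeping consequence of the fact that $g$ and $\bar g$ were \emph{defined} to record the rank of an index within the sorted list of indices appearing in its group. The only point worth spelling out in the write-up is the monotonicity of that enumeration within a fixed group — and that is precisely the defining property of $g$ and $\bar g$ — after which the chain of (in)equalities above is immediate. If anything, the care needed is purely notational: keeping the two fibers $f^{-1}(t)$ and $\bar f^{-1}(t)$ distinct and not conflating the "group index" $t$ with the "rank index" $k$.
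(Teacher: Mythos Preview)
Your proposal is correct and is exactly what the paper intends: the paper itself gives no proof beyond the sentence ``Our definition of $g$ and $\bar{g}$ implies the following lemma,'' and your argument simply spells out that implication by using the monotone enumeration that defines $g$ and $\bar g$ on each fiber of $f$ and $\bar f$.
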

\vspace{-6mm}
\subsection{Construction of the Scheduling Instance}
\vspace{-1mm}
Given $I_{sat}$, we construct an instance of scheduling problem with
$30n$ jobs and $9n$ machines, and prove that $I_{sat}$ is
satisfiable if and only if there exists a feasible solution for the
constructed scheduling instance with makespan no more than
$K=10^5r$, where $r=2^{15}n^{3/2}$. Throughout this section we set
$x=4\sqrt{n}$ and use $s(j)$ to denote the processing time of job
$j$. $\gamma\in\{T,F\}$.

$20n$ jobs are constructed for variables, among them there are $8n$
variable jobs, $4n$ agent jobs and $8n$ truth assignment jobs. $n$
jobs are constructed for clauses of $C_1$. $9n$ huge jobs are
constructed to create gaps.

Variable jobs: $v_{i,1}^{\gamma}$ and $v_{i,2}^{\gamma}$ are
constructed for $z_i$, $v_{i,3}^{\gamma}$ and $v_{i,4}^{\gamma}$ are
for $\neg z_i$.
$$\setlength{\abovedisplayskip}{2pt}
\setlength{\belowdisplayskip}{2pt}
s(v_{i,k}^T)=r+2^{9}(f(i)x^2+i)+2^{8}+k,\quad
k=1,2.$$
$$\setlength{\abovedisplayskip}{2pt}
\setlength{\belowdisplayskip}{2pt} s(v_{i,k}^T)=r+2^{9}(\bar{f}(i)x^2+i)+2^{8}+k,\quad
k=3,4.$$
$$\setlength{\abovedisplayskip}{2pt}
\setlength{\belowdisplayskip}{2pt} s(v_{i,k}^F)=s(v_{i,k}^T)+2r,\quad k=1,2,3,4$$
Agent jobs: $\eta_{i,+}^{\gamma}$ for $z_i$ and $\eta_{i,-}^{\gamma}$ for $\neg z_i$.
$$\setlength{\abovedisplayskip}{2pt}
\setlength{\belowdisplayskip}{2pt} s(\eta_{i,+}^T)=r+2^{9}(f(i)x^2+g(i))+2^{7}+8, $$
$$\setlength{\abovedisplayskip}{2pt}
\setlength{\belowdisplayskip}{2pt} s(\eta_{i,-}^T)=r+2^{9}(\bar{f}(i)x^2+\bar{g}(i)x)+2^{7}+16.$$
$$\setlength{\abovedisplayskip}{2pt}
\setlength{\belowdisplayskip}{2pt} s(\eta_{i,\sigma}^F)=s(\eta_{i,\sigma}^T)+2r, \quad\sigma=+,-$$

Truth assignment jobs: $a_i^{\gamma}$, $b_i^{\gamma}$,
$c_i^{\gamma}$ and $d_i^{\gamma}$.
$$\setlength{\abovedisplayskip}{2pt}
\setlength{\belowdisplayskip}{2pt} s(a_i^F)=11r+(2^7i+8), \quad s(b_i^F)=11r+(2^7i+32),$$
$$\setlength{\abovedisplayskip}{2pt}
\setlength{\belowdisplayskip}{2pt} s(c_i^F)=101r+(2^7i+16), \quad s(d_i^F)=101r+(2^7i+64).$$
$$\setlength{\abovedisplayskip}{2pt}
\setlength{\belowdisplayskip}{2pt} s(k_i^{T})=s(k_i^F)+r,\quad k=a,b,c,d.$$

Clause jobs: 3 clause jobs are constructed for every $c_{j}\in C_1$
where $j\in R$, with one $u_{j}^T$ and two copies of $u_{j}^F$:
$s(u_{j}^T)=10004r+2^{11}j, s(u_{j}^F)=10002r+2^{11}j.$

Dummy jobs: $n+n/3$ jobs of $1000r$, and $n-n/3$
jobs of $1002r$.

Let
$A$, $B$, $C$, $D$ be the set of $a_i^{\gamma}$, $b_{i}^{\gamma}$,
$c_i^{\gamma}$ and $d_i^{\gamma}$ respectively. Sometimes we may
drop the superscript for simplicity, e.g., we use $a_i$ to represent
$a_i^T$ or $a_i^F$.

We construct huge jobs. There are four kinds of huge jobs
corresponding to the four kinds of gaps we mention before.

Two huge jobs (variable-agent jobs) $\theta_{\eta,i,+}$ and
$\theta_{\eta,i,-}$ are constructed for each variable $z_i$:
\vspace{-1.5mm}
\begin{eqnarray*}
\setlength{\abovedisplayskip}{1pt}
\setlength{\belowdisplayskip}{1pt}
s(\theta_{\eta,i,+})&=&10^5r-4r-2^{9}[2f(i)x^2+g(i)+i]-(2^{8}+2^{7}+10)\\
s(\theta_{\eta,i,-})&=&10^5r-4r-2^{9}[2\bar{f}(i)x^2+\bar{g}(i)x+i]-(2^{8}+2^{7}+20)
\end{eqnarray*}
\vspace{-3mm}
One huge job (agent-agent job) $\theta_{i,k,C_2}$ is constructed for
$(z_i\vee \neg z_k)\in C_2$:
$$s(\theta_{i,k,C_2})=10^5r-4r-2^{9}[f(i)x^2+\bar{f}(k)x^2+\bar{g}(k)x+g(i)]+2^{8}+24].$$

\vspace{-1mm}
Notice that $f(i)=\bar{f}(k)$ according to our definition of $f$ and
$\bar{f}$.

Three huge jobs (variable-clause jobs) are constructed for each
$c_{j}\in C_1$ ($j\in R$), one for each literal: for $i=j,j+1,j+2$,
if $z_i\in c_j$, we construct $\theta_{j,i,+,C_1}$, otherwise $\neg
z_i\in c_j$, and we construct $\theta_{j,i,-,C_1}$.
$$\setlength{\abovedisplayskip}{1pt}
\setlength{\belowdisplayskip}{1pt} s(\theta_{j,i,+,C_1})=10^5r-11005r-(2^{9}f(i)x^{2}+2^{11}j+2^{9}i+2^{8}+1),$$
$$\setlength{\abovedisplayskip}{1pt}
\setlength{\belowdisplayskip}{1pt} s(\theta_{j,i,-,C_1})=10^5r-11005r-(2^{9}\bar{f}(i)x^{2}+2^{11}j+2^{9}i+2^{8}+3).$$

One huge job (variable-dummy job) is constructed for each variable.
Notice that each variable appears exactly three times in clauses, if
$z_i$ appears twice while $\neg z_i$ appears once, we construct
$\theta_{i,-}$. Otherwise, we construct $\theta_{i,+}$ instead.
$$\setlength{\abovedisplayskip}{1pt}
\setlength{\belowdisplayskip}{1pt} s(\theta_{i,+})=10^5r-1003r-(2^{9}f(i)x^2+2^{9}i+2^{8}+1),$$
$$\setlength{\abovedisplayskip}{1pt}
\setlength{\belowdisplayskip}{1pt} s(\theta_{i,-})=10^5r-1003r-(2^{9}\bar{f}(i)x^2+2^{9}i+2^{8}+3).$$
\vspace{-1mm}
Thus, for each clause $c_j$ ($j\in R$) and $i=j,j+1,j+2$, either
$\theta_{i,+}$ and $\theta_{j,i,-,C_1}$ exist, or $\theta_{i,-}$ and
$\theta_{j,i,+,C_1}$ exist.

Four huge jobs (variable-assignment jobs) are constructed for each
variable $z_i$, namely $\theta_{i,a,c}$, $\theta_{i,b,d}$,
$\theta_{i,a,d}$ and $\theta_{i,b,c}$:
$$\setlength{\abovedisplayskip}{1pt}
\setlength{\belowdisplayskip}{1pt} s(\theta_{i,a,c})=10^5r-115r-2^{9}(f(i)x^{2}+i)-(2^{8}+2^8i+25),$$
$$\setlength{\abovedisplayskip}{1pt}
\setlength{\belowdisplayskip}{1pt} s(\theta_{i,b,d})=10^5r-115r-2^{9}(f(i)x^{2}+i)-(2^{8}+2^8i+98),$$
$$\setlength{\abovedisplayskip}{1pt}
\setlength{\belowdisplayskip}{1pt} s(\theta_{i,a,d})=10^5r-115r-2^{9}(\bar{f}(i)x^{2}+i)-(2^{8}+2^8i+75),$$
$$\setlength{\abovedisplayskip}{1pt}
\setlength{\belowdisplayskip}{1pt} s(\theta_{i,b,c})=10^5r-115r-2^{9}(\bar{f}(i)x^{2}+i)-(2^{8}+2^8i+52).$$

It is not difficult to verify that the total processing time of all
the jobs is $9n\cdot 10^5r$. Furthermore, if the given 3SAT'
instance $I_{sat}$ is satisfiable, then the constructed scheduling
instance $I_{sche}$ admits a feasible schedule whose makespan is
$10^5r$ (the reader may refer to Appendix~\ref{appendix-subsection:
sat to schedule} for details).

\vspace{-4mm}
\subsection{Scheduling to 3SAT}
\vspace{-1.5mm}
We prove that if there is a schedule whose makespan is no more than
$10^5r$ (which implies that the load of each machine is exactly
$10^5r$), then $I_{sat}$ is satisfiable. To achieve this, we only need
to show that to fill up the gaps (created by huge jobs), the key jobs have to be scheduled
in the way as we mention in Subsection~\ref{sec:overview}.

Recall that
we define the processing time of a job in the form of a polynomial,
which could be partitioned into four terms, the $r$-term,
$x^2$-term, $x$-term and constant term (the summation of all terms
without $r$ or $x$). For simplicity, the sum of the $x$-term and constant term is
called small-$x^2$-term, and the sum of $x^2$-term, $x$-term and
constant term is called small-$r$-term. Since $x=4\sqrt{n}$ and
$r=2^{15}n^{3/2}$, there are gaps between terms.

\vspace{-3mm}
\begin{lemma}
\label{le:sympl terms} The small-$r$-term and small-$x^2$-term of a
huge job are negative with their absolute values bounded by $1/2r$
and $2^9\cdot 3/4x^2$ respectively. The small-$r$-term of any other
job is positive and bounded by $1/4r$. The small-$x^2$-term of a
variable or agent job is positive and bounded by $2^9\cdot 3/8x^2$.
\end{lemma}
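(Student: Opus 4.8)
The plan is to prove the lemma by a direct case check over the finitely many job families in the construction, reading the split of each processing time into its $r$-term, $x^2$-term, $x$-term and constant term straight off the explicit formula; recall that the small-$r$-term is everything except the $r$-term, and the small-$x^2$-term drops the $x^2$-term as well. The only inputs are the index ranges $f(i),\bar f(i),g(i),\bar g(i)\in\{1,\dots,\sqrt n\}$ and $i,j\le n$, the identities $x=4\sqrt n$ (so $x^2=16n$, $\sqrt n=x/4$, $n=x^2/16$) and $r=2^{15}n^{3/2}=2^{9}x^{3}$, and the standing bound $n\ge 2^{26}$, i.e. $x\ge 2^{14}$. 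The one scale separation that drives everything is $2^{9}f(i)x^{2}\le 2^{9}\cdot\tfrac{x}{4}\cdot x^{2}=2^{7}x^{3}=\tfrac14 r$ (likewise with $\bar f$): a term of this syntactic shape already lies below $\tfrac14 r$, whereas anything genuinely of order $x^2$ is of order $r/x$ and is absorbed once $x\ge 2^{14}$.

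First I would handle the huge jobs. Each of their formulas has the shape $10^{5}r-cr-N$, where $c\in\{4,115,1003,11005\}$ (all $<10^{5}$) is the subtracted constant multiple of $r$ and $N$ is a nonnegative integer combination of $f(i)x^{2}$, $\bar f(i)x^{2}$, $\bar g(k)x$, $g(i)$, $i$, $j$ (group/order indices scaled by $x^2$, $x$, or $1$) together with a strictly positive additive constant. Hence the $r$-term is the positive multiple $(10^{5}-c)r$ and the small-$r$-term is exactly $-N<0$, strict because the additive constant is at least $1$. To bound its magnitude $N$ I split off the leading piece — $2^{10}f(i)x^{2}$ for the variable–agent jobs $\theta_{\eta,\cdot,+}$ (the factor $2$ coming from inside the bracket), and $2^{9}f(i)x^{2}$ or $2^{9}\bar f(i)x^{2}$ for the others — bounded by $\tfrac12 r$, resp. $\tfrac14 r$, by the identity above; the remaining summands ($2^{9}\bar g(k)x\le 2^{7}x^2$, $2^{11}j\le 2^{7}x^2$, $2^{9}i\le 2^{5}x^2$, $2^{9}g(i)\le 2^{7}x$, and $O(1)$) are $O(x^2)$ and are absorbed using $x\ge 2^{14}$, giving $N\le\tfrac12 r$. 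For the small-$x^2$-term of a huge job only the summands carrying neither $r$ nor $x^2$ survive — $\bar g(k)x$, $j$, $i$, $g(i)$ scaled by small powers of $2$, plus the constant — whose absolute value is at most $(2^{7}+2^{7}+2^{5})x^{2}+2^{7}x+O(1)<2^{9}\cdot\tfrac34 x^{2}$, and negativity is again immediate termwise.

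Next I would dispatch the non-huge jobs. A variable, agent, truth-assignment or clause job has processing time equal to a positive multiple of $r$ plus a strictly positive lower-order polynomial with no $r$; this polynomial is precisely the small-$r$-term, visibly positive, and it is at most $\tfrac14 r$ since its largest summand is $2^{9}f(i)x^{2}$ or $2^{9}\bar f(i)x^{2}$ $\le\tfrac14 r$ (variable/agent jobs), $2^{7}i\le 2^{7}n$ (truth-assignment jobs), or $2^{11}j\le 2^{11}n$ (clause jobs), all well below $2^{13}n^{3/2}=\tfrac14 r$ by $n\ge 2^{26}$, and passing to an $F$-variant only adds $2r$ to the $r$-term (dummy jobs are pure multiples of $r$ and need no argument). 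For the small-$x^2$-term of a variable or agent job I keep the summands with neither $r$ nor $x^2$ — $2^{9}i+2^{8}+k$ for $v_{i,k}$, $2^{9}g(i)+2^{7}+8$ for $\eta_{i,+}$, $2^{9}\bar g(i)x+2^{7}+16$ for $\eta_{i,-}$ — each positive, and using $i\le n=x^2/16$, $g(i),\bar g(i)\le\sqrt n=x/4$ one bounds each below $2^{9}\cdot\tfrac38 x^{2}$. I expect no conceptual difficulty; the work is pure bookkeeping, and the one point that needs care — the natural place for a slip — is that the decomposition into $x^2$-/$x$-/constant term is purely syntactic, so a summand such as $2^{9}i$, though numerically of order $x^2$ because $i$ can be as large as $n=x^2/16$, is counted in the constant term, so one must verify that these "$x^2$-sized constants" still leave the claimed fractions $\tfrac14 r$ and $2^{9}\cdot\tfrac38 x^{2}$ valid — which is exactly where $n\ge 2^{26}$ enters — and that every huge job really carries a nonzero additive constant, so the terms are strictly, not merely weakly, negative.
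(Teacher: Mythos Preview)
Your approach is correct and essentially the same as the paper's: a direct case-by-case verification reading the $r$-, $x^2$-, $x$-, and constant terms off each processing-time formula and bounding them using $f,\bar f,g,\bar g\le\sqrt n=x/4$, $i,j\le n=x^2/16$, and the scale identity $2^9\sqrt n\,x^2=\tfrac14 r$. The paper does not prove this simplified lemma in the main text; the only written proof is of the general-$\delta$ analogue (Lemma~\ref{le:small x^j} in the appendix), and that proof is exactly the same bookkeeping you outline, so there is nothing to compare beyond presentation.
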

\vspace{-3mm}

Notice that the input of the scheduling instance is a set of integers (processing times),
the above lemma allows us to determine the symbol of a job through its processing time.
Furthermore, by considering the $r$-terms and the residuals of dividing
each job by $2^7$, the following observation is true through a counting argument (the reader may refer to
Lemma~\ref{le:sche-3sat'-step 1} in the
Appendix~\ref{appendix:scheduling}).

\noindent\textbf{Observation.}
\vskip 1mm\noindent\textbf{[1.]}{A variable-agent gap is filled up with a variable job and an agent job.}
\vskip 1mm\noindent\textbf{[2.]}{An agent-agent gap is filled up with two agent jobs.}
\vskip 1mm\noindent\textbf{[3.]}{A variable-clause gap is filled up with a clause job, a variable job and a dummy job.}
\vskip 1mm\noindent\textbf{[4.]}{A variable-dummy gap is filled up with a variable job and a dummy job.}
\vskip 1mm\noindent\textbf{[5.]}{A variable-assignment gap is filled up with a variable job and two truth-assignment jobs,
one in $A\cup B$, the other in $C\cup D$.}

Combining the above observation with Lemmas~\ref{le:sympl terms}, we
get the following lemma.

\vspace{-2.5mm}
\begin{lemma}
\label{le:sympl r and delta equal} For jobs on each machine, their
$r$-terms add up to $10^5r$, $x^{2}$-terms and small-$x^2$-terms add
up to $0$.
\end{lemma}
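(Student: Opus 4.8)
The plan is a \emph{digit-separation argument}: by construction the $r$-term of every job is an integer multiple of $r$, and on any single machine the rest of each job (its small-$r$-term) is far too small to interact with those multiples, so the $r$-scale ``digit'' must balance on its own.

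First I would collect the structural facts already at hand. Since the makespan is at most $10^5r$ while the total processing time is $9n\cdot 10^5r$ over $9n$ machines, every machine is loaded to exactly $K=10^5r$. Every huge job has size larger than $K/2$: its $r$-term is at least $(10^5-11005)r=88995r$ and, by Lemma~\ref{le:sympl terms}, its small-$r$-term lies in $[-r/2,0)$. Hence no machine can hold two huge jobs, and as there are $9n$ huge jobs and $9n$ machines, each machine holds exactly one. Inspecting the five cases of the Observation, a machine carries, besides its huge job, at most three further jobs (two on a variable-agent, agent-agent or variable-dummy machine, three on a variable-clause or variable-assignment machine).

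Now fix a machine and write $s(j)=\rho_j+\sigma_j$ for each job $j$ on it, where $\rho_j$ is its $r$-term and $\sigma_j$ its small-$r$-term. Each $\rho_j$ is a positive integer multiple of $r$, so the $r$-terms on the machine sum to $\mu r$ for some positive integer $\mu$. By Lemma~\ref{le:sympl terms} the huge job contributes $\sigma_j\in[-r/2,0)$, and each of the at most three remaining jobs contributes $\sigma_j\in[0,r/4]$ (it is $0$ for a dummy job); therefore $\sum_j\sigma_j\in[-r/2,\,3r/4]$, an interval whose only integer multiple of $r$ is $0$. From $\sum_j(\rho_j+\sigma_j)=K=10^5r$ we get $(10^5-\mu)r=\sum_j\sigma_j$, which must then be $0$; hence $\mu=10^5$, i.e. the $r$-terms on the machine add up to $10^5r$, and consequently $\sum_j\sigma_j=0$. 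Since by definition $\sigma_j$ equals the $x^2$-term of $j$ plus its small-$x^2$-term, summing over the jobs on the machine shows the $x^2$-terms and the small-$x^2$-terms together add up to $0$, which is the assertion.

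The one delicate point is that the two ingredients are used right at the edge of their strength: with three non-huge jobs the positive small-$r$-terms could total $3r/4$, and the argument survives only because $3r/4<r$, so it matters both that the Observation caps the number of non-huge jobs per machine at three and that Lemma~\ref{le:sympl terms} caps each such small-$r$-term at $r/4$ — which is, in effect, why $r$ was taken of order $n^{3/2}$. Everything else is bookkeeping, and no further feature of the construction is needed.
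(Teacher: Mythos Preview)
Your $r$-scale argument is correct and is precisely the paper's intended route: one huge job per machine, at most three further jobs by the Observation, and the small-$r$-term bounds of Lemma~\ref{le:sympl terms} then pin the sum of small-$r$-terms to the unique multiple of $r$ in $(-r/2,3r/4)$, namely $0$.

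The gap is in your reading of the second clause. You conclude that ``the $x^2$-terms and the small-$x^2$-terms \emph{together} add up to $0$'' and call that the assertion --- but that is nothing more than ``the small-$r$-terms sum to $0$,'' which you already proved. What the lemma is meant to say (and what is actually used: Lemma~\ref{le:sympl singular well-cancel} invokes it to get that singular gaps are well-cancelled ``directly,'' and the general-$\delta$ analogue, Lemma~\ref{le:r and delta equal}, explicitly states that each $x^k$-term \emph{separately} cancels) is that the $x^2$-terms on their own sum to $0$, and hence so do the small-$x^2$-terms. That requires a second digit-separation, one scale down. Every $x^2$-term is an integer multiple of $2^9x^2$; running through the five cases of the Observation with the small-$x^2$-term bounds of Lemma~\ref{le:sympl terms} (huge job: absolute value below $2^9\cdot\tfrac34 x^2$; each variable/agent job: below $2^9\cdot\tfrac38 x^2$; clause, dummy and truth-assignment jobs have zero $x^2$-term and small-$x^2$-term $O(n)\ll 2^9x^2$) shows the small-$x^2$-terms on any machine total strictly between $-2^9x^2$ and $2^9x^2$, forcing the $x^2$-terms to cancel on their own. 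This is exactly why Lemma~\ref{le:sympl terms} records the small-$x^2$-term bounds alongside the small-$r$-term ones; your proof never uses them.
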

\vspace{-2.5mm}

Consider the $x^2$-terms of gaps. An agent-agent gap or
variable-agent gap is called a regular gap, since their $x^2$-terms
are $2^9\cdot 2\zeta x^2$ where $1\le \zeta\le \sqrt{n}$. Other gaps
are called singular gaps with the $x^2$-terms being $2^9\zeta x^2$.
A singular gap is called well-canceled, if it is filled up by other
jobs whose $x^2$-terms are $2^9\zeta x^2$ and $0$. A regular gap is
called well-canceled, if it is filled up by two jobs whose
$x^2$-terms are both $2^9\zeta x^2$.

\vspace{-2.5mm}
\begin{lemma}
\label{le:sympl singular well-cancel} Every singular gap is
well-canceled, and every regular gap is well-canceled.
\end{lemma}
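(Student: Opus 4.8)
The plan is to dispatch the singular gaps directly and then bootstrap to the regular gaps by a global counting argument. Throughout, for a job $j$ let $c_j$ be the coefficient of $x^2$ in $s(j)$ divided by $2^9$; from the processing times one reads off that $c_j=0$ for every clause, dummy and truth-assignment job, that $c_j\in\{1,\ldots,\sqrt n\}$ for every variable and agent job (it equals $f(\cdot)$ or $\bar f(\cdot)$), and that the $x^2$-term of the huge job of a singular (resp.\ regular) gap is $-2^9\zeta x^2$ (resp.\ $-2\cdot 2^9\zeta x^2$), where $\zeta\in\{1,\ldots,\sqrt n\}$ is the index in its definition. Recall from Lemma~\ref{le:sympl r and delta equal} that the $x^2$-terms of the jobs on any single machine sum to $0$, so on such a machine the non-huge $c_j$ sum to $\zeta$ (resp.\ $2\zeta$).

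I would first settle the singular gaps. By the Observation, the machine of a singular gap $G$ carries, besides its huge job, exactly one variable job and otherwise only clause/dummy/truth-assignment jobs, all of which have $c_j=0$. Hence the single variable job has $c_j=\zeta$, i.e.\ $G$ is filled by jobs whose $x^2$-terms are $2^9\zeta x^2$ and $0$: $G$ is well-canceled.

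For a regular gap the analogous step only yields that its two filling jobs (a variable job and an agent job, or two agent jobs, by the Observation) have $c$-values $\zeta_{G,1},\zeta_{G,2}\in\{1,\ldots,\sqrt n\}$ with $\zeta_{G,1}+\zeta_{G,2}=2\zeta_G$, which alone does not force $\zeta_{G,1}=\zeta_{G,2}=\zeta_G$. Here I would argue globally over all machines. Every agent job lies in a regular gap, every one of the $6n$ singular gaps receives exactly one variable job, and by the previous step that variable job has $c_j=\zeta_G$ for its gap; consequently $\sum_{j\text{ in regular gaps}}c_j^2$ equals the schedule-independent quantity $\sum_{\text{agent }j}c_j^2+\sum_{\text{variable }j}c_j^2-\sum_{\text{singular }G}\zeta_G^2$. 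On the other hand, convexity gives $\zeta_{G,1}^2+\zeta_{G,2}^2\ge 2\zeta_G^2$ for each regular gap, with equality iff $\zeta_{G,1}=\zeta_{G,2}$, so summing over all regular gaps yields $\sum_{j\text{ in regular gaps}}c_j^2\ge 2\sum_{\text{regular }G}\zeta_G^2$. Finally I would check that the two sides coincide: using that the partitions defining $f$ and $\bar f$ are balanced — so $\sum_i f(i)^2=\sum_i\bar f(i)^2=\sqrt n\sum_{\zeta=1}^{\sqrt n}\zeta^2$, plus the analogous bookkeeping for the $C_1$- and $C_2$-appearances of the literals — a short calculation shows both sides equal $6\sqrt n\sum_{\zeta=1}^{\sqrt n}\zeta^2$ (equivalently, the feasible schedule produced in the satisfiable case already makes every regular gap well-canceled, hence meets the lower bound, and the left-hand side is schedule-independent). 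Equality throughout then forces $\zeta_{G,1}=\zeta_{G,2}=\zeta_G$ for every regular gap, which is the assertion.

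The regular-gap case is the main obstacle: Lemma~\ref{le:sympl r and delta equal} leaves one degree of freedom per gap, and eliminating it seems to need genuinely global input — the rigidity of the singular gaps, which fixes how much squared $x^2$-mass is spent off the regular gaps, together with the balancedness of $f$ and $\bar f$, which makes the squared $x^2$-mass available to the regular gaps exactly equal to what a well-canceled filling uses — so that the convexity inequality has no slack. Carrying out this bookkeeping (showing precisely the right jobs enter the regular gaps, and that the two sums of squares agree) is the routine but delicate part; everything else is immediate from the Observation and Lemma~\ref{le:sympl r and delta equal}.
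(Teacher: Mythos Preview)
Your argument is correct, and the singular-gap step coincides with the paper's. For the regular gaps, however, you take a genuinely different route.

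The paper argues by induction on $\zeta$: for $\zeta=1$ both filling jobs have $c_j\ge 1$ and their sum is $2$, so both equal $1$; an exact count then shows that the $12\sqrt n$ variable/agent jobs with $c_j=1$ are exhausted by the $6\sqrt n$ singular and $3\sqrt n$ regular gaps at level~$1$, so at level $\zeta=2$ both $c$-values are $\ge 2$, and the argument repeats. Your approach instead packages the same counting information into a single global equality of sums of squares and lets strict convexity of $t\mapsto t^2$ do the work. Both rely on the balancedness of $f,\bar f$ (so that exactly $12\sqrt n$ jobs and $6\sqrt n+2\cdot 3\sqrt n$ ``slots'' occur at each level), but the paper's peeling argument is more elementary, pinpoints at each step which jobs go where, and extends verbatim to the general-$\delta$ construction with several $x^k$-levels (one inducts level by level, term by term). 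Your variational argument is slicker for the single $x^2$-level treated here, but pushing it to the multi-level setting would need a more delicate simultaneous convexity bound.

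One caveat: your parenthetical shortcut --- invoking ``the feasible schedule produced in the satisfiable case'' to certify that the two sums of squares agree --- is not quite self-contained, since at this point in the proof you do not know $I_{sat}$ is satisfiable. It can be rescued by noting that both sums depend only on $n$ (through the balanced sizes of the $f$- and $\bar f$-preimages) and not on the particular instance, so one may verify the identity on any satisfiable instance of the same size; but the direct computation you outline is cleaner and avoids this detour.
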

\begin{prf}
We briefly argue why it is the case. The first part follows directly
from Lemma~\ref{le:sympl r and delta equal}. We show the second
part.

Consider the regular gap with the term $2^9\cdot 2x^2$. Since it is
filled up by two variable or agent jobs (due to Observation),
whose $x^2$-term is at least $2^9x^2$, thus it is obviously
well-canceled.

According to the construction of $f$ and $\bar{f}$, there are
$\sqrt{n}$ indices such that $f(i)=1$ and $\sqrt{n}$ indices such
that $\bar{f}(i)=1$, thus in all there are $12\sqrt{n}$ variable and
agent jobs with the term $2^9x^2$. There are $\sqrt{n}$
variable-clause gaps, $\sqrt{n}$ variable-dummy gaps and $4\sqrt{n}$
variable-assignment gaps with the term $2^9x^2$, meanwhile there are
$2\sqrt{n}$ variable-agent gaps and $\sqrt{n}$ agent-agent gaps with
the term $2^9\cdot 2x^2$. All of these gaps are well-canceled,
implying that all the variable and agent jobs with $2^9x^2$ are used
to fill up these gaps. Thus to fill up a regular gap with $2^9\cdot
4x^2$, we have to use variable or agent jobs with $2^9\cdot 2x^2$,
implying that this regular gap is also well-canceled. Iteratively
applying the above arguments, every regular gap is well-canceled.
\end{prf}

A huge job (gap) is called satisfied, if the indices of other jobs
on the same machine with it coincide with its index. For example,
the variable-clause job $\theta_{j,i,-,C_1}$ is satisfied if it is
on the same machine with the variable job $v_{i,k}$ and clause job
$u_j$ where $k\in\{1,2,3,4\}$, and $\theta_{i,a,c}$ is satisfied if
it is with $v_{i,k}$, $a_i$ and $c_i$.

\vspace{-2.5mm}
\begin{lemma}
\label{le:sympl satisfy} Every huge job (gap) is satisfied.
\end{lemma}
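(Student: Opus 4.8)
The plan is to prove the lemma one family of huge jobs at a time: the variable-dummy jobs $\theta_{i,\pm}$, the agent-agent jobs $\theta_{i,k,C_2}$, the variable-clause jobs $\theta_{j,i,\pm,C_1}$, the variable-agent jobs $\theta_{\eta,i,\pm}$, and the four variable-assignment jobs $\theta_{i,a,c},\theta_{i,b,d},\theta_{i,a,d},\theta_{i,b,c}$. In every case we already know, from the Observation, which \emph{types} of jobs share the machine of a gap; from Lemma~\ref{le:sympl r and delta equal}, that on that machine the $r$-terms add up to $10^5r$ while the $x^2$- and small-$x^2$-terms add up to $0$; and from Lemma~\ref{le:sympl singular well-cancel}, that each gap is well-canceled at the $x^2$-level, which already pins the $f$- or $\bar f$-value of every job on a machine to that of the gap, so all jobs on one machine lie in the same group. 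A further step, local to each family, is to read off the constant terms: the offsets $1,2,3,4$ on $v_{i,1},\ldots,v_{i,4}$, the offsets $8,16$ on $\eta_{i,+},\eta_{i,-}$, and the offsets $8,32,16,64$ on $a_i,b_i,c_i,d_i$ are chosen so that the constant part of each gap is realized by a \emph{unique} combination of job variants --- for example $\theta_{j,i,+,C_1}$ (constant part $2^8+1$) can only take a $v_{\cdot,1}$-job, and $\theta_{i,a,c}$ (constant part $2^8+25$) can only take a $v_{\cdot,1}$-job together with an $a$-job and a $c$-job.

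Two families then close immediately by reading digits. For a variable-dummy gap $\theta_{i',+}$, filled by $v_{j_0,1}$ and a dummy, the small-$r$-term identity on its machine is $2^9f(j_0)x^2+2^9j_0+2^8+1=2^9f(i')x^2+2^9i'+2^8+1$; since $f(j_0)x^2\ge x^2=16n>n\ge j_0$, the $2^9f(j_0)x^2$ part and the $2^9j_0$ part occupy disjoint ranges, so $f(j_0)=f(i')$ and $j_0=i'$. For an agent-agent gap $\theta_{i',k',C_2}$, which is regular with $x^2$-term $2^9\cdot 2f(i')x^2$ (using $f(i')=\bar f(k')$), the constant-term step forces it to be filled by one $\eta_{h,+}$ and one $\eta_{m,-}$, and comparing small-$x^2$-terms gives $2^9\bar g(m)x+2^9g(h)=2^9\bar g(k')x+2^9g(i')$; since $2^9g(h)$ and $2^9g(i')$ are both $<2^9x$, the $x$-coefficient separates from the remainder, so $\bar g(m)=\bar g(k')$ and $g(h)=g(i')$, whence $h=i'$ and $m=k'$ because $g$ and $\bar g$ are bijections on each group and both jobs lie in the group of the gap. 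In each of these cases the gap is satisfied.

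The remaining three families need the staircase argument of Subsection~\ref{sec:overview}. A clause job has an $r$-term of about $10^4r$ and hence fits only a variable-clause gap, so the $n$ clause jobs and the $n$ variable-clause gaps form a perfect matching; for a gap $\theta_{j,i,+,C_1}$ the $2^{11}$-part of the small-$r$-term identity gives $4j+i=j_0+4j'$ with $u_{j'}$ the clause job and $v_{j_0,1}$ the variable job, so $j_0=4(j-j')+i\ge 1$ already forces $j'\le j$, and an induction over $R$ (the gaps of $c_1$ can only take $u_1$-jobs, then those of $c_4$ only $u_4$-jobs, and so on) yields $j'=j$ and $j_0=i$. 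For the variable-agent gaps, once the agent-agent family is settled exactly one copy of each $\eta_{i,+}$ is left over and must go to some $\theta_{\eta,\cdot,+}$; within a group $\zeta$ this (together with the matching count for the $v_{\cdot,2}$-jobs) distributes the $\sqrt n$ available jobs over the $\sqrt n$ gaps $\theta_{\eta,i,+}$ with $f(i)=\zeta$ through permutations $\pi,\sigma$, and the small-$x^2$-term identity reads $i'_{\pi(l)}-i'_l=l-\sigma(l)$ for the ordered group indices $i'_1<\cdots<i'_{\sqrt n}$ (with $g(i'_l)=l$ by Lemma~\ref{le:sympl g and bar g}); minimality of $i'_1$ forces $\sigma(1)=1$ and $\pi(1)=1$, and iterating gives $\pi=\sigma=\mathrm{id}$. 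The four variable-assignment families are treated the same way with triples: for $\theta_{i',a,c}$ the small-$x^2$-term identity is $4j_0+h+m=6i'$ in the indices of the variable-, $a$- and $c$-job, and once a counting argument confines $j_0,h,m$ to a bounded window around $i'$ the same monotone staircase forces $j_0=h=m=i'$.

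The step I expect to be the real work is the global, cross-family counting underlying the last paragraph: one must show, pool by pool, that every variable, agent, truth-assignment and dummy job is consumed by exactly the family of gaps for which it was designed, and that the indices landing in a group-$\zeta$ gap stay within a $\zeta$-bounded window. This is precisely where the choice of coefficients pays off --- the $2r$-gaps between $T$- and $F$-copies keep the $r$-terms of the gaps and their fillers locked together, the $2^8$ and $2^7$ separators isolate the digits, and the total load being exactly $9n\cdot 10^5r$ leaves no pool with any slack, so the intended distribution is forced. Granting this bookkeeping, each individual staircase induction is routine, and assembling the families shows that every huge job is satisfied.
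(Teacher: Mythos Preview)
Your treatment of the variable-dummy and agent-agent families is correct and matches the paper. The family-by-family decomposition for the remaining three, however, does not close. In the variable-agent case you assert that the map $\pi$ (index of the $v_{\cdot,2}$-job on each gap) is a permutation of the group; it need not be, because each index supplies two $v_{\cdot,2}$-jobs and the second copy is destined for a variable-assignment gap $\theta_{\cdot,b,d}$ that you have not yet settled, so nothing prevents both copies from landing on variable-agent gaps. With group indices $i'_1{=}1,\ i'_2{=}2,\ i'_3{=}4,\ i'_4{=}5$ the system $l+i'_l=\sigma(l)+i'_{\pi(l)}$ (with $\sigma$ a permutation but $\pi$ only a map taking each value at most twice) admits the non-identity solution $\sigma=(1,3,2,4)$, $\pi=(1,1,4,4)$, so your ``iterate from the minimum'' step stalls at $l=2$. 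The variable-clause argument has the same defect: knowing only $j'\ge j$ from clause-job consumption, the equation $4j+i=4j'+j_0$ still allows $j'=j+3$, $j_0=i-12$ once $i\ge 13$ and $f(i{-}12)=f(i)$, because nothing bounds the variable index $j_0$ from below. The standalone claim that ``$j_0\ge 1$ already forces $j'\le j$'' is simply false for large $j$.

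The paper resolves this by running a \emph{single} induction on the variable index $i_0$ across all five families simultaneously (Lemma~\ref{le:each satisfy} in the appendix): the inductive hypothesis is that every machine of index-level $<i_0$ is satisfied \emph{and} every variable or agent job of index $<i_0$ already sits on such a machine. This supplies the uniform lower bound $i',i'',j_0\ge i_0$ in every family at once, after which each family's equation (for instance $g(i'')+i'=g(i_0)+i_0$ with $i',i''\ge i_0$, combined with Lemma~\ref{le:sympl g and bar g}) forces equality in one step. Your final paragraph correctly senses that the cross-family bookkeeping is ``the real work'', but frames it as a pool-by-pool side calculation to be done after the staircases; in fact it must be woven into the induction itself, because without it the individual staircases do not start.
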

\begin{prf}
We give the sketch of proof. It is easy to see that every
variable-dummy job is satisfied. According to the definition of $f$
and $g$ ($f'$ and $g'$), an index $i$ is determined uniquely by the
pair $(f(i),g(i))$ (or $(\bar{f}(i),\bar{g}(i))$). Combining this
fact with Lemma~\ref{le:sympl singular well-cancel}, it is not
difficult to verify that every agent-agent job $\theta_{i,k,C_2}$ is
scheduled with $\eta_{i,\sigma}$ and $\eta_{k,\sigma'}$ where
$\sigma,\sigma'\in\{+,-\}$, and is thus satisfied.

Consider the variable-agent job $\theta_{\eta,1,+}$. According to
the observation and Lemma~\ref{le:sympl singular
well-cancel}, the gap of $4r+2^9(2f(1)x^2+g(1)+1)+2^8+2^7+10$ should
be filled up by a variable job $v_{i',k}$ and an agent job
$\eta_{i'',\sigma}$ where $k\in\{1,2,3,4\}$ and $\sigma\in\{+,-\}$,
such that $f(i')=f(i'')=1$. Simple calculations show that
$g(i'')+i'=g(1)+1$. Since $i,i'\ge 1$, Lemma~\ref{le:sympl g and bar
g} implies that $i'=i''=1$, and $\theta_{\eta,1,+}$ is thus
satisfied. Similarly we can prove that $\theta_{\eta,1,-}$ is
satisfied.

Using similar arguments, it is not difficult to verify that the
three variable-clause job $\theta_{1,i,\sigma_i,C_1}$
($\sigma_i\in\{+,-\}$ for $i=1,2,3$) and the four
variable-assignment jobs $\theta_{1,a,c}$, $\theta_{1,b,d}$,
$\theta_{1,a,d}$, $\theta_{1,b,c}$ are satisfied. We call $v_{i,k}$
($k\in\{1,2,3,4\}$) and $\eta_{i,\sigma}$ ($\sigma\in\{+,-\}$) as
jobs of index-level $i$, then all the jobs of index-level 1 are used
to fill up the the previous mentioned gaps so that when we consider
$\theta_{\eta,2,+}$, it should be scheduled together with $v_{i',k}$
and $\eta_{i'',\sigma}$ with $i',i''\ge 2$, and we can carry on the
previous arguments.
\end{prf}
\vspace{-2.5mm}
The reader may refer to Lemma~\ref{le:each satisfy} of
Appendix~\ref{appendix:scheduling} for details of the above proof.
With the above lemma, it is not difficult to further verify (due to
the residuals of each job divided by $2^7$) that variable jobs are scheduled
according to the following table. (Recall that for every $j\in R$ and
$i=j,j+1,j+2$, either $\theta_{j,i,+,C_1}$ and $\theta_{i,-}$ exist,
or $\theta_{j,i,-,C_1}$ and $\theta_{i,+}$ exist.)

\vspace{-2mm}
\begin{table}[!hbp]
\begin{center}
\begin{tabular}{|c|c||c|c||c|c||c|c||c|c||}
\hline $\theta_{i,a,c}$ & $v_{i,1}$ & $\theta_{i,a,d}$ & $v_{i,3}$ & $\theta_{j,i,+,C_1}$ & $v_{i,1}$ & $\theta_{i,+}$ & $v_{i,1}$ & $\theta_{\eta,i,+}$ & $v_{i,2}$ \\
\hline $\theta_{i,b,d}$ & $v_{i,2}$ & $\theta_{i,b,c}$ & $v_{i,4}$ & $\theta_{j,i,-,C_1}$ & $v_{i,3}$ & $\theta_{i,-}$ & $v_{i,3}$ & $\theta_{\eta,i,-}$ & $v_{i,4}$ \\
 \hline
\end{tabular}
\label{table:sympl-jobs}
\end{center}
\end{table}
\vspace{-6mm}

The previous discussion determines the indices of jobs on each
machine, and we can further determine their superscripts by considering the $r$-terms
of jobs.
\vspace{-2.5mm}
\begin{itemize}
\setlength\itemsep{-1.5mm}
\item The two jobs to fill up an agent-agent or variable-agent gap are one true and one false.
\item The three jobs to fill up a variable-assignment gap are either (T,T,T) or (F,F,F).
\item The clause job and variable job to fill up a variable-clause gap are
(T,T), (F,F) or (F,T).
\end{itemize}
\vspace{-2.5mm}

Now we can conclude that, to fill up all the gaps, the jobs scheduled satisfy the
4 conditions in Subsection~\ref{sec:overview}, and thus $I_{sat}$ is satisfiable.

\noindent\textbf{Remark. }
The processing time of an agent job should be defined in a proper
way so that we can determine from the gaps that a variable job is
scheduled with its corresponding agent job, and two specific agent
jobs are scheduled together, and this requires a processing time of
$O(n^{3/2})$. To reduce it to $O(n^{1+\delta})$ for $\delta>0$, we
create $1/\delta-1$ pairs of agent jobs (from layer-1 to
layer-$(1/\delta-1)$) for a variable. A variable job is scheduled
with its layer-$(1/\delta-1)$ agent job, its layer-$(1/\delta-1)$
agent job is with its rank-$(1/\delta-2)$ agent job, $\cdots$, its
layer-2 agent job is with its layer-1 agent job, and two specific
layer-1 agent jobs are scheduled together. The reader is referred to
Appendix~\ref{appendix:scheduling} for details.

\vspace{-5mm}
\section{Scheduling on $m$ Machines}
\vspace{-3mm}
\begin{theorem}
\label{th:sche-machine}Assuming ETH, there is no
$(1/\epsilon)^{o(\frac{m}{\log^2 m})}|I_{sche}|^{O(1)}$ time FPTAS
for $Pm||C_{max}$.
\end{theorem}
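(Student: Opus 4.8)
The plan is to build a reduction from 3SAT (or the 3SAT$'$ variant of Lemma~\ref{le:3sat'}) with $N$ variables and $O(N)$ clauses to an instance of $Pm||C_{max}$ on $m$ machines with the makespan bounded by $K=2^{O(N/m\cdot\log^{O(1)}m)}$, in such a way that $I_{sat}$ is satisfiable iff the constructed instance has a schedule of makespan at most $K$. The key conceptual point, announced in the introduction, is that with only $m$ machines we can no longer afford one "key job'' per variable; instead we must pack a whole block of $\Theta(N/m)$ variables into the processing time of a single job by encoding the partial truth assignment of that block as a number written in a suitable base. So the first step is to group the $N$ variables into $m$ blocks of size $t=\Theta(N/m)$, and to represent the $2^t$ possible truth assignments of a block as digits. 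To keep the numbers from blowing up to $2^{\Theta(t)}\cdot 2^{\Theta(t)}$ when we have to "match'' a block against the clauses touching it — the same index-separation obstacle handled in Section~2 — we iterate the agent-job trick of the Remark: with $\log^{O(1)}m$ layers of agent jobs the matching indices can be forced to lie within a window of size $2^{O(t/\log^{O(1)}m)}\cdot(\text{poly})$, so after $O(\log\log m)$-many halvings the largest processing time is $2^{O(t\cdot\log^{O(1)}m/\,?)}$; combined with the base-$2$ encoding of a block this gives $K=2^{O((N/m)\log^{O(1)}m)}$. One then re-uses verbatim the gadget menagerie of Section~2 (variable-assignment gaps, variable–clause gaps, the $C_2$-gaps via agents, variable–dummy gaps), but now "per block'' rather than "per variable'', with huge jobs carving out the staircase of gaps and the same $r$-term / $x^2$-term / small-term bookkeeping (Lemmas~\ref{le:sympl terms}--\ref{le:sympl satisfy}) guaranteeing that each gap is filled exactly as intended.

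Concretely I would carry out the proof in the following order. (i) State the grouping of variables into $m$ blocks of size $t=\lceil N/m\rceil$ and fix the base $\beta=2$ encoding of a block's assignment; introduce $r$ and $x$ as in Section~2 but scaled so that $r=2^{\Theta(t)}\cdot(\text{poly in }\log m)$, keeping clean gaps between the digit-ranges. (ii) Describe the job families: block-variable jobs (two "true/false-flavored'' copies per clause-occurrence of the block, as before), the layered agent jobs implementing the index compression, the truth-assignment jobs forcing consistency of a block's digits across the four assignment gaps, the clause jobs, dummy jobs, and the five families of huge jobs creating the gaps. (iii) Prove the "forward'' direction: a satisfying assignment yields a schedule in which every machine is loaded to exactly $K$ — a routine but tedious verification that the chosen digits sum correctly. (iv) Prove the "backward'' direction by exactly mimicking Subsection~"Scheduling to 3SAT'': a counting/residue argument shows each gap type is filled with the advertised job types, then the analogue of Lemma~\ref{le:sympl singular well-cancel} forces the $x^2$-levels to cancel layer by layer, then the analogue of Lemma~\ref{le:sympl satisfy} forces index agreement, and finally the $r$-terms pin down the true/false flavors, from which a consistent satisfying assignment is read off. (v) Conclude: a $(1/\epsilon)^{o(m/\log^2 m)}|I|^{O(1)}$ FPTAS, run with $\epsilon=1/(K+1)$, would decide satisfiability of $I_{sat}$ in time $(K+1)^{o(m/\log^2 m)}|I|^{O(1)}=2^{o(N)}$, contradicting Lemma~\ref{le:3sat'} and hence ETH.

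The main obstacle I expect is keeping the arithmetic simultaneously (a) large enough that distinct block-assignments and distinct clause-incidences never collide in any gap, (b) small enough that $K$ stays $2^{O((N/m)\log^{O(1)}m)}$ rather than $2^{O(N/m)}\cdot 2^{O(N/m)}=2^{O(N/m)}$... wait, that product is still fine — the real danger is the $x^2$-scaling: a naive single layer of agents needs a window of size $\sqrt{2^t}$, i.e. $2^{t/2}$, which when multiplied into $r$ gives $2^{3t/2}$, and iterating this naively still leaves a constant-power blow-up of $2^t$. The delicate part is therefore the choice of the number of agent layers ($\Theta(\log\log m)$ is not obviously enough; one may need $\Theta(\log m)$, which is exactly why the exponent carries a $\log^{O(1)}m$ and the statement has $m/\log^2 m$ rather than $m$) and the precise interleaving of the digit-ranges for the block-encoding with the index-ranges of each agent layer, so that all of Lemmas analogous to~\ref{le:sympl terms}--\ref{le:sympl r and delta equal} go through with no carries between ranges. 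Once that layering is pinned down, the rest is a disciplined repetition of the Section~2 argument with "variable'' replaced by "block of $t$ variables.''
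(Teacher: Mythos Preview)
Your proposal is on the wrong track: it tries to adapt the Section~2 construction ``per block,'' but that construction relies on having one huge job---and hence one machine---per constraint (there are $9n$ machines in Section~2, one for each of the $9n$ huge jobs). Here you have only $m$ machines, so you cannot create $\Theta(N)$ gap-machines to check $\Theta(N)$ clause and assignment constraints. Your block-encoding idea also runs into trouble at the clause level: a clause touches up to three variables that may lie in three different blocks, so a ``variable--clause gap'' would have to recognize which of $(2^t)^3$ triple-encodings satisfy the clause, and the additive gap-filling mechanism of Section~2 has no evident way to do that. The agent-layer trick you invoke is an index-compression device (shrinking the \emph{range of indices} that must be matched), not a device for handling an exponential number of truth assignments.

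The paper's proof takes an entirely different route. It first reduces 3SAT$'$ linearly to a structured variant of 3-dimensional matching (3DM$'$) with $O(n)$ elements and matches. It then reduces 3DM$'$ to $P(m{+}1)||C_{max}$ by writing numbers in base $\alpha=m^{O(1)}$ with only $|f|=O(\tfrac{n}{m}\log m)$ digits: each element is assigned a digit position $f(\cdot)$, each match $(w,x,y)$ becomes a job $g(w)\alpha^{f(w)}+g(x)\alpha^{f(x)}+g(y)\alpha^{f(y)}$, each element $\eta$ gets a cover job $(6m^3-g(\eta))\alpha^{f(\eta)}$, and a single huge job sits on machine $m{+}1$. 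The crucial point is the ``universal'' property of $f$: for \emph{any} proper matching, the matches can be split into $m$ groups so that elements within a group receive distinct digit positions. This is obtained by partitioning the $w$-elements into $m$ parts $W_0,\dots,W_{m-1}$ and greedily coloring the clause-elements $s_j$ in the bipartite graph $\{W_k\}\!-\!\{s_j\}$, which costs $O(\tfrac{n}{m}\log m)$ colors and is the source of the $\log^2 m$ loss. The makespan is then $K=6m^3\sum_{i}\alpha^i=2^{O((n/m)\log^2 m)}$, and running a hypothetical $(1/\epsilon)^{o(m/\log^2 m)}$ FPTAS with $\epsilon=1/(K+1)$ yields a $2^{o(n)}$ algorithm for 3SAT$'$.
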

\vspace{-2mm}
We prove the above theorem in this section, and Theorem~\ref{th:sche-machine parameter}
follows since otherwise, there exists a
$(1/\epsilon)^{O(m^{1-\delta_0})} \,\, |I|^{O(1)}$ time FPTAS for
some $\delta_0>0$, and it runs in $(1/\epsilon)^{o(\frac{m}{\log^2
m})}|I_{sche}|^{O(1)}$ time, which is a contradiction.

To prove Theorem \ref{th:sche-machine}, given any 3SAT' instance
$I_{sat}$ with $n$ variables, we construct a scheduling instance
$I_{sche}$ such that it admits an optimal solution with makespan
$2^{O(\frac{n}{m}\log^2 m)}$ if and only if $I_{sat}$ is
satisfiable, then if the above theorem fails, we may apply the
$(1/\epsilon)^{o(\frac{m}{\log^2 m})}|I_{sche}|^{O(1)}$ time PTAS
for $I_{sche}$ by setting $1/\epsilon=2^{O(\frac{n}{m}\log^2 m)}+1$.
Simple calculations show that the optimal solution could be computed
in $2^{\delta_m n}$ time where $\delta_m$ goes to $0$ as $m$
increases, and thus the satisfiability of the given 3SAT' instance
could also be determined in $2^{\delta_m n}$ time, which is a contradiction.

For simplicity throughout the following we let the number of machines be $m+1$ (instead of $m$).

\noindent\textbf{Overview of the Reduction}

We first give a short explanation of the traditional reduction which
reduces the 3 dimensional matching problem (3DM) to $P2||C_{max}$~\cite{garey}. In the 3DM problem, there
are three disjoint sets of elements $W\cup X\cup Y$ with $|W|=|X|=|Y|=q$, and a set of matches
$T\subset W\times X\times Y$. The problem asks whether there exists a proper matching, i.e.,
a subset of $T$ in which every element appears once. Consider integers no more than $\alpha^{3q+1}-1$ (where
$\alpha$ is some parameter). Taking $\alpha$ as the base, there are $3q$ 'bits'\, in these integers, and the
traditional reduction allocates a bit to a distinct element. Suppose element $\lambda\in W\cup X\cup Y$ is allocated
with the $f(\lambda)$-th bit, then for every match $(w_i,x_j,y_k)\in T$, a job of processing time $\alpha^{f(w_i)}+\alpha^{f(x_j)}+\alpha^{f(y_k)}$
is constructed. These are called key jobs, and $\alpha$ is taken to be large enough so that when we add up all the key jobs,
there is no 'carry over'\, between bits (e.g., $\alpha=|T|+1$). Let $B=\sum_{i=1}^{3q}\alpha^i=(111\cdots11)_{\alpha}$. By
creating huge dummy jobs, the scheduling problem is equivalent to asking whether there is a subset of
key jobs adding up to $B$,
and it is easy to verify that these key jobs correspond to a proper matching.

There is a traditional reduction that reduces the 3SAT problem with $O(n)$ variables and clauses to the 3DM problem
with $O(n^2)$ elements~\cite{garey}, and thus yields a scheduling problem with makespan $2^{O(n^2)}$ when combined with
the above reduction. To reduce the size, we need to first give a linear reduction that reduces the 3SAT problem with
$O(n)$ variables and clauses to the 3DM problem
with $O(n)$ elements and matches. Indeed, this could be achieved by slightly generalizing the 3DM problem, as we will show in
the next part.

Now we try to allocate 'bits'\, to elements. Let $\alpha=m^{O(1)}$ be the base. Since we aim to construct a scheduling instance
with makespan bounded by $2^{n/m\log^{O(1)}m}$, there are only $n/m\log^{O(1)}m$ different bits. Recall that there are $O(n)$ elements.
Let $f$ be some allocation that maps the elements to $\{1,2,\cdots,|f|\}$ where $|f|\le n/m\log^{O(1)}m$, then $O(m)$ elements may
share the same bit.
Roughly speaking, we will again create a key job of processing time similar to $\alpha^{f(w_i)}+\alpha^{f(x_j)}+\alpha^{f(y_k)}$ for $(w_i,x_j,y_k)$.
By creating dummy jobs, the scheduling problem is equivalent to asking
whether there are $m$ disjoint subsets of key jobs with the
total processing time of jobs in each subset equal to
$\sum_{i=1}^{|f|}\alpha^i=(111\cdots11)_{\alpha}$.
Thus, in jobs of each subset every $\alpha^i$ term should appear once (meaning that jobs
in the same subset do not share the same $\alpha^i$ term) and the key jobs
on the $m$ machines will correspond to a proper matching.

The difficult part is from 3DM to scheduling. To make the above argument work, the allocation function
$f$ should satisfy a 'universal'\, property: for any proper matching of the 3DM instance (if it exists),
jobs corresponding to the matching could be divided into $m$ subsets such that in every subset jobs
do not share the same $\alpha^i$ term (i.e., the elements of the matches corresponding to jobs in each subset
are not mapped to the same bit). How can we design such a function $f$ without any knowledge of the matching? This
would be achieved by starting with a 3DM problem of a special structure, and using the idea of greedy coloring
in the underlying graph of the 3DM problem.

\noindent\textbf{From 3SAT to 3DM.}
Given a 3SAT' instance $I_{sat}$, by applying Tovey's
method~\cite{tovey} for a second time and a proper re-indexing of indices, we may further alter it
and then transform it into a 3DM instance $I_{3dm}$ with the following structure:
\vspace{-2.5mm}
\begin{itemize}
\setlength\itemsep{-1.5mm}
\item There are three disjoint sets of elements $W=\{w_i,\bar{w}_i|i=1,\cdots,3n\}$,
$X=\{s_j,a_i|j=1,\cdots,|C_1|,i=1,\cdots,3n\}$ and
$Y=\{b_i|i=1,\cdots,3n\}$
\item There are three sets of matches $T_1=\{(w_i),(\bar{w}_i)|i=1,\cdots,3n\}$,
$T_2\subseteq\{(w_i,s_j),(\bar{w}_i,s_j)|w_i\in W,s_j\in X\}$,
$T_3=\{(w_i,a_i,b_i),(\bar{w}_i,a_i,b_{\zeta{(i)}})|i=1,\cdots,3n\}$
where $\zeta$ is defined as $\zeta(3k+1)=3k+2$, $\zeta(3k+2)=3k+3$
and $\zeta(3k+3)=3k+1$ for $k=1,\cdots,n$
\item Either $w_i$ or $\bar{w}_i$ appears in $T_2$, and appears once. Every $s_j$ appears at most three times in
$T_2$.
\end{itemize}
\vspace{-2.5mm}
We remark that the above 3DM problem (denoted as 3DM') is actually a slight generalization of the traditional 3DM problem by
allowing one-element matches like $(w_i)$ and two-element matches like $(w_i,s_j)$, and as a consequence $|W|\ge |X|\ge |Y|$. Notice
that in the 3DM' problem, $T_1$ and $T_3$ are fixed. The 3DM' problem
also asks for the existence of a proper matching (a subset of $T_1\cup T_2\cup T_3$ where every element appears once). There is a linear reduction from 3SAT' to 3DM', implying the existence of
some $s'$ such that there is no $2^{s'n}$ time algorithm for the 3DM' problem under ETH. The reader may refer to Appendix~\ref{appendix:I_3dm} for details.

\vspace{-5mm}
\paragraph{Defining the Function $f$ Based on Partitioning Matches.} By
introducing dummy elements and matches we may assume that $n=qm$ for some
integer $q$. We divide the set $W$ equally into $m$ subsets, with
$W_k=\{w_i,\bar{w}_i|3kq+1\le i\le 3kq+3q\}$ for $0\le k\le m-1$. Given a proper
matching, every element appears once, thus we can always divide the matching into
$m$ subsets so that matches containing $w_i, \bar{w}_i\in W_k$ are in the $k$-th subset. We design
$f$ such that elements appear in the same subset are allocated with distinct bits.

We construct a bipartite graph $G=(V^w\cup V^s,E)$ in the
following way. There are $m$ vertices in $V^w$, with a slight abuse
of notations we denote them as $W_k$ for $0\le k\le m-1$. There are
$|C_1|\le 3n$ vertices in $V^s$, and we denote them as $s_j$ for
$1\le j\le |C_1|$. There is an edge between $W_k$ and $s_j$ if
$(w_i,s_j)\in T_2$ or $(\bar{w}_i,s_j)\in T_2$ where
$w_i,\bar{w}_i\in W_k$.

Consider the following problem: we want to draw each vertex of $V^s$ with a
color so that for any $0\le k\le m-1$, all the $s_j$ connected to
$W_k$ are drawn with different colors. There exists a greedy algorithm
for this problem which uses $\tau=O(n/m\log m)$ different colors.
Let $\chi(j)\le \tau$ be the color of $s_j$. We define the function
$f$ in the following way.
\vspace{-2.5mm}
\begin{itemize}
\setlength\itemsep{-1.5mm}
\item $f(b_{3kq+i})=i$, $f(a_{3kq+i})=3q+i$. Here $0\le k\le m-1$, $1\le i\le 3q=3n/m$
\item $f(w_{3kq+i})=6q+i$, $f(\bar{w}_{3kq+i})=9q+i$.
\item $f(s_j)=12q+\chi(j)$. Here $1\le \chi(j)\le \tau=O(n/m\log m)$.
\end{itemize}
\vspace{-2.5mm}
We show that $f$ satisfies the
'universal'\, property. For any proper matching, consider its matches containing elements of $W_k=\{w_i,\bar{w}_i|3kq+1\le i\le 3kq+3q\}$.
Let $Q$ be the set of elements in these matches. Then obviously for any $a_i\in Q$ or $b_i\in Q$, $3kq+1\le i\le 3kq+3q$,
implying that they are allocated with distinct bits. For any $s_j,s_{j'}\in Q$, our coloring implies that $\chi(j)\neq \chi(j')$ since
they both connected to $W_k$.
Thus elements of $Q$ are all allocated with distinct bits.
\vspace{-5mm}
\paragraph{Construction of the Scheduling Instance}
To further identify elements that are allocated with the same bit, we define function $g$ as:
\vspace{-2.5mm}
\begin{itemize}
\setlength\itemsep{-1.5mm}
\item $g(w_{3kq+i})=g(\bar{w}_{3kq+i})=g(a_{3kq+i})=g(b_{3kq+i})=m+k$.
\item Sort vertices with the same color in an arbitrary way. Suppose $s_j$ is colored with color $t$ and is the $l$-th vertex in the
sequence, then $g(s_j)=m+l-1$.
\end{itemize}
\vspace{-2.5mm}
We construct four kinds of jobs: a match job for every match, a
cover job for every element, dummy jobs and one huge job.

For every match $(w,x,y)$, we construct a job with processing time
$g(w)\alpha^{f(w)}+g(x)\alpha^{f(x)}+g(y)\alpha^{f(y)}$ where
$(w,x,y)$ may represent $(w_i)$ or $(\bar{w}_i)$ (in this case we
take $g(x)=g(y)=0$), or represent $(w_i,s_j)$ or $(\bar{w}_i,s_j)$
(in this case we take $g(y)=0$), or represent $(w_i,a_i,b_i)$ or
$(\bar{w}_i,a_i,b_{\zeta(i)})$.

For every element $\eta$, we construct a job with processing time
$(6m^3-g(\eta))\alpha^{f(\eta)}$ where $\eta$ may represent $w_i$,
$\bar{w}_i$, $s_j$, $a_i$ or $b_i$.

We construct dummy jobs.
Using the pigeonhole principle we can conclude that there are
$l_t\le m$ vertices colored with color $t$. If $l_t< m$, we then
construct $m-l_t$ dummy jobs, each of which has a processing time of
$6m^3\alpha^{12q+t}$.

Recall that there are $m+1$ machines, we construct a huge dummy job
whose processing time equals to
$6m^3(m+1)\sum_{i=1}^{9q+\tau}\alpha^{i}$ minus the total processing
time of all the jobs we construct before. It follows directly that
if there exists a feasible solution for $I_{sche}$ whose makespan is
no more than $6m^3\sum_{i=1}^{9q+\tau}\alpha^{i}$, the load of every
machine is $6m^3\sum_{i=1}^{9q+\tau}\alpha^{i}$.

\vspace{-5mm}
\paragraph{From 3DM' to Scheduling} Given that $f$ satisfies the 'universal'\, property,
it is not difficult to construct a schedule of makespan $6m^3\sum_{i=1}^{9q+\tau}\alpha^{i}$
based on a proper matching.
\vspace{-5mm}
\paragraph{From Scheduling to 3DM'} Suppose the huge job is on machine $m+1$, we focus on machine $1$ to $m$. It is easy to check that, for every $i$ there are only a constant number of jobs (except the huge job) with nonnegative
$\alpha^i$ term, and the coefficients are bounded by $6m^3$. Thus, by taking $\alpha=m^{O(1)}$ to be large enough, there
is no carry over when we add up all the jobs (except the huge job), implying that on machine $1$ to $m$, the coefficients of $\alpha^i$ terms from the jobs on each machine add up to $6m^3$. Recall the
definition of $g$. The (nonnegative) coefficient of $\alpha^i$ term from a match job is in $[m,2m]$, from a
cover job is in $[6m^3-m,6m^3-2m]$, and from a dummy job is $6m^3$, thus the $6m^3\alpha^i$ term in the load of each machine
is either contributed by a dummy job, or a cover job and a match job. Furthermore, if it is contributed by a cover job
and a match job, then the element corresponding to the cover job is contained in the match corresponding to the match job. We
can prove that all the cover jobs are on machine $1$ to $m$, given that there is one cover job for each element, the match jobs
on machine $1$ to $m$ correspond to a proper matching.

\clearpage

\appendix

\section{Scheduling on Arbitrary Number of Machines}
\label{appendix:scheduling}
\subsection{From SAT to SAT'}
\label{appendix:subsec 3SAT'} Given any instance $I_{sat}$ (with $m$
clauses) of the 3SAT problem, we may transform $I_{sat}$ into a
3SAT' instance $I'_{sat}$ in which every variable appears at most
three times. Such a transformation is due to Tovey and we describe
it as follows for the completeness.

Let $z$ be any variable in $I_{sat}$ and suppose it appears $d$
times in clauses. If $d=1$ then we add a dummy clause $(z\vee \neg
z)$. Otherwise $d\ge 2$ and we introduce $d$ new variables $z_1$,
$z_2$, $\cdots$, $z_d$ and $d$ new clauses $(z_1\vee\neg z_2)$,
$(z_2\vee\neg z_3)$, $\cdots$, $(z_{d}\vee \neg z_{1})$. Meanwhile
we replace the $d$ occurrences of $z$ by $z_1$, $z_2$, $\cdots$,
$z_d$ in turn and remove $z$. By doing so we transform $I_{sat}$
into $I_{sat}'$ by introducing at most $3m$ new variables and $3m$
new clauses.

Notice that each new clause we add in $I_{sat}'$ is of the form
$(z_i\vee\neg z_k)$. We let $C_2$ be the set of them and let $C_1$
be the set of other clauses. It is easy to verify that $I_{sat}'$ is
an instance of 3SAT' problem.

From now on we use $I_{sat}$ to denote a 3SAT' instance. Given any
positive $\delta>0$ (we assume $1/\delta\ge 2$ is an integer) and
$I_{sat}$ with $n$ variables, we construct an instance of the
scheduling problem such that it admits a feasible solution of
makespan $10^5r$ where $r=2^{3/\delta+9}n^{1+\delta}$ if and only if
$I_{sat}$ is satisfiable.

By adding dummy jobs we may assume that $n$ is sufficient large
(e.g., $n\ge 2^{3/\delta^2+7/\delta}$) and $n^{\delta}$ is an
integer. Recall that clauses could be divided into $C_1$ and $C_2$
where $|C_1|=n/3$ and $|C_2|=n$. All the variables and clauses in
$C_1$ could be re-indexed so that clause $c_i\in C_1$ contains
variables $z_i$, $z_{i+1}$ and $z_{i+2}$ for $i\in R=\{1,4,7,\cdots,
n-2\}$.

Next we define a set of functions $f_j$ and $g_j$ (corresponding to
the functions $f$ and $g$ when $\delta=1/2$) through recursively
partitioning the clauses in $C_2$.

\subsection{Partition Clauses}
We first partition all the clauses (of $C_2$) equally into
$n^{\delta}$ groups. Let these groups be $S_{1,k_1}$ for $1\le
k_1\le n^{\delta}$. We call them as layer-$1$ groups. It can be
easily seen that each layer-1 group contains exactly
$n_1=n^{1-\delta}$ clauses, as a consequence, clauses of $S_{1,k_1}$
contain $n_1$ positive literals and $n_1$ negative literals.

For simplicity, let $S_{1,k_1}^+$ be the indices of all the positive
literals of $S_{1,k_1}$ and $S_{1,k_1}^-$ be the indices of all the
negative literals of $S_{1,k_1}$.

Suppose $i^{(1,k_1)}_1<i^{(1,k_1)}_2<\cdots<i^{(1,k_1)}_{n_1}$ are
all the indices in $S_{1,k_1}^+$, we then define
$$f_{1/\delta}(i^{(1,k_1)}_l)=k_1,\quad g_{1/\delta-1}(i^{(1,k_1)}_{l})=l.$$

Similarly let
$\bar{i}^{(1,k_1)}_1<\bar{i}^{(1,k_1)}_2<\cdots<\bar{i}^{(1,k_1)}_{n_1}$
be all the indices in $S_{1,k_1}^-$, we then define
$$\bar{f}_{1/\delta}(\bar{i}^{(1,k_1)}_l)=k_1, \quad\bar{g}_{1/\delta-1}(\bar{i}^{(1,k_1)}_{l})=l.$$

Each group $S_{1,k_1}$ is then further partitioned equally into
$n^{\delta}$ subgroups and let these groups be $S_{2,k_1,k_2}$ for
$1\le k_2\le n^{1/\delta}$. In general, suppose we have already
derived $n^{(j-1)\delta}$ layer-$(j-1)$ groups for $2\le j\le
1/\delta-1$. Each layer-$(j-1)$ group, say,
$S_{j-1,k_1,k_2,\cdots,k_{j-1}}$ is then further partitioned equally
into $n^{\delta}$ subgroups. Let them be $S_{j,k_1,k_2,\cdots,k_j}$
for $1\le k_j\le n^{\delta}$. It can be easily seen that each
layer-$j$ group contains $n_j=n^{1-j\delta}$ clauses. Again let
$S_{j,k_1,k_2,\cdots,k_j}^+$ and $S_{j,k_1,k_2,\cdots,k_j}^-$ be the
sets of indices of all the positive literals and negative literals
in $S_{j,k_1,k_2,\cdots,k_j}$ respectively. Let
$i^{(j,k_1,k_2,\cdots,k_j)}_1<i^{(j,k_1,k_2,\cdots,k_j)}_2<\cdots<i^{(j,k_1,k_2,\cdots,k_j)}_{n_j}$
be the indices in $S_{j,k_1,k_2,\cdots,k_j}^+$, we then define
$$f_{1/\delta-j+1}(i^{(j,k_1,k_2,\cdots,k_j)}_l)=k_j,\quad g_{1/\delta-j}(i^{(j,k_1,k_2,\cdots,k_j)}_{l})=l.$$

Similarly let
$\bar{i}^{(j,k_1,k_2,\cdots,k_j)}_1<\bar{i}^{(j,k_1,k_2,\cdots,k_j)}_2<\cdots<\bar{i}^{(j,k_1,k_2,\cdots,k_j)}_{n_j}$
be all the indices in $\bar{S}_{j,k_1,k_2,\cdots,k_j}^-$, we then
define
$$\bar{f}_{1/\delta-j+1}(\bar{i}^{(j,k_1,k_2,\cdots,k_j)}_l)=k_j,
\quad\bar{g}_{1/\delta-j}(\bar{i}^{(j,k_1,k_2,\cdots,k_j)}_{l})=l.$$

The above procedure stops when we derive layer-$(1/\delta-1)$ groups
with each of them containing $n^{\delta}$ clauses. We have the
following simple observations.

\textbf{Observation}
\begin{itemize}
\item[1.]{For any $1\le i\le n$, $1\le f_k(i)\le
n^{\delta}$ for $2\le k\le 1/\delta$, and $1\le
g_k(i),\bar{g}_k(i)\le n^{k\delta}$ for $1\le k\le 1/\delta-1$.}
\item[2.]{If $(z_i\vee\neg z_h)\in C_2$, then $f_k(i)=\bar{f}_k(h)$ for $2\le k\le 1/\delta$.}
\item[3.]{For any $0\le k\le 1/\delta-2$ and $i< i'$}
\begin{itemize}
\item{If $f_{1/\delta}(i)=f_{1/\delta}(i')$,
$f_{1/\delta-1}(i)=f_{1/\delta-2}(i')$, $\cdots$,
$f_{1/\delta-k}(i)=f_{1/\delta-k}(i')$, \\then
$g_{1/\delta-k-1}(i)<g_{1/\delta-k-1}(i')$.}
\item{If $\bar{f}_{1/\delta}(i)=\bar{f}_{1/\delta}(i')$,
$\bar{f}_{1/\delta-1}(i)=\bar{f}_{1/\delta-2}(i')$, $\cdots$,
$\bar{f}_{1/\delta-k}(i)=\bar{f}_{1/\delta-k}(i')$, \\then
$\bar{g}_{1/\delta-k-1}(i)<\bar{g}_{1/\delta-k-1}(i')$.}
\end{itemize}
\item[4.]{For any $1\le\tau\le n^{\delta}$ and $2\le k\le 1/\delta$,
$|\{i|f_k(i)=\tau\}|=|\{i|\bar{f}_k(i)=\tau\}|=n^{1-\delta}$.}
\end{itemize}

\subsection{Construction of the Scheduling Instance}
We construct the scheduling instance based on $I_{sat}$. Throughout
this section we set $x=4n^{\delta}$, $r=2^{3/\delta+9}n^{1+\delta}$
and use $s(j)$ to denote the processing time of job $j$. We will
show that the constructed scheduling instance admits a feasible
schedule of makespan $K=10^5r$ if and only if the given 3SAT'
instance is satisfiable. Similar to the special case when
$\delta=1/2$, we construct $8n$ variable jobs, $8n$ truth assignment
jobs, $n$ clause jobs, $2n$ dummy jobs. The only difference is that
we construct more agent jobs, indeed, we will construct
$4(1/\delta-1)n$ agent jobs, divided from layer-1 agent jobs to
layer-$(1/\delta-1)$ agent jobs.

\noindent Variable jobs: $v_{i,1}^{\gamma}$ and $v_{i,2}^{\gamma}$
are constructed for $z_i$, $v_{i,3}^{\gamma}$ and $v_{i,4}^{\gamma}$
are for $\neg z_i$.
$$s(v_{i,k}^T)=r+2^{1/\delta+7}[f_{1/\delta}(i)x^{1/\delta}+i]+2^{1/\delta+6}+k,\quad
k=1,2$$
$$s(v_{i,k}^T)=r+2^{1/\delta+7}[\bar{f}_{1/\delta}(i)x^{1/\delta}+i]+2^{1/\delta+6}+k,\quad
k=3,4$$
$$s(v_{i,k}^F)=s(v_{i,k}^T)+2r,\quad k=1,2,3,4$$

\noindent Agent jobs: layer-$j$ agent jobs $\eta_{i,j,+}^{\gamma}$
and $\eta_{i,j,-}^{\gamma}$ are constructed for $1\le i\le n$ and
$1\le j\le 1/\delta-1$.
$$s(\eta_{i,j,+}^T)=r+2^{1/\delta+7}[\sum_{k=j+1}^{1/\delta}f_{k}(i)x^{k}+g_{j}(i)]+2^{j+6}+8,\quad j=1,2,\cdots, 1/\delta-1$$
$$s(\eta_{i,j,-}^T)=r+2^{1/\delta+7}[\sum_{k=j+1}^{1/\delta}\bar{f}_{k}(i)x^{k}+\bar{g}_{j}(i)]+2^{j+6}+16,\quad j=2,3,\cdots,1/\delta-1$$
Specifically,
$s(\eta_{i,1,-}^T)=r+2^{1/\delta+7}[\sum_{k=2}^{1/\delta}\bar{f}_{k}(i)x^{k}+\bar{g}_{1}(i)x]+2^{7}+16$,
$$s(\eta_{i,j,\sigma}^F)=s(\eta_{i,\sigma}^T)+2r, \quad\sigma=+,-$$

\noindent Truth assignment jobs: $a_i^{\gamma}$, $b_i^{\gamma}$,
$c_i^{\gamma}$ and $d_i^{\gamma}$.
$$s(a_i^F)=11r+(2^7i+8), s(b_i^F)=11r+(2^7i+32),$$
$$s(c_i^F)=101r+(2^7i+16), s(d_i^F)=101r+(2^7i+64),$$
$$s(k_i^{T})=s(k_i^F)+r,\quad k=a,b,c,d.$$

\noindent Clause jobs: 3 clause jobs are constructed for every
$c_{j}\in C_1$ where $j\in R$, with one $u_{j}^T$ and two copies of
$u_{j}^F$:
$$s(u_{j}^T)=10004r+2^{1/\delta+9}j, s(u_{j}^F)=10002r+2^{1/\delta+9}j.$$

\noindent Dummy jobs: $n+n/3$ jobs with processing time $1000r$, and
$n-n/3$ jobs with processing time $1002r$.

Let $V$ and $V_a$ be the set of variable jobs and agent jobs. Let
$A$, $B$, $C$, $D$ be the set of $a_i^{\gamma}$, $b_{i}^{\gamma}$,
$c_i^{\gamma}$ and $d_i^{\gamma}$ respectively. Let $G_0=V\cup V_a$,
$G_1=A\cup B$, $G_2=C\cup D$, $G_3$ be the set of dummy jobs and
$G_4=U$ be the set of clause jobs. Again we may drop the superscript
for simplicity. We construct huge jobs. They create gaps on
machines. According to which jobs are needed to fill up the gap,
they are divided into five groups.

Two huge jobs (variable-agent jobs) $\theta_{\eta,i,+}$ and
$\theta_{\eta,i,-}$ are constructed for each variable $z_i$:
\begin{eqnarray*}
s(\theta_{\eta,i,+})&=&10^5r-[4r+2^{1/\delta+7}(2f_{1/\delta}(i)x^{1/\delta}
+i+g_{1/\delta-1}(i))+2^{1/\delta+6}+2^{1/\delta+5}+10]\\
s(\theta_{\eta,i,-})&=&10^5r-[4r+2^{1/\delta+7}(2\bar{f}_{1/\delta}(i)x^{1/\delta}
+i+\bar{g}_{1/\delta-1}(i))+2^{1/\delta+6}+2^{1/\delta+5}+20]
\end{eqnarray*}

$2/\delta-4$ huge jobs (layer-decreasing jobs) $\theta_{i,j,+}$ and
$\theta_{i,j,-}$ are constructed for $j=1,\cdots,1/\delta-2$. For
$j=2,3,\cdots,1/\delta-2$, their processing times are
\begin{eqnarray*}
s(\theta_{i,j,+})&=&10^5r-[4r+2^{1/\delta+7}(2\sum_{k=j+2}^{1/\delta}f_{k}(i)x^{k}+f_{j+1}(i)x^{j+1}
+g_{j+1}(i)+g_{j}(i))+2^{j+7}+2^{j+6}+16]\\
s(\theta_{i,j,-})&=&10^5r-[4r+2^{1/\delta+7}(2\sum_{k=j+2}^{1/\delta}\bar{f}_{k}(i)x^{k}+\bar{f}_{j-1}(i)x^{j+1}
+\bar{g}_{j+1}(i)+\bar{g}_{j}(i))+2^{j+7}+2^{j+6}+32].
\end{eqnarray*}

For $j=1$, their processing times are
\begin{eqnarray*}
s(\theta_{i,1,+})&=&10^5r-[4r+2^{1/\delta+7}(2\sum_{l=3}^{1/\delta}f_{l}(i)x^{l}+f_{2}(i)x^{2}
+g_{2}(i)+g_{1}(i))+2^{8}+2^{7}+16]\\
s(\theta_{i,1,-})&=&10^5r-[4r+2^{1/\delta+7}(2\sum_{l=3}^{1/\delta}\bar{f}_{l}(i)x^{l}+\bar{f}_{2}(i)x^{2}
+\bar{g}_{2}(i)+\bar{g}_{1}(i)x)+2^{8}+2^{7}+32].
\end{eqnarray*}

One huge job (agent-agent job) $\theta_{i,k,C_2}$ is constructed for
$(z_i\vee \neg z_k)\in C_2$:
$$s(\theta_{i,k,C_2})=10^5r-[4r+2^{1/\delta+7}(2\sum_{l=2}^{1/\delta}f_{l}(i)x^{l}
+\bar{g}_{1}(k)x+g_{1}(i))+2^{8}+24].$$

Three huge jobs (variable-clause jobs) are constructed for each
$c_{j}\in C_1$ ($j\in R$), one for each literal: for $i=j,j+1,j+2$,
if $z_i\in c_j$, we construct $\theta_{j,i,+,C_1}$, otherwise $\neg
z_i\in c_j$, and we construct $\theta_{j,i,-,C_1}$.
$$s(\theta_{j,i,+,C_1})=10^5r-11005r-(2^{1/\delta+7}f_{1/\delta}(k)x^{1/\delta}+2^{1/\delta+9}i+2^{1/\delta+7}k+2^{1/\delta+6}+1),$$
$$s(\theta_{j,i,-,C_1})=10^5r-11005r-(2^{1/\delta+7}\bar{f}_{1/\delta}(k)x^{1/\delta}+2^{1/\delta+9}i+2^{1/\delta+7}k+2^{1/\delta+6}+3).$$

One huge job (variable-dummy job) is constructed for each variable.
Notice that each variable appears exactly three times in clauses, if
$z_i$ appears twice while $\neg z_i$ appears once, we construct
$\theta_{i,-}$. Otherwise, we construct $\theta_{i,+}$ instead.
$$s(\theta_{i,+})=10^5r-1003r-(2^{1/\delta+7}f_{1/\delta}(i)x^{1/\delta}+2^{1/\delta+7}i+2^{1/\delta+6}+1),$$
$$s(\theta_{i,-})=10^5r-1003r-(2^{1/\delta+7}\bar{f}_{1/\delta}(i)x^{1/\delta}+2^{1/\delta+7}i+2^{1/\delta+6}+3).$$

Thus, for each clause $c_i$ ($i\in R$) and $k=i,i+1,i+2$, either
$\theta_{i,+}$ and $\theta_{j,i,-,C_1}\in \Theta_1$ exist, or
$\theta_{i,-}$ and $\theta_{j,i,+,C_1}$ exist.

Four huge jobs (variable-assignment jobs) are constructed for each
variable $z_i$, namely $\theta_{i,a,c}$, $\theta_{i,b,d}$,
$\theta_{i,a,d}$ and $\theta_{i,b,c}$:
$$s(\theta_{i,a,c})=10^5r-115r-2^{1/\delta+7}(f_{1/\delta}(i)x^{1/\delta}+i)-(2^{1/\delta+6}+2^8i+25),$$
$$s(\theta_{i,b,d})=10^5r-115r-2^{1/\delta+7}(f_{1/\delta}(i)x^{1/\delta}+i)-(2^{1/\delta+6}+2^8i+98),$$
$$s(\theta_{i,a,d})=10^5r-115r-2^{1/\delta+7}(\bar{f}_{1/\delta}(i)x^{1/\delta}+i)-(2^{1/\delta+6}+2^8i+75),$$
$$s(\theta_{i,b,c})=10^5r-115r-2^{1/\delta+7}(\bar{f}_{1/\delta}(i)x^{1/\delta}+i)-(2^{1/\delta+6}+2^8i+52).$$

The jobs we construct now are similar to that we construct in the
special case, except that we construct a set of agent jobs from
layer-1 to layer-$(1/\delta-1)$ instead of only two agent jobs, and
a set of layer-decreasing jobs so as to leave gaps for these agent
jobs. It is easy to verify that we construct $2/\delta n+5n$ huge
jobs, and thus there are $2/\delta n+5n$ identical machines in the
scheduling instance.

The processing time of each job is a polynomial on $x$. The reader
may refer to the following tables for an overview of the
coefficients

\begin{table}[!hbp]
\begin{tabular}{|c|c|c|c|c|c|c|c|c|}
\hline Jobs/coefficients &$2^{1/\delta+7}x^{1/\delta}$ &
$2^{1/\delta+7}x^{1/\delta-1}$ & $\cdots$ & $2^{1/\delta+7}x^{j+1}$
&$2^{1/\delta+7}x^{j}$ &$2^{1/\delta+7}x^{j-1}$&$\cdots$& $2^{1/\delta+7}x^{2}$ \\
\hline
$\eta_{i,j,+}$ & $f_{1/\delta}(i)$ & $f_{1/\delta-1}(i)$ & $\cdots$ & $f_{j+1}(i)$& $0$&0&$\cdots$&0 \\
\hline
$\eta_{i,j-1,+}$ & $f_{1/\delta}(i)$ & $f_{1/\delta-1}(i)$ & $\cdots$ & $f_{j+1}(i)$& $f_{j}(i)$&0&$\cdots$&0 \\
\hline
\end{tabular}
\caption{coefficients-of-agent-jobs}\label{fig:coe-of-link}
\end{table}

\begin{table}[!hbp]
\begin{tabular}{|c|c|c|c|c|c|c|c|c|}
\hline Jobs/coefficients &$2^{1/\delta+7}x^{1/\delta}$ &
$2^{1/\delta+7}x^{1/\delta-1}$ & $\cdots$ & $2^{1/\delta+7}x^{j+1}$
&$2^{1/\delta+7}x^{j}$ &$2^{1/\delta+7}x^{j-1}$&$\cdots$& $2^{1/\delta+7}x^{2}$ \\
\hline
$\theta_{i,j,+}$ & $2f_{1/\delta}(i)$ & $2f_{1/\delta-1}(i)$ & $\cdots$ & $f_{j+1}(i)$& $0$&0&$\cdots$&0 \\
\hline
$\theta_{i,j-1,+}$ & $2f_{1/\delta}(i)$ & $2f_{1/\delta-1}(i)$ & $\cdots$ & $2f_{j+1}(i)$& $f_{j}(i)$&0&$\cdots$&0 \\
\hline $\theta_{i,k,C_2}$ & $2f_{1/\delta}(i)$ &
$2f_{1/\delta-1}(i)$ & $\cdots$ & $2f_{j+1}(i)$&
$2f_{j}(i)$&$2f_{j-1}(i)$
&$\cdots$&$2f_{2}(i)$ \\
\hline
\end{tabular}
\caption{coefficients-of-huge-jobs}\label{fig:coe-of-relation}
\end{table}

It is not difficult to verify that the total processing time of all
the jobs is $(2/\delta n+5n)\cdot 10^5r$.

\subsection{3SAT to Scheduling}
\label{appendix-subsection: sat to schedule} We show that, if
$I_{sat}$ is satisfiable, then the makespan of the optimal solution
for the constructed scheduling instance is $10^5r$. Notice that the
number of huge jobs equals the number of machines. We put one huge
job on each machine. For simplicity, we may use the symbol of a huge
job to denote a machine, e.g., we call a machine as machine
$\theta_{i,a,c}$ if the job $\theta_{i,a,c}$ is on it.

We first schedule jobs in the following way. Recall that the
superscript ($T$ or $F$) of a job only influences its $r$-term. It
is not difficult to verify that by scheduling jobs in the following
way, except for the $r$-terms, the coefficients of other terms of
jobs on the same machine add up to $0$.

\begin{figure}[!h]
\begin{center}
      \includegraphics[scale=0.25]{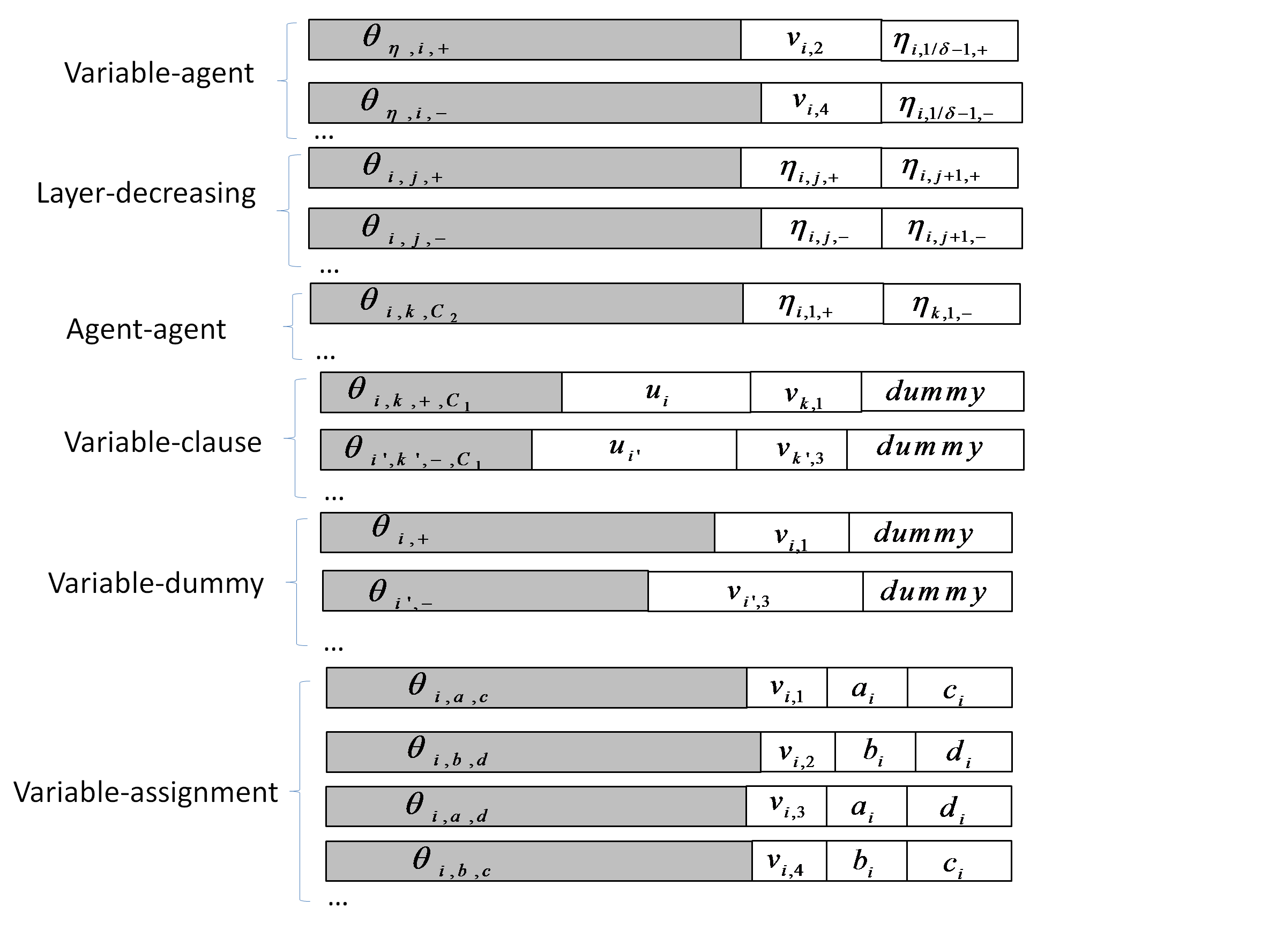}
  \end{center}
\label{fig:index-scheduling}\caption{index-scheduling}
\end{figure}

We determine the superscripts of each job so that their $r$-terms
add up to $10^5r$. Suppose according to the truth assignment of
$I_{sat}$ variable $z_i$ is true, we determine the superscripts in
the following way.

On machine $\theta_{\eta,i,+}$, the two jobs are $v_{i,2}^T$ and
$\eta_{i,1/\delta-1,+}^F$. On machine $\theta_{i,j,+}$ where
$j=1,2,\cdots,1/\delta-2$, the two agent jobs are $\eta_{i,j,+}^F$
and $\eta_{i,j+1,+}^T$. Thus, $\eta_{i,j,+}^F$ is on machine
$\theta_{i,j,+}$, $\eta_{i,j,+}^T$ is on machine $\theta_{i,j-1,+}$,
which means both the true job and false job of $\eta_{i,j,+}$ are
scheduled. While for $\eta_{i,1,+}$, only $\eta_{i,1,+}^F$ is
scheduled (on machine $\theta_{i,1,+}$).

Similarly on machine $\theta_{\eta,i,-}$, the two jobs are
$v_{i,4}^F$ and $\eta_{i,1/\delta-1,-}^T$. On machine
$\theta_{i,j,-}$ where $j=1,2,\cdots,1/\delta-2$, the two variable
jobs are $\eta_{i,j,-}^T$ and $\eta_{i,j+1,-}^F$. Thus,
$\eta_{i,j,-}^T$ is on machine $\theta_{i,j,-}$, $\eta_{i,j,-}^F$ is
on machine $\theta_{i,j-1,-}$, which means both the true job and
false job is scheduled. While for $\eta_{i,1,-}$, only
$\eta_{i,1,-}^T$ is scheduled (on machine $\theta_{i,1,-}$).

Consider agent-agent machines. For the variable $z_i$, there is a
clause $(z_i\vee\neg z_k)\in C_2$ for some $k$, and we know
$\eta_{i,1,+}$ and $\eta_{k,1,-}$ are on machine $\theta_{i,k,C_2}$.
Since $I_{sat}$ is satisfiable, variables $z_i$ and $z_k$ should be
both true or both false. Thus, given that $z_i$ is true, $z_k$ is
also true. This implies that $\eta_{i,1,+}^T$ and $\eta_{k,1,-}^F$
are not scheduled before, and we let the two jobs on machine
$\theta_{i,k,C_2}$ be them.

Meanwhile in $C_2$ there is also a clause $(z_{k'}\vee\neg z_i)$ for
some $k'$, and we know $\eta_{i,1,-}$ and $\eta_{k',1,+}$ are on
machine $\theta_{k',i,C_2}$. Since $I_{sat}$ is satisfiable,
variables $z_{k'}$ and $z_i$ should be both true or both false.
Thus, given that $z_i$ is true, $z_k$ is also true. This implies
that $\eta_{k',1,+}^T$ and $\eta_{i,1,-}^F$ are not unscheduled
before, and we let the two jobs on machine $\theta_{k',i,C_2}$ be
them.

Consider variable-assignment jobs. We put $v_{i,1}^F$, $a_i^F$,
$c_i^F$ on machine $\theta_{i,a,c}$, put ${v}_{i,2}^F,b_i^F,d_i^F$
on machine $\theta_{i,b,d}$, put $v_{i,3}^T,a_i^T,d_i^T$ on machine
$\theta_{i,a,d}$, and put $v_{i,4}^T,b_i^T,c_i^T$ on machine
$\theta_{i,b,c}$. Thus, both the true copy and false copy of $a_i$,
$b_i$, $c_i$ and $d_i$ are scheduled. It can be easily seen that the
$r$-terms of three true jobs or three false jobs both add up to
$115r$. Otherwise, $z_i$ is false, and we schedule jobs just in the
opposite way, i.e., we replace each true job with its corresponding
false job, and each false job with its corresponding true job in the
previous scheduling.

We consider the remaining jobs. If $z_i$ is true, then $v_{i,1}^T$
and $v_{i,3}^F$ are left. If $z_i$ is false, then $v_{i,1}^F$ and
$v_{i,3}^T$ are left. These jobs should be scheduled with clause
jobs and dummy jobs on variable-clause machines or variable-dummy
machines. Notice that for any $i\in R$ and $k\in \{i,i+1,i+2\}$,
either $\theta_{i,k,+,C_1}$ and $\theta_{k,-}$ exist, or
$\theta_{i,k,-,C_1}$ and $\theta_{k,+}$ exist.

Suppose the variable $z_k$ is true. If $\theta_{i,k,+,C_1}$ and
$\theta_{k,-}$ exist, we put $v_{k,1}^T$ on machine
$\theta_{i,k,+,C_1}$, and $v_{k,3}^F$ on machine $\theta_{k,-}$.
Otherwise $\theta_{i,k,-,C_1}$ and $\theta_{k,+}$ exist, and we put
$v_{k,3}^F$ on machine $\theta_{i,k,-,C_1}$, and $v_{k,1}^T$ on
machine $\theta_{k,+}$.

In both cases, the remaining jobs $v_{k,1}^T$ and $v_{k,3}^F$ are
scheduled. Otherwise $z_{k}$ is false. If $\theta_{i,k,+,C_1}$ and
$\theta_{k,-}$ exist, put $v_{k,1}^F$ on machine
$\theta_{i,k,+,C_1}$, and $v_{k,3}^T$ on machine $\theta_{k,-}$.
Otherwise $\theta_{i,k,-,C_1}$ and $\theta_{k,+}$ exist. We put
$v_{k,3}^T$ on machine $\theta_{i,k,-,C_1}$, and $v_{k,1}^F$ on
machine $\theta_{k,+}$.

Again in both cases, the remaining jobs $v_{k,1}^F$ and $v_{k,3}^T$
are scheduled. From now on we drop the symbol $+$ or $-$ and just
use $\theta_{i,k,C_1}$ to denote either $\theta_{i,k,+,C_1}$ or
$\theta_{i,k,-,C_1}$, and use $\theta_{k}$ to denote either
$\theta_{k,+}$ or $\theta_{k,-}$. It is easy to verify that the
above scheduling has the following property.

\vskip 2mm\noindent\textbf{Property} If $c_i$ is satisfied by
variable $z_k$ (i.e., $z_k\in c_i$ and $z_k$ is true or $\neg z_k\in
c_i$ and $z_k$ is false), then a true variable job is on machine
$\theta_{i,k,C_1}$; if $c_i$ is not satisfied by $z_k$, then a false
variable job is on machine $\theta_{i,k,C_1}$.

Consider variable-dummy machines. For each $k=1,2,\cdots,n$, there
is one machine $\theta_{k}$. If a true variable job is on it, we
then put additionally a dummy job of size $1002r$. Otherwise a false
variable job is on it, and we put additionally a dummy job of size
$1000r$ on it. Thus, in both cases the $r$-terms of variable job and
dummy job add up to $1003r$.

Consider variable-clause machines. For each clause $c_i\in C_1$
(i.e., $i\in R$), there are three copies of $u_i$, one true and two
false. There are three machines, $\theta_{i,i,C_1}$,
$\theta_{i,i+1,C_1}$ and $\theta_{i,i+2,C_1}$.

Notice that according to the truth assignment, $c_i$ is satisfied by
at least one variable. Suppose $c_i$ is satisfied by $z_{k_1}$, and
let $z_{k_2}$ and $z_{k_3}$ be the remaining two variables in this
clause, i.e., $k_1,k_2,k_3$ is some permutation of the three indices
$i,i+1,i+2$. We put $u_i^T$ on machine $\theta_{i,k_1,C_1}$.
Additionally, we put a dummy job of size $1000r$ on this machine.
According to the property we have mentioned above, since $c_i$ is
satisfied by $z_{k_1}$, the variable job on machine
$\theta_{i,k_1,C_1}$ is a true job. Thus, the $r$-terms of the true
clause job, true variable job and a dummy job on
$\theta_{i,k_1,C_1}$ add up to $11005r$.

Consider machine $\theta_{i,k_2,C_1}$ and $\theta_{i,k_3,C_1}$. We
put one of the remaining two false jobs $u_i^F$ on them
respectively. We add dummy jobs according to the following criteria.
If the variable job is true, we add a dummy job of size $1002r$. If
the variable job is false, we add a dummy job of size $1000r$.

Thus in both cases, the $r$-terms of the variable job and dummy job
add up to $1003r$. And if we further add the $r$-terms of the false
clause job and the relation job, the sum is $10^5r$. Finally we
check the number of dummy jobs that are used.

For simplicity we use $(T/F,T/F,1000r/1002r)$ to denote the
truth-type of a variable-clause machine, i.e., the first coordinate
is $T$ is the variable job is true, and $F$ if it is false,
similarly the second coordinate is $T$ (or $F$) if the clause job is
$T$ (or $F$), the third coordinate is $1000r$ (or $1002r$) if the
dummy job is of size $1000r$ (or $1002r$). We also denote the
truth-type of a variable-dummy machine in the form of
$(T/F,1000r/1002r)$.

A dummy job of size $1000r$ is always scheduled on a machine of
truth-type $(T,T,1000r)$, $(F,F,1000r)$ and $(F,1000r)$, while a
dummy job of $1002r$ is scheduled on a machine of truth-type
$(T,F,1002r)$ and $(T,1002r)$. Notice that on these machines, there
are $n$ true variable jobs and $n$ false variable jobs, and there
are $|C_1|=n/3$ true clause jobs, thus simple calculations show that
$n+n/3$ dummy jobs of $1000r$ and $n-n/3$ dummy jobs of $1002r$ are
scheduled, which coincides with the dummy jobs we construct.

\subsection{Scheduling to 3SAT}
We show that, if the constructed scheduling instance admits a
feasible schedule with makespan $10^5r$, then $I_{sat}$ is
satisfiable. Notice that in a scheduling problem, jobs are
represented by their processing times rather than symbols, we first
show that we can the processing time of each job we construct is
distinct (except that two copies of $u_i^F$ are constructed for
every clause in $C_1$), this would be enough to determine the symbol
of a job from its processing time.

\subsubsection{Distinguishing Jobs from Their Processing Times}
Recall that we define the processing time of a job in the form of a
polynomial, we use the notion $x^j$-term or $r$-term in their direct
meaning. Meanwhile, we call the sum of all except the $r$-term of a
job as the small-$r$-term. For any $2\le j\le 1/\delta$, we delete
the $r$-term and $x^k$-term with $k\ge j$ from the processing time
of a job, and call the sum of all the remaining terms as the
small-$x^j$-term.

For example, the relation job $\theta_{i,3,+}$ is of processing time
$10^5r-[4r+2^{1/\delta+7}(2\sum_{k=5}^{1/\delta}f_{k}(i)x^{k}+f_{4}(i)x^{4}
+g_{4}(i)+g_{3}(i))+2^{10}+2^{9}+16]$, and thus for $5\le k\le
1/\delta$, its $x^k$-term is $2^{1/\delta+7}\cdot 2f(i)x^k$. Its
$x^4$-term is $f(4)(i)x^4$. Its $x^3$-term and $x^2$-term are 0. Its
small-$x^5$-term is $2^{1/\delta+7}(f_{4}(i)x^{4}
+g_{4}(i)+g_{3}(i))+2^{10}+2^{9}+16$. Meanwhile, for a clause job,
say, $u_i$, its $x^j$-term is 0 for $1\le j\le 1/\delta$, and its
small-$x^j$-term for any $j$ is $2^{1/\delta+9}i$.

Consider the small-$r$-term of any job. If it is a huge job, this
value is negative and its absolute value is bounded by
$2^{1/\delta+7}(2\sum_{k=2}^{1/\delta}n^{1/\delta}x^k+2n)+2^{1/\delta+7}+32<1/2r$
(notice that $n^{\delta}x^k= 1/4 x^{k+1}$). Otherwise it is a
variable, or agent, or clause, or truth assignment, or dummy job,
and the sum is positive with its absolute value also bounded by
$2^{1/\delta+7}(\sum_{k=2}^{1/\delta}n^{1/\delta}x^k+n)+2^{1/\delta+6}+64<1/4r$.

For the small-$x^j$-terms of jobs, we have the following lemma.

\begin{lemma}
\label{le:small x^j} For a huge job, its small-$x^j$-term ($2\le
j\le 1/\delta$) is negative, and the absolute value is bounded by
$2^{1/\delta+7}\cdot3/4x^j$. For a variable or agent job, its
small-$x^j$-term is positive and bounded by
$2^{1/\delta+7}\cdot3/8x^j$.
\end{lemma}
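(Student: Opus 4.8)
This is a finite case check, once over the five families of huge jobs (variable-agent, layer-decreasing, agent-agent, variable-clause, variable-dummy, variable-assignment) and once over the variable jobs and the agent jobs, with a one-line estimate in each case. Throughout I use the two structural identities coming from the choices $x=4n^{\delta}$ and $r=2^{3/\delta+9}n^{1+\delta}$, namely $n^{k\delta}=(x/4)^{k}$ (so that a coefficient bounded by $n^{\delta}$ is bounded by $x/4$, and $n^{k\delta}x^{l}=x^{k+l}/4^{k}$) and $r=2^{1/\delta+7}x^{1/\delta+1}$ (so the $r$-term sits, in base $x$, strictly above every $x^{k}$-term with $k\le 1/\delta$). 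Together with the standing hypothesis $n\ge 2^{3/\delta^{2}+7/\delta}$, these make the genuine numerical constants ($2^{j+6}$, $2^{8}$, $24$, $\dots$) negligible against $x^{2}$, hence against $x^{j}$ for every $j$ in the range.

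First I would settle the sign. Every huge job has processing time of the form $10^{5}r-[\,\text{positive bracket}\,]$, where the only part of the bracket that is not absorbed into the $r$-level (after deleting the $r$-term and the $x^{k}$-terms with $k\ge j$) is a negative quantity built from $-2^{1/\delta+7}(\dots)$ low powers of $x$ together with a negative constant; hence the small-$x^{j}$-term is negative. Every variable job $v_{i,k}^{\gamma}$ and every agent job $\eta_{i,j',\sigma}^{\gamma}$ has processing time $r+[\,\text{positive bracket}\,]$, so the same deletion leaves the strictly positive low-order part of the bracket, and the small-$x^{j}$-term is positive. This disposes of the sign claims at once.

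For the magnitude, in the small-$x^{j}$-term the highest surviving power of $x$ is $x^{j-1}$. Reading off Tables~\ref{fig:coe-of-link} and~\ref{fig:coe-of-relation}, every $x^{l}$-coefficient occurring in a huge job is at most $2n^{\delta}=x/2$ (the coefficients are of the form $2f_{\cdot}(i)$ or smaller), and in a variable or agent job it is at most $n^{\delta}=x/4$ (coefficients $f_{\cdot}(i)$ or smaller). Hence for a huge job the $x^{j-1}$-term contributes at most $2^{1/\delta+7}\cdot(x/2)\cdot x^{j-1}=\tfrac12\,2^{1/\delta+7}x^{j}$, and the surviving terms at positions $j-2,j-3,\dots$ form a sub-geometric tail bounded by $2^{1/\delta+7}\cdot(x/2)\cdot x^{j-1}/(x-1)<\tfrac16\,2^{1/\delta+7}x^{j}$ because $x\ge 4$; the remaining honest constants add at most a further tiny fraction of $2^{1/\delta+7}x^{j}$ by the lower bound on $n$. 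Summing, the absolute value is $<(\tfrac12+\tfrac16+\varepsilon)\,2^{1/\delta+7}x^{j}<\tfrac34\,2^{1/\delta+7}x^{j}$. The identical estimate with $x/2$ replaced by $x/4$ gives $<(\tfrac14+\tfrac1{12}+\varepsilon)\,2^{1/\delta+7}x^{j}<\tfrac38\,2^{1/\delta+7}x^{j}$ for variable and agent jobs.

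The point that needs real care, and which I expect to be the main obstacle, is the bookkeeping around the low-position terms that are written as ``constants'' or as coefficients of $x$ or $x^{2}$ but whose numerical value can be as large as $x^{1/\delta}$: the plain indices $i,k\le n$, the quantities $g_{j'}(i),\bar g_{j'}(i)$ (bounded only by $n^{j'\delta}=x^{j'}/4^{j'}$), and the shifted versions $\bar g_{1}(i)x$. Forming the small-$x^{j}$-term of a job containing such a term is not merely ``delete this summand''; one must expand it in base $x$ and discard precisely its digits at positions $\ge j$, and then verify — using $x=4n^{\delta}$ and the largeness of $n$ to rule out carries into or out of position $j$, and using $2^{1/\delta+7}<x$ for the outer scaling — that what is left stays below the $\tfrac34$ (respectively $\tfrac38$) threshold. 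Once the base-$x$ expansion of these finitely many offending terms is pinned down for each job family, each case reduces to the geometric-tail estimate above and the lemma follows; the residual facts one extracts here (that each gap is ``well-canceled'' digit by digit) are exactly what is used in Lemma~\ref{le:sympl singular well-cancel} and Lemma~\ref{le:sympl satisfy}.
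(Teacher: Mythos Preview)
Your core geometric estimate is exactly the paper's: it writes out the agent-agent job $\theta_{i,k,C_2}$, uses $f_l(i),g_1(i),\bar g_1(k)\le n^\delta$, and telescopes via $n^\delta x^l=\tfrac14 x^{l+1}$ up to $3n^\delta x^{j-1}\le\tfrac34 x^j$; the variable and agent cases are dismissed as ``similar''. So on the main bound you and the paper agree.

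Your last paragraph, however, misreads the definition and then over-engineers a fix. The small-$x^j$-term is obtained by deleting the \emph{explicitly written} monomials $x^k$ with $k\ge j$; it is not a base-$x$ truncation, and the constants $g_{j'}(i)$ and the raw index $i$ remain intact (see the worked example for $\theta_{i,3,+}$, whose small-$x^5$-term keeps $g_4(i)+g_3(i)$ unexpanded). Under that definition the bound in fact \emph{fails} for some of the pairs you flag --- e.g.\ the small-$x^2$-term of a layer-decreasing job $\theta_{i,j_0,+}$ contains $g_{j_0+1}(i)\le n^{(j_0+1)\delta}$, which exceeds $\tfrac34 x^2$ once $\delta<1/3$ and $n$ is large --- and the paper's proof makes no attempt to cover those. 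The resolution is not base-$x$ digit-chasing but simply that the lemma is overstated: in the application (Lemma~\ref{le:r and delta equal} and the well-cancellation lemmas) the bound is only ever needed, on a given machine, at levels $j$ down to the lowest nonzero $x^k$-term among the jobs present there; at those $j$ every $g_{j'}$-constant occurring has $j'\le j$ (and the raw indices $i\le n=(x/4)^{1/\delta}$ only occur on machines where the relevant level is $j=1/\delta$), so your first, simple geometric estimate already suffices. Drop the base-$x$ reinterpretation.
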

\begin{proof}
Notice that $g_j(i)\le n^{j\delta}$, while $f_j(i)x^j\ge
2^{2j}n^{j\delta}>g_j(i)$ for any $2\le j\le 1/\delta-1$. Thus for a
huge job, its small-$x^j$-term is at most
\begin{eqnarray*}
&&2^{1/\delta+7}[2\sum_{l=2}^{j-1}f_{l}(i)x^l+\bar{g}_1(k)x+g_1(i)]+2^{1/\delta+6}+2^{1/\delta+5}+32\\
&\le &
2^{1/\delta+7}[2\sum_{l=2}^{j-1}n^{\delta}x^l+n^{\delta}x+n^{\delta}+1]\\
&\le & 2^{1/\delta+7}[2\sum_{l=2}^{j-1}n^{\delta}x^l+2n^{\delta}x]\\
&\le & 2^{1/\delta+7}[2\sum_{l=3}^{j-1}n^{\delta}x^l+3n^{\delta}x^2]\\
&\le & 2^{1/\delta+7}[2\sum_{l=4}^{j-1}n^{\delta}x^l+3n^{\delta}x^3]\\
&\cdots&\\
&\le & 2^{1/\delta+7}\cdot 3n^{\delta}x^{j-1}\\
&\le & 2^{1/\delta+7}\cdot 3/4x^{j}
\end{eqnarray*}
The inequalities make use of the simple observation that
$n^{\delta}x^k=1/4x^{k+1}$ for $1\le k\le 1/\delta$. The proof for
variable or agent jobs is similar.
\end{proof}

Given the processing time of a job, we can easily determine whether
it is a huge, variable, agent, clause, or dummy job by considering
its quotient of divided by $r$, and the residual of divided by
$2^7$, and if it is a huge job, we may further determine if it is a
variable-agent, layer-decreasing, agent-agent, variable-clause,
variable-dummy, variable-assignment job. Using the above lemma, if
it is a variable, or agent, or clause, or dummy job, we can easily
expand it into the summation form and determine its symbol according
to Observation 3.

Suppose we are given the processing time of a huge jobs. Again it is
easy to determine its symbol if it is a variable-assignment,
variable-dummy or variable-clause job. If it is an agent-agent job,
then according to the fact that $g_1(i)\le n^{\delta}\le 1/4x$, we
can also expand the processing time into the summation form and
determine its symbol. If it is a variable-agent or layer-decreasing
job, we show that the processing time of such a job is unique.

Suppose $s(\theta_{i_1,j_1,+})=s(\theta_{i_2,j_2,+})$, then
according to Lemma \ref{le:small x^j} we have $j_1=j_2=j$ and
$f_{k}(i_1)=f_k(i_2)$ for $j+1\le k\le 1/\delta$ and
$g_{j+1}(i_1)+g_{j}(i_1)=g_{j+1}(i_2)+g_{j}(i_2)$. Now according to
Observation 3, we have $i_1=i_2$. Similarly if
$s(\theta_{i_1,j_1,-})=s(\theta_{i_2,j_2,-})$, we can also prove
that $i_1=i_2$, $j_1=j_2$. Obviously it is impossible that
$s(\theta_{i_1,j_1,+})=s(\theta_{i_2,j_2,-})$. The proof for
variable-agent jobs is similar.

\subsubsection{Scheduling to 3SAT}
We prove the following lemma.
\begin{lemma}
\label{le:sche->3sat'}If there is a solution for the constructed
scheduling instance in which the load of each machine is $10^5r$,
then $I_{sat}$ is satisfiable.
\end{lemma}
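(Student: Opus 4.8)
The plan is to follow the skeleton of the special case $\delta=1/2$ (Lemmas~\ref{le:sympl terms}--\ref{le:sympl satisfy}), now with an extra induction over the $1/\delta-1$ layers of agent jobs. Fix a schedule in which every machine is loaded to exactly $10^5r$. First I would argue, as in the special case, that the quotient of a job's processing time by $r$ and its residue modulo $2^7$ already reveal whether it is a huge job or a variable/agent/clause/truth-assignment/dummy job, and in the huge case which family it belongs to (variable-agent, layer-decreasing, agent-agent, variable-clause, variable-dummy, or variable-assignment); this uses Lemma~\ref{le:small x^j} together with the fact that the small-$r$-term of a huge job is negative of absolute value below $r/2$ while that of every other job is positive and below $r/4$. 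Since each huge job fills a whole machine and the total load is $(2n/\delta+5n)\cdot 10^5r$, a counting argument on the $r$-terms (the gap left by each huge job has a prescribed $r$-term, and the non-huge jobs carry their prescribed multiples of $r$) forces every gap to be filled by precisely the prescribed multiset of job types --- the analogue of the Observation preceding Lemma~\ref{le:sympl r and delta equal}, now also asserting that a layer-decreasing gap takes two agent jobs and a variable-agent gap takes one variable and one agent job. Consequently on each machine the $r$-terms sum to $10^5r$, and by Lemma~\ref{le:small x^j} for every $2\le j\le 1/\delta$ the $x^j$-terms and small-$x^j$-terms sum to $0$.

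Next I would establish the cancellation of the $x^k$-terms from the top down. Reading off Tables~\ref{fig:coe-of-link} and~\ref{fig:coe-of-relation}, call a gap regular at level $k$ if its $x^k$-coefficient is $2^{1/\delta+7}\cdot 2\zeta$ and singular if it is $2^{1/\delta+7}\zeta$, and well-canceled at level $k$ if the jobs filling it carry $x^k$-coefficients $\{\zeta,0\}$ (singular) or $\{\zeta,\zeta\}$ (regular). Starting at $k=1/\delta$ and using Observation~4 ($|\{i:f_k(i)=\tau\}|=|\{i:\bar f_k(i)=\tau\}|=n^{1-\delta}$) to count that the supply of variable and agent jobs with a given $x^k$-coefficient is exhausted by the gaps with the matching coefficient --- exactly the mechanism of Lemma~\ref{le:sympl singular well-cancel} --- one gets, by induction downward in $k$ and within each level from largest $\zeta$ to smallest, that every gap is well-canceled at every level $2\le k\le 1/\delta$.

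Then I would prove the analogue of Lemma~\ref{le:sympl satisfy}: calling $v_{i,k}$ and $\eta_{i,j,\sigma}$ the jobs of index-level $i$, I would show by induction on $i=1,\dots,n$ that all index-level-$i$ jobs lie on the huge jobs of index $i$ and that each such huge job is satisfied. The step copies the special case but with a nested induction over the layer $j=1/\delta-1,\dots,1$: the variable-agent gap $\theta_{\eta,i,+}$ is filled by some $v_{i',k}$ and some $\eta_{i'',1/\delta-1,+}$ with $f_{1/\delta}(i')=f_{1/\delta}(i'')=f_{1/\delta}(i)$ and $g_{1/\delta-1}(i'')+i'=g_{1/\delta-1}(i)+i$, so Observation~3 gives $i'=i''=i$; the layer-decreasing gap $\theta_{i,j,+}$, once the $f_\ell$ for $\ell>j$ are pinned, is filled by two agent jobs whose indices satisfy $g_{j+1}(\cdot)+g_j(\cdot)=g_{j+1}(i)+g_j(i)$, again forcing index $i$ by Observation~3; the agent-agent gap $\theta_{i,k,C_2}$, where the pair $(f_\ell,g_1)$ (resp.\ $(\bar f_\ell,\bar g_1)$) determines the index, is filled by $\eta_{i,1,+}$ and $\eta_{k,1,-}$; variable-clause and variable-dummy gaps go as before. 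This yields the scheduling table for the variable jobs. With all indices fixed, the $r$-terms pin down the superscripts exactly as in the special case: agent-agent, variable-agent and layer-decreasing gaps each take one $T$ and one $F$; a variable-assignment gap takes $(T,T,T)$ or $(F,F,F)$; a variable-clause gap takes $(T,T)$, $(F,F)$ or $(F,T)$. Setting $z_i$ true iff $v_{i,1}^F,v_{i,2}^F,v_{i,3}^T,v_{i,4}^T$ are the variable jobs used on variable-assignment machines, and arguing verbatim as in Subsection~\ref{sec:overview}, every clause of $C_1$ and $C_2$ is satisfied, so $I_{sat}$ is satisfiable.

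I expect the main obstacle to be the third step: the layered agent construction forces the well-canceled and satisfied properties to be propagated simultaneously downward through the layers $j$ and forward through the index-levels $i$, and the crux is checking that at layer $j$ the equalities $f_\ell(i')=f_\ell(i'')$ for $\ell>j$ inherited from the outer layers, together with the single additive relation in $g_{j+1},g_j$ read off from the huge job, really collapse to $i'=i''=i$ via Observation~3. This in turn needs the coefficient bookkeeping of Tables~\ref{fig:coe-of-link}--\ref{fig:coe-of-relation} and the inequalities $n^\delta x^k=\tfrac14 x^{k+1}$ underlying Lemma~\ref{le:small x^j} to guarantee that no lower $x^\ell$-term with $\ell\le j$ can leak into the comparison.
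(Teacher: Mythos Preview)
Your proposal is correct and follows essentially the same route as the paper: first pin down the multiset of job types on each machine via the $r$-term/residue analysis (the paper's Lemma~\ref{le:sche-3sat'-step 1} and Table~5), then prove well-cancellation of the $x^k$-terms level by level (Lemmas~\ref{le:singular well-cancel}--\ref{le:regular well-cancel}), then run the index-level induction to get every machine satisfied (Lemma~\ref{le:each satisfy}), read off the superscripts from the $r$-terms (Lemma~\ref{le:truth benevolent}), and finally chain the one-true/one-false property through the layers to satisfy every $C_2$-clause. Two cosmetic differences worth noting: the paper runs the $\zeta$-induction from smallest to largest (not largest to smallest---both work by the same counting), and it does \emph{not} need a nested induction over the layer $j$ inside the index-level step---once all jobs of index $<i_0$ are placed, each machine $\theta_{i_0,j,\sigma}$ is handled independently using Observation~3, so your inner loop is harmless but superfluous.
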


Let $Sol^*$ be an optimal solution, it can be easily seen that there
is a huge job on each machine, leaving a gap if the load of each
machine is $10^5r$. We may use the symbol of a huge job to denote
the corresponding gap and the machine it is scheduled on.

We divide jobs into groups based on their processing times.
According to the previous subsection, we know the processing time of
a variable or agent job is either in $[r,5/4r]$ or in $[3r,13/4r]$.
Let $G_0$ be the set of them. The processing time of $a_i$ or $b_i$
belongs to $[11r,12.5r]$, of $c_i$ or $d_i$ belongs to
$[101r,102.5r]$. Let $G_1=A\cup B$, $G_2=C\cup D$.

\begin{lemma}
\label{le:sche-3sat'-step 1} In $Sol^*$, besides the huge job, the
other jobs on a machine are:
\begin{itemize}
\item{The variable-agent, or layer-decreasing, or agent-agent gap is filled up by two jobs of $G_0$.}
\item{The variable-clause gap is filled up by one clause job, one dummy job and one job of $G_0$.}
\item{The variable-dummy gap is filled up by one dummy job and one job of $G_0$.}
\item{The variable-assignment gap is filled up by one job of $G_1=A\cup B$,
one job of $G_2=C\cup D$, and one job of $G_0$.}
\end{itemize}
\end{lemma}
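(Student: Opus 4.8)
The goal is to prove Lemma~\ref{le:sche-3sat'-step 1}, which identifies exactly which ``small'' jobs fill each type of gap in an optimal schedule $Sol^*$ of makespan $10^5r$. The plan is to argue by two successive coarsenings of the processing times: first look only at the $r$-term (the coefficient of $r$ after dividing by $r$ and rounding appropriately, using that every small-$r$-term has absolute value $<1/2r$ by the bound established just above Lemma~\ref{le:small x^j}), and then refine using the residues modulo $2^7$. Since there is exactly one huge job per machine and the loads are all exactly $10^5r$, the non-huge jobs on a machine must have $r$-terms summing to the complement of the huge job's $r$-term modulo the $1/2r$ slack — so the $r$-coefficients must match exactly. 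This reduces the lemma to a purely arithmetic/counting statement about which multisets of $r$-coefficients from $\{v,\eta\text{-jobs}: 1,3\}$, $\{a,b: 11,12\}$, $\{c,d: 101,102\}$, $\{u: 10002,10004\}$, $\{\text{dummy}: 1000,1002\}$ can sum to each huge job's gap value ($4r$, $1003r$, $11005r$, $115r$, etc.).

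First I would tabulate, for each of the five huge-job families, the $r$-term of its gap: variable-agent/layer-decreasing/agent-agent gaps all have $r$-term $4r$; variable-clause gaps have $11005r$; variable-dummy gaps have $1003r$; variable-assignment gaps have $115r$. Then for each target I would enumerate the admissible combinations. For a gap of $4r$: the only way to reach $4r$ using the available $r$-coefficients is $1+3$ (two $G_0$ jobs, one ``true''-type and one ``false''-type), since $11,101,1000,\ldots$ already overshoot and $1+1=2\neq 4$, $3+3=6\neq 4$ unless a third job is used, which a counting argument on total number of jobs vs.\ machines rules out; here I would invoke that the total processing time is exactly (number of machines)$\cdot 10^5r$, so no machine can carry ``extra'' jobs. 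For $1003r$: the decomposition must be $1000+3$ or $1002+1$, i.e.\ one dummy job plus one $G_0$ job. For $11005r$: it must be $10002+1002+1$ or $10004+1000+1$ or $10002+1000+3$, i.e.\ one clause job, one dummy job, one $G_0$ job. For $115r$: the only decomposition into available values is $11+101+3$ or $12+101+1$ or $11+102+1$ etc., namely one job of $A\cup B$, one job of $C\cup D$, and one $G_0$ job. Establishing that these are the \emph{only} solutions is the technical heart; I expect to need the residue-mod-$2^7$ refinement to kill spurious combinations such as two dummy jobs on one machine (their $r$-sum $2000r$ or $2002r$ or $2004r$ matches no gap) and to confirm feasibility counts — there are exactly $9n$ (resp.\ $2/\delta\,n+5n$ in the general case) huge jobs of the stated types, exactly $2n$ dummy jobs, $n$ clause-job triples, $8n$ jobs in $A\cup B\cup C\cup D$, etc., so the counts match the forced assignment precisely.

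The main obstacle I anticipate is not any single decomposition but the \emph{global} consistency: I must rule out ``mixed'' schedules where, say, one machine borrows a dummy job that a counting argument says should be elsewhere. The clean way to handle this is a pigeonhole/double-counting argument: each of the above gap types, once I show its \emph{only possible} fillings are as claimed, consumes a fixed number of dummy/clause/$A\!\cup\!B$/$C\!\cup\!D$/$G_0$ jobs; summing over all huge jobs gives exactly the total number of each small-job type constructed, so every inequality in the counting is tight and the fillings are forced uniquely for every machine. I would present this as: (i) from the $r$-term argument, each machine's small jobs lie in a short list of candidate multisets; (ii) from the construction, the global supply of each job type exactly equals the minimum demand forced by (i) summed over all machines; hence (iii) every machine realizes the extremal multiset, which is the claimed one. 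The residue-mod-$2^7$ observation is then only needed later (for Lemma~\ref{le:sympl satisfy}) to pin down \emph{which} index-$i$ job sits in each slot; for the present lemma the $r$-term plus the counting argument suffice.
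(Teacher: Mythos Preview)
Your approach is correct and ultimately rests on the same two ingredients the paper uses: coarse size/$r$-term comparisons plus a tight global count of each job type. The organization differs, however, and the paper's is worth noting. Rather than enumerate, for each gap value, all multisets of $r$-coefficients summing to it, the paper processes job \emph{types} from largest to smallest: clause jobs (size ${>}10000r$) can only sit in variable-clause gaps and at most one per machine, so with $n$ clause jobs and $n$ such machines they are placed exactly; the residual gap then forces exactly one dummy job on each variable-clause and variable-dummy machine by the same ``fits only here, at most one, and counts match'' argument; likewise one $C\cup D$ and one $A\cup B$ job on each variable-assignment machine. Only after these forced placements does the paper turn to $G_0$ and invoke the counting argument you describe.

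This sequential ordering buys two things your enumeration has to work harder for. First, it sidesteps the circularity in your opening claim that ``the $r$-coefficients must match exactly'': that statement needs an a priori bound on the number of jobs per machine (else small-$r$-terms could accumulate past $r$), whereas the paper's size-comparison steps never appeal to exact $r$-term matching until the final $G_0$ count, by which point the number of remaining jobs per machine is already pinned down. Second, it avoids the bookkeeping errors that creep into your enumeration (e.g.\ $12+101+1=114$ and $11+102+1=114$, not $115$; the only valid triples are $11+101+3$ and $12+102+1$). None of this breaks your plan---your step (ii)/(iii) counting argument does close the gaps---but the paper's largest-to-smallest sweep is both shorter and less fragile.
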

\begin{proof}
See the following table (Table~\ref{fig:gaps size}) as an overview
of gaps on machines (here $\Theta_0$ denotes the set of
variable-agent, layer-decreasing and agent-agent gaps).

\begin{table}[!hbp]
\begin{tabular}{|c|c|c|c|c|}
\hline Machines(Gaps) &$\Theta_0$ &
Variable-clause & Variable-dummy & Variable-assignment\\
\hline
Size of Gaps & $(4r,5r)$ & $(11005r,11006r)$ & $(1003r,1004r)$ & $(115r,116r)$\\
\hline
\end{tabular}
\label{fig:gaps size}
\end{table}

Consider clause jobs. According to the table they can only be used
to fill variable-clause gaps. Meanwhile each variable-clause machine
(gap) could accept at most one clause job. Notice that there are $n$
clause jobs and $n$ variable-clause machines, thus there is one
clause job on every variable-clause machine. By further subtracting
the processing time of the clause job from the gap, the remaining
gap of a variable-clause machine belongs to $[1000r,1004r]$.

Consider dummy jobs. According to the current gaps, they can only be
scheduled on variable-clause or variable-dummy machines, and each of
these machines could accept at most one dummy job. Again notice that
there are $2n$ such machines and $2n$ dummy jobs, there is one dummy
job on every variable-clause and variable-dummy machine. The current
gap of a variable-clause machine is in $[0,4r]$, of a variable-dummy
machine is in $[r,4r]$. Using the same argument we can show that
there is one job of $C\cup D$ and one job of $A\cup B$ on each
variable-assignment machine.

Consider variable and agent jobs. Each machine of $\Theta_0$ has a
gap in $(4r,5r)$, implying that there are at least two variable or
agent jobs on it. The current gap of a variable-assignment machine
is at least $115r-(102r+2^7n+12r+2^7n+64+64)\ge r-2^{9}n> 1/2r$,
thus there is at least one variable or agent job on it. Similarly
there is at least one variable or agent job on a variable-dummy
machine.

Consider each variable-clause machine. As we have determined, there
are a clause and a dummy job on it. We check their total processing
times more carefully. By subtracting the huge job in from $10^5r$,
the gap is in $[11005r,(11005+1/2)r]$. If the clause job on this
machine is a true job, with a processing time over $10004r$, then
the dummy job on it can only be of $1000r$, otherwise the total
processing time of the two jobs is over $11006r$, which is a
contradiction. Thus, the total processing time of the two jobs is at
most $11004r+2^{1/\delta+9}n+1000r\le (11004+1/2)r$, which means
there is at least one variable or agent job on this machine.
Otherwise, the clause job on this machine is a false job with a
processing time at most $10002r+2^{1/\delta+9}n\le (10002+1/2)r$.
Adding a dummy job, their total processing time is at most
$(11004+1/2)r$, and again we can see that there is at least one
variable or agent job on this machine.

The above analysis shows that there is at least one job of $G_0$ on
a variable-clause, variable-dummy and variable-assignment machine,
and at least two jobs of $G_0$ on each machine of $\Theta_0$,
requiring $4n+4/\delta n$ jobs, which equals to $|G_0|$. Thus the
lemma follows directly.
\end{proof}

Given the above lemma, we consider the residuals of each job divided
by $2^{1/\delta+7}$. The fact that the three or four residuals on
each machine should add up to $0$ implies the following table (Table
5).

\begin{table}[hbp]
\begin{center}
\begin{tabular}{|c|c|c|c|}
\hline $\theta_{\eta,i,+}$ & $v_{i',2}$ & $\eta_{i'',1/\delta-1,+}$ & $\backslash$\\
\hline
$\theta_{\eta,i,-}$ & $v_{i',4}$ & $\eta_{i'',1/\delta-1,-}$ & $\backslash$\\
\hline
$\theta_{i,j,+}$ & $\eta_{i',j,+}$ & $\eta_{i'',j+1,+}$ & $\backslash$\\
\hline
$\theta_{i,j,-}$ & $\eta_{i',j,-}$ & $\eta_{i'',j+1,-}$ & $\backslash$\\
\hline
$\theta_{i,k,C_2}$ & $\eta_{i',1,+}$ & $\eta_{i'',1,-}$ & $\backslash$\\
\hline
$\theta_{i,k,+,C_1}$ & $u_{i'}$ & $v_{i'',1}$ & dummy\\
\hline
$\theta_{i,k,-,C_1}$ & $u_{i'}$ & $v_{i'',3}$ & dummy\\
\hline
$\theta_{i,+}$ & $v_{i',1}$ & dummy & $\backslash$\\
\hline
$\theta_{i,-}$ & $v_{i',3}$ & dummy & $\backslash$\\
\hline
$\theta_{i,a,c}$ & $v_{i_1,1}$ & $a_{i_2}$ & $c_{i_3}$\\
\hline
$\theta_{i,b,d}$ & $v_{i_1,2}$ & $b_{i_2}$ & $d_{i_3}$\\
\hline
$\theta_{i,a,d}$ & $v_{i_1,3}$ & $a_{i_2}$ & $d_{i_3}$\\
\hline
$\theta_{i,b,c}$ & $v_{i_1,4}$ & $a_{i_2}$ & $d_{i_3}$\\
\hline
\end{tabular}
\label{fig:structure}\caption{Structure}
\end{center}
\end{table}

The next step is to characterize the indices, i.e., we need to prove
that for each row, $i=i'=i''$ (or $i=i_1=i_2=i_3$). If the indices
equal for jobs on a machine, this machine (gap) is called satisfied.
The above table, combined with Lemma~\ref{le:small x^j}, implies the
following lemma.

\begin{lemma}
\label{le:r and delta equal} For jobs on each machine, their
$r$-terms add up to $10^5r$, $x^k$-terms ($2\le k\le 1/\delta-1$)
add up to $0$.
\end{lemma}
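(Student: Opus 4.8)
\begin{prf}
The plan is to recover the coefficients of $r$ and of the powers of $x$ in the loads one ``scale'' at a time, from the coarsest down to the finest, combining an integrality fact with an interval estimate at each step; the $\delta=\tfrac12$ prototype of the whole argument is Lemma~\ref{le:sympl r and delta equal}. Throughout write $M_k:=2^{1/\delta+7}x^k$, and note that $r=2^{3/\delta+9}n^{1+\delta}=M_{1/\delta+1}$, so $r$ may be read as the ``$x^{1/\delta+1}$-term''.

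First I would fix the combinatorial picture. The construction has exactly $2/\delta\,n+5n$ huge jobs and exactly $2/\delta\,n+5n$ machines, and every huge job has processing time larger than $\tfrac12\cdot10^5r$, so no machine can carry two of them; hence in a schedule of makespan $10^5r$ each machine carries exactly one huge job and has load exactly $10^5r$. By Lemma~\ref{le:sche-3sat'-step 1} the huge job is joined by at most three further jobs, and by Table~\ref{fig:structure} (obtained from the residue count modulo $2^{1/\delta+7}$) we know precisely which jobs these are --- in particular the layer of each agent job, hence exactly which powers of $x$ can occur in the loads on that machine.

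Next I would run the peeling. For the $r$-scale: write each processing time as an integer multiple of $r$ plus its small-$r$-term; the small-$r$-term bounds recorded just before Lemma~\ref{le:small x^j} give that the huge job contributes a value in $(-\tfrac12r,0)$ and each of the $\le3$ other jobs a value in $[0,\tfrac14r)$, so the small-$r$-terms on a machine sum to a point of $(-\tfrac12r,\tfrac34r)$; since the load $10^5r$ is a multiple of $r$, that sum is forced to the unique multiple of $r$ in the interval, namely $0$. Thus on every machine the $r$-terms add to $10^5r$ and the small-$r$-terms add to $0$. Now iterate this downward over $k=1/\delta,1/\delta-1,\dots,2$, keeping the invariant ``the small-$x^{k+1}$-terms add to $0$ on every machine'' (base case $k+1=1/\delta$ being the statement just proved). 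Using small-$x^{k+1}$-term $=$ $x^k$-term $+$ small-$x^k$-term, the $x^k$-term of any job is an integer multiple of $M_k$, so $\sum(\text{small-}x^k\text{-term})=-\sum(x^k\text{-term})$ is a multiple of $M_k$; meanwhile Lemma~\ref{le:small x^j} confines the small-$x^k$-term of the huge job to $(-\tfrac34M_k,0)$ and that of each variable or agent job to $(0,\tfrac38M_k)$, and a direct estimate confines it to $[0,\tfrac14M_k)$ for every clause, dummy and truth-assignment job (the latter occurring only on machines whose single $x$-power is $x^{1/\delta}$). Summing the $\le3$ non-huge jobs on each of the handful of machine types, $\sum(\text{small-}x^k\text{-term})$ lands in an open interval about $0$ of radius $<M_k$, hence equals $0$, so the $x^k$-terms add to $0$ on the machine and the invariant descends. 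Running $k$ from $1/\delta$ to $2$ yields the cancellation of the $x^k$-terms for all $2\le k\le1/\delta$, in particular for $2\le k\le1/\delta-1$.

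I expect the main obstacle to be precisely the interval estimate inside the inductive step: one must check, for each machine type and each power of $x$ actually present there, that $\sum(\text{small-}x^k\text{-term})$ has absolute value $<M_k$. This is where the choices $x=4n^\delta$, $r=2^{3/\delta+9}n^{1+\delta}$ and ``$n$ large'' are used --- they force $n^\delta x^k=\tfrac14x^{k+1}$ (so consecutive scales are well separated and the carry from below into scale $k$ stays well under $M_k$, which is exactly the bound $\tfrac34M_k$ recorded for huge jobs in Lemma~\ref{le:small x^j}), make every factor $g_\ell(i)\le n^{\ell\delta}$ at most $\tfrac1{16}x^\ell$, and keep the stray terms linear in a variable index (the $2^{1/\delta+7}i$, $2^{1/\delta+9}i$, $2^{1/\delta+7}k$, $2^8i$) confined to variable jobs and to the variable-assignment, variable-dummy, variable-clause and variable-agent huge jobs, all placed on machines whose only $x$-power is $x^{1/\delta}$, so they never leak below that scale. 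Lemma~\ref{le:small x^j} already packages the worst case, so with it in hand the remaining verification is a short finite check over the machine types.
\end{prf}
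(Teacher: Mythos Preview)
Your proposal is correct and follows essentially the same approach as the paper, which states only that the lemma ``follows from the above table combined with Lemma~\ref{le:small x^j}''; you have simply unpacked that one-liner into the natural scale-by-scale peeling argument (and the observation $r=M_{1/\delta+1}$ is a clean way to unify the $r$-step with the $x^k$-steps). Your interval checks and the remark that clause, dummy and truth-assignment jobs appear only on machines whose sole $x$-power is $x^{1/\delta}$ are exactly the verifications the paper leaves implicit.
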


The $x^k$-term of each huge job is negative and should be canceled
by the corresponding terms from other jobs. Similar as the proof for
the special case when $\delta=1/2$, we would divide the $x^k$-terms
($2\le k\le 1/\delta$) of each huge job (gap) into singular terms
and regular terms. Notice that here we use the notion of singular
(regular) terms instead of singular (regular) gaps because when
$1/\delta>2$ we need to consider multiple terms of a gap.

We define singular (regular) terms in the following way. The
$x^{1/\delta}$-terms of variable-clause, variable-dummy and
variable-assignment gaps are singular terms.

For other gaps, see Table~\ref{fig:regular}. The terms marked with
$*$ are singular term (e.g., the $x^j$-term of
$\theta_{i,j-1,\sigma}$), all the other terms are regular terms.

\begin{table}[hbp]
\begin{tabular}{|c|c|c|c|c|c|c|c|c|}
\hline Gaps/Coefficients &$2^{1/\delta+7}x^{1/\delta}$ &
$2^{1/\delta+7}x^{1/\delta-1}$ & $\cdots$ & $2^{1/\delta+7}x^{j+1}$
&$2^{1/\delta+7}x^{j}$ &$2^{1/\delta+7}x^{j-1}$&$\cdots$& $2^{1/\delta+7}x^{2}$ \\
\hline
$\theta_{\eta,i,+}$ & $2f_{1/\delta}(i)$ & 0 & $\cdots$ & 0& 0&0&$\cdots$&0 \\
\hline
$\theta_{\eta,i,-}$ & $2\bar{f}_{1/\delta}(i)$ & 0 & $\cdots$ & 0& 0&0&$\cdots$&0 \\
\hline
$\theta_{i,j-1,+}$ & $2f_{1/\delta}(i)$ & $2f_{1/\delta-1}(i)$ & $\cdots$ & $2f_{j+1}(i)$& $f_j(i)^*$&0&$\cdots$&0 \\
\hline
$\theta_{i,j-1,-}$ & $2\bar{f}_{1/\delta}(i)$ & $2\bar{f}_{1/\delta-1}(i)$ & $\cdots$ & $2\bar{f}_{j+1}(i)$& $\bar{f}_{j}(i)^*$&0&$\cdots$&0 \\
\hline $\theta_{i,k,C_2}$ & $2f_{1/\delta}(i)$ &
$2f_{1/\delta-1}(i)$ & $\cdots$ & $2f_{j+1}(i)$&
$2f_{j}(i)$&$2f_{j-1}(i)$
&$\cdots$&$2f_{2}(i)$ \\
\hline
\end{tabular}
\caption{Singular and regular terms}\label{fig:regular}
\end{table}

A singular term of a gap, say, $2^{1/\delta+7}\tau x^j$ for $1\le
\tau\le n^{\delta}$, is called well-canceled, if it is filled up by
one job with the $x^j$-term of $2^{1/\delta+7}\tau x^j$ and other
jobs with the $x^j$-terms of $0$. A regular term, say,
$2^{1/\delta+7}\cdot 2\tau x^j$ for $1\le \tau\le n^{\delta}$, is
called well-canceled, if it is filled up by two jobs whose
$x^j$-terms are $2^{1/\delta+7}\tau x^j$.

\begin{lemma}
\label{le:singular well-cancel}Every singular term is well-canceled.
\end{lemma}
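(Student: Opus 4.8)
The plan is to carry out the same ``local'' argument that establishes the first part of Lemma~\ref{le:sympl singular well-cancel} in the $\delta=1/2$ case: whereas a regular term must be handled by a global counting argument (in a subsequent lemma), a singular term sits on a machine whose set of non-huge jobs has already been completely determined by Lemma~\ref{le:sche-3sat'-step 1} and Table~\ref{fig:structure}, so a short case analysis over the few gap types that carry singular terms suffices.

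First I would fix a gap $G$ with a singular $x^j$-term $2^{1/\delta+7}\tau x^j$, $1\le\tau\le n^\delta$, and read off from Table~\ref{fig:structure} the two or three non-huge jobs on its machine. By Lemma~\ref{le:r and delta equal} the $x^j$-coefficients of \emph{all} jobs on that machine sum to zero, so, since $10^5r$ contributes nothing, the non-huge jobs' $x^j$-coefficients sum to the $x^j$-coefficient of $G$ itself, namely $2^{1/\delta+7}\tau$. The crux is then the elementary observation, read straight off the processing-time formulas, that on such a machine at most one non-huge job has a nonzero $x^j$-coefficient; that job is therefore forced to carry the whole of $2^{1/\delta+7}\tau x^j$, while every other non-huge job has $x^j$-coefficient $0$, which is exactly the definition of well-canceledness.

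Concretely the analysis splits into two families. For the variable-clause, variable-dummy and variable-assignment gaps the singular term is the $x^{1/\delta}$-term, and by Table~\ref{fig:structure} the non-huge jobs consist of a single variable job together with, respectively, a clause job and a dummy job, or just a dummy job, or one job of $A\cup B$ and one of $C\cup D$; since clause, dummy and truth-assignment jobs have $x^k$-coefficient $0$ for every $k\ge 2$, only the variable job can contribute to the $x^{1/\delta}$-coefficient, so it must equal $2^{1/\delta+7}\tau x^{1/\delta}$. For the layer-decreasing gap $\theta_{i,j-1,\sigma}$ (so $2\le j\le 1/\delta-1$) the singular term is the $x^j$-term; the two non-huge jobs are the agent jobs $\eta_{i',j-1,\sigma}$ and $\eta_{i'',j,\sigma}$, and by the definition of agent jobs the smallest power of $x$ occurring in $\eta_{i'',j,\sigma}$ is $x^{j+1}$, so $\eta_{i'',j,\sigma}$ has $x^j$-coefficient $0$ and $\eta_{i',j-1,\sigma}$ must carry $2^{1/\delta+7}\tau x^j$. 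The boundary case $j=1$ (the jobs $\theta_{i,1,\pm}$ and $\eta_{i,1,-}$) is identical up to the harmless substitution $\bar g_1(\cdot)\mapsto\bar g_1(\cdot)x$ already present in the definitions, which only moves a contribution into the $x^1$-term and so never disturbs the $x^2$-coefficient bookkeeping.

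The step I expect to require the most care is justifying that ``equating $x^j$-coefficients'' is legitimate: the lower-order pieces of the processing times --- e.g.\ the $2^{1/\delta+9}j$ in a clause job or the $2^{1/\delta+7}g_j(i)$ and $2^{j+6}$ pieces in an agent job --- are numerically much larger than $x^j$, so the conclusion cannot be drawn from a naive base-$x$ digit reading but only through Lemma~\ref{le:r and delta equal}; and for the $x^{1/\delta}$-singular gaps, which fall outside the range $2\le k\le 1/\delta-1$ covered there, one additionally invokes the companion statement about the $x^{1/\delta}$- and small-$x^{1/\delta}$-terms (the analogue of Lemma~\ref{le:sympl r and delta equal}), or, in the variable-dummy case, argues directly by matching $r$-terms together with $i,i'\le n<x^{1/\delta}$. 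All the underlying size separations --- between the $r$-scale, the successive $x^k$-scales and the constant scale --- are supplied by Lemma~\ref{le:small x^j} and the choices $x=4n^\delta$ and $r=2^{3/\delta+9}n^{1+\delta}$.
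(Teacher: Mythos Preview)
Your proposal is correct and matches the paper's approach: the paper's entire proof is the sentence ``The proof is straightforward,'' and what you have written is exactly the straightforward case analysis (via Table~\ref{fig:structure} and Lemma~\ref{le:r and delta equal}/Lemma~\ref{le:small x^j}) that this sentence is pointing to. Two minor remarks: your ``boundary case $j=1$'' is mis-indexed relative to your own parameterization $2\le j\le 1/\delta-1$ (you mean the layer-decreasing gap $\theta_{i,1,\sigma}$, i.e.\ $j=2$ in your notation); and your observation that Lemma~\ref{le:r and delta equal} as stated omits $k=1/\delta$ is well taken---the needed extension follows from Lemma~\ref{le:small x^j} exactly as you describe.
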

The proof is straightforward.

\begin{lemma}
\label{le:regular well-cancel}Every regular term is well-canceled.
\end{lemma}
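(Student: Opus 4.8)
The plan is to prove the lemma by induction on the size of the coefficient, carried out separately at each level $j\in\{2,\ldots,1/\delta\}$, in exactly the spirit of the counting argument used for the special case $\delta=1/2$ in the proof of Lemma~\ref{le:sympl singular well-cancel}. First I would record what is already available: by Lemma~\ref{le:singular well-cancel} every singular term is well-canceled; by Lemma~\ref{le:r and delta equal} the $x^k$-terms of the jobs on any one machine sum to $0$ for every $2\le k\le 1/\delta$; and by Lemma~\ref{le:sche-3sat'-step 1} together with Table~\ref{fig:structure} the non-huge jobs on each machine are determined up to kind, while among \emph{all} jobs only variable and agent jobs carry a nonzero $x^k$-term with $k\ge2$. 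Hence on a machine whose huge job is a variable-clause, variable-dummy or variable-assignment job the one $G_0$-job present must, at every level, carry exactly the coefficient of that huge job, and on a machine whose huge job lies in $\Theta_0$ the two $G_0$-jobs present must, at every level, have coefficients summing to that of the huge job. The structural point I would extract from Table~\ref{fig:structure} is that on the $\Theta_0$-machines that actually carry a regular term at level $j$ --- the agent-agent machines, the layer-decreasing machines $\theta_{i,j',\sigma}$ with $j'\le j-2$, and (only when $j=1/\delta$) the variable-agent machines --- both $G_0$-jobs have strictly positive $x^j$-coefficient, because each is either a variable job (possible only at $j=1/\delta$) or an agent job of layer at most $j-1$, and such a job's $x^j$-coefficient is one of the values $f_j(\cdot),\bar f_j(\cdot)$, hence at least $1$.

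Fix $j$. For $1\le\ell\le n^{\delta}$ let $A^{(\ell)}$ be the number of variable/agent jobs of $x^j$-coefficient $\ell$, let $S^{(\ell)}$ be the number of singular terms at level $j$ of value $\ell$, and let $R^{(\ell)}$ be the number of regular terms at level $j$ of value $2\ell$. Using Observation~4 (every value of $f_k$ and of $\bar f_k$ is attained exactly $n^{1-\delta}$ times), the fact that every literal occurs exactly once among the clauses of $C_2$, and the explicit tally of the huge jobs, one verifies that $A^{(\ell)}$, $S^{(\ell)}$, $R^{(\ell)}$ are each independent of $\ell$ and that $A^{(\ell)}=S^{(\ell)}+2R^{(\ell)}$. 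Now I would induct on $\tau$ from $1$ to $n^{\delta}$, the claim at stage $\tau$ being that every regular term at level $j$ of value $2\tau$ is well-canceled. For $\ell<\tau$, Lemma~\ref{le:singular well-cancel} and the inductive hypothesis give that the $S^{(\ell)}$ singular terms of value $\ell$ and the $R^{(\ell)}$ regular terms of value $2\ell$ are well-canceled, so they occupy, on pairwise distinct machines, exactly $S^{(\ell)}+2R^{(\ell)}=A^{(\ell)}$ jobs of $x^j$-coefficient $\ell$ --- that is, all of them --- so that no job of $x^j$-coefficient $<\tau$ can lie anywhere else. Given a machine $M$ carrying a regular term of value $2\tau$ at level $j$, no $G_0$-job on $M$ can then have $x^j$-coefficient $<\tau$ (since $2\tau\notin\{\ell,2\ell:\ell<\tau\}$, $M$ is neither a coefficient-$\ell$-singular nor a coefficient-$2\ell$-regular machine); by the structural observation the two $G_0$-jobs on $M$ have $x^j$-coefficient $\ge1$, hence $\ge\tau$; and since the two sum to $2\tau$, each equals $\tau$. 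This is exactly the assertion that the regular term is well-canceled, which closes the induction; the base case $\tau=1$ is the degenerate instance of this step.

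The write-up naturally splits into the top level $j=1/\delta$, where variable jobs also participate and the variable-clause/dummy/assignment machines enter the count, and the lower levels $j<1/\delta$, where only agent jobs of suitable layer and only the agent-agent and layer-decreasing machines occur; the argument is the same in both. I expect the genuine obstacle to be the exact bookkeeping behind $A^{(\ell)}=S^{(\ell)}+2R^{(\ell)}$: a single agent job $\eta_{i,j',\sigma}$ carries nonzero $x^k$-coefficients for \emph{all} $k\in\{j'+1,\ldots,1/\delta\}$, so it is counted in $A^{(\ell)}$ at several levels at once, and the supply/demand equality has to hold exactly, not merely up to lower-order terms --- which is precisely where the uniformity from Observation~4 and the precisely chosen numbers of huge jobs of each type are used. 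Everything else is routine arithmetic with the coefficients tabulated in Table~\ref{fig:regular} and in the construction; the further sharpening $i=i'=i''$ on each machine is not needed here and is deferred to the following lemma.
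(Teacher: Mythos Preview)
Your proposal is correct and follows essentially the same approach as the paper's own proof: induction on the coefficient value $\tau$ carried out level by level, with a counting argument (your identity $A^{(\ell)}=S^{(\ell)}+2R^{(\ell)}$, which the paper verifies with the explicit numbers $4n_1(1/\delta+1)=6n_1+2(2n_1/\delta-n_1)$ at the top level and analogously below) showing that all $G_0$-jobs of smaller coefficient are already consumed, together with the structural observation from Table~\ref{fig:structure} that on a machine carrying a regular term at level $j$ both $G_0$-jobs have strictly positive $x^j$-coefficient. Your write-up is in fact a little more explicit than the paper's about where that positivity comes from (layer $\le j-1$ agent jobs), but the argument is the same.
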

Before we prove this lemma, we first count the number of variable
and agent jobs whose $x^k$-term is $2^{1/\delta+7}\cdot \tau_kx^k$
where $2\le k\le 1/\delta$ and $1\le \tau_k\le n^{\delta}$. For
simplicity we call them as $\tau_k$-jobs. According to Observation
4,
$|\{i|f_k(i)=\tau_k\}|=|\{i|\bar{f}_k(i)=\tau_k\}|=n^{1-\delta}=n_1$,
thus we have Table \ref{fig:number variable}.

\begin{table}[!hbp]
\begin{tabular}{|c|c|c|c|c|c|c|}
\hline Jobs/Coefficients &$2^{1/\delta+7}x^{1/\delta}$ &
$2^{1/\delta+7}x^{1/\delta-1}$ & $\cdots$
&$2^{1/\delta+7}x^{j}$ &$\cdots$& $2^{1/\delta+7}x^{2}$ \\
\hline
$v_{i,\iota}$($\iota=1,2,3,4$) & $f_{1/\delta}(i),\bar{f}_{1/\delta}(i)$ & 0 & $\cdots$ & $0$ &$\cdots$&0 \\
\hline
$\eta_{i,1/\delta-1,+},\eta_{i,1/\delta-1,-}$ & $f_{1/\delta}(i),\bar{f}_{1/\delta}(i)$ & 0 & $\cdots$ & $0$ &$\cdots$&0 \\
\hline
$\cdots$&$\cdots$&$\cdots$&$\cdots$&$\cdots$&$\cdots$&$\cdots$\\
\hline $\eta_{i,j-1,+},\eta_{i,j-1,-}$ &
$f_{1/\delta}(i),\bar{f}_{1/\delta}(i)$ &
$f_{1/\delta-1}(i),\bar{f}_{1/\delta-1}(i)$
 & $\cdots$ & $f_{j}(i),\bar{f}_{j}(i)$ &$\cdots$&0 \\
 \hline
$\cdots$&$\cdots$&$\cdots$&$\cdots$&$\cdots$&$\cdots$&$\cdots$\\
\hline $\eta_{i,1,+},\eta_{i,1,-}$ &
$f_{1/\delta}(i),\bar{f}_{1/\delta}(i)$ &
$f_{1/\delta-1}(i),\bar{f}_{1/\delta-1}(i)$
 & $\cdots$ & $f_{j}(i),\bar{f}_{j}(i)$ &$\cdots$&$f_{2}(i),\bar{f}_{2}(i)$ \\
\hline  $\sharp$ $\tau_k$-jobs &
$2(2n_1/\delta+2n_1)$&$2\times2(1/\delta-2)n_1$
 & $\cdots$ & $2\times 2(j-1)n_1$ &$\cdots$&$2\times 2n_1$ \\
\hline
\end{tabular}
\caption{Counting numbers of variable and agent jobs}
\label{fig:number variable}
\end{table}

The factor $2$ in the last row comes from the fact that for each
symbol there are actually a true job and a false job, and thus the
numbers should double. We call the gap whose $x^k$-term is a regular
term and equals to $2^{1/\delta+7}\cdot 2\tau_kx^k$ as a regular
$\tau_k$-gaps, and call the gap whose $x^k$-term is a singular term
and equals to $2^{1/\delta+7}\cdot \tau_kx^k$ as a singular
$\tau_k$-gap. We count their numbers. See Table~\ref{fig:number
huge} as an overview.

\begin{table}[!hbp]
\begin{tabular}{|c|c|c|c|c|c|c|}
\hline Gaps/Coefficients &$2^{1/\delta+7}x^{1/\delta}$ &
$2^{1/\delta+7}x^{1/\delta-1}$ & $\cdots$
&$2^{1/\delta+7}x^{j}$ &$\cdots$& $2^{1/\delta+7}x^{2}$ \\
\hline
$\theta_{i,1/\delta-1,+},\theta_{i,1/\delta-1,-}$ & $2f_{1/\delta}(i),2\bar{f}_{1/\delta}(i)$ & 0 & $\cdots$ & $0$ &$\cdots$&0 \\
\hline
$\cdots$&$\cdots$&$\cdots$&$\cdots$&$\cdots$&$\cdots$&$\cdots$\\
\hline $\theta_{i,j-1,+},\theta_{i,j-1,-}$ &
$2f_{1/\delta}(i),2\bar{f}_{1/\delta}(i)$ &
$2f_{1/\delta-1}(i),2\bar{f}_{1/\delta-1}(i)$
 & $\cdots$ & $f_{j}(i),\bar{f}_{j}(i)$ &$\cdots$&0 \\
 \hline
$\cdots$&$\cdots$&$\cdots$&$\cdots$&$\cdots$&$\cdots$&$\cdots$\\
\hline $\theta_{i,1,+},\theta_{i,1,-}$ &
$2f_{1/\delta}(i),2\bar{f}_{1/\delta}(i)$ &
$2f_{1/\delta-1}(i),2\bar{f}_{1/\delta-1}(i)$
 & $\cdots$ & $2f_{j}(i),2\bar{f}_{j}(i)$ &$\cdots$&$f_{2}(i),\bar{f}_{2}(i)$ \\
\hline  $\sharp$ singular $\tau_k$-gaps & $6n_1$&$2n_1$
 & $\cdots$ & $2n_1$ &$\cdots$&$2n_1$ \\
\hline
 $\sharp$ regular $\tau_k$-gaps & $2n_1/\delta-n_1$&$2n_1(1/\delta-1)-3n_1$
 & $\cdots$ & $2jn_1-3n_1$ &$\cdots$&$n_1$ \\
 \hline
\end{tabular}
\caption{Count the number of gaps} \label{fig:number huge}
\end{table}

Notice that in Table \ref{fig:number huge} we do not list
variable-clause, variable-dummy and variable-assignment gaps,
however, they contribute to the number of singular
$2^{1/\delta+7}\tau_{1/\delta}x^{1/\delta}$ terms by $6n_1$ for any
$1\le \tau_{1/\delta}\le n^{\delta}$. Now we come to the proof of
Lemma \ref{le:regular well-cancel}.

\begin{proof}
We prove the lemma through induction. We first consider
$x^{1/\delta}$-terms. A regular $x^{1/\delta}$-term of a gap could
always be expressed as $2^{1/\delta+7}\cdot
2\tau_{1/\delta}x^{1/\delta}$ for $1\le \tau_{1/\delta}\le
n^{\delta}$.

We start with $\tau_{1/\delta}=1$. Notice that a regular
$x^{1/\delta}$-term comes from a variable-agent, layer-decreasing or
agent-agent gap. According to Table 5, the $x^{1/\delta}$-term of
the other two jobs (variable or agent jobs) used to fill up such a
gap are nonzero and at least $2^{1/\delta+7}x^{1/\delta}$, thus the
regular term $2^{1/\delta+7}\cdot 2x^{1/\delta}$ is well canceled.
Suppose for any $\tau_{1/\delta}< h_0\le n^{\delta}$, each regular
term $2^{1/\delta+7}\tau_{1/\delta}x^{1/\delta}$ is well-canceled.

We consider the case that $\tau_{1/\delta}= h_0$. For any
$\tau_{1/\delta}$ such that $1\le \tau_{1/\delta} <h_0$, there are
in all $4n_1(1/\delta+1)$ variable or agent jobs whose
$x^{1/\delta}$-term is $2^{1/\delta+7}\tau_{1/\delta}x^{1/\delta}$
(see Table~\ref{fig:number variable}). We determine the scheduling
of these jobs.

Among them $6n_1$ jobs are on used to cancel singular terms
according to Lemma \ref{le:singular well-cancel}. Meanwhile since
there are $2n_1/\delta-n_1$ gaps with regular terms
$2^{1/\delta+7}\cdot 2\tau_{1/\delta}x^{1/\delta}$ (see
Table~\ref{fig:number huge}), the induction hypothesis implies that
$4n_1/\delta-2n_1$ of these variable and agent jobs are used to
cancel these regular terms.

Thus, we can conclude that for a regular $x^{1/\delta}$-term being
$2^{1/\delta+7}\cdot 2h_0x^{1/\delta}$, both of the $x^{1/\delta}$
term of the two jobs (variable or agent jobs) used to cancel it are
at least $2^{1/\delta+7}\cdot h_0x^{1/\delta}$. This implies, again,
that the regular term $2^{1/\delta+7}\cdot 2h_0x^{1/\delta}$ is
well-canceled. The proof for regular $x^k$-terms are the same.
\end{proof}

Next we prove that in $Sol^*$, every machine is satisfied. See
Figure \ref{fig:index-scheduling} as an illustration of such a
solution. Obviously a variable-dummy machine (gap) is satisfied.

\begin{lemma}
\label{le:Theta_0^3} Agent-agent machines (gaps) are satisfied.
\end{lemma}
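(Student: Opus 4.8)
The plan is to fix an agent-agent gap $\theta_{i,k,C_2}$, which was created for a clause $(z_i\vee\neg z_k)\in C_2$, and to identify the two jobs sharing its machine. By Lemma~\ref{le:sche-3sat'-step 1} this gap is filled by two jobs of $G_0=V\cup V_a$, and by the residual-modulo-$2^{1/\delta+7}$ count (the structure table following Lemma~\ref{le:sche-3sat'-step 1}) these two jobs are a layer-$1$ agent job $\eta_{i',1,+}$ and a layer-$1$ agent job $\eta_{i'',1,-}$ for some $i',i''$. The goal is to prove $i'=i$ and $i''=k$, i.e., that the gap is satisfied.

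The first step is to read off the high-order digits of the gap. By Lemma~\ref{le:r and delta equal} the $x^\kappa$-terms ($2\le\kappa\le 1/\delta-1$) of the three jobs on the machine cancel, and Lemma~\ref{le:regular well-cancel} gives the same for $\kappa=1/\delta$. Since the $x^\kappa$-term of $\theta_{i,k,C_2}$ is the regular term $2^{1/\delta+7}\cdot 2f_\kappa(i)x^\kappa$, being well-canceled forces the $x^\kappa$-terms of the two agent jobs each to equal $2^{1/\delta+7}f_\kappa(i)x^\kappa$; comparing with the definitions of $s(\eta_{i',1,+}^T)$ and $s(\eta_{i'',1,-}^T)$ we get $f_\kappa(i')=f_\kappa(i)$ and $\bar f_\kappa(i'')=f_\kappa(i)$ for every $\kappa\in\{2,\dots,1/\delta\}$.

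The second step is to read off the low-order part. After subtracting the $r$-terms and all $x^\kappa$-terms with $\kappa\ge 2$, $\eta_{i',1,+}$ contributes $2^{1/\delta+7}g_1(i')+2^7+8$ (its $x^1$-coefficient is $0$ and $g_1(i')\le n^\delta<x$, so it sits in the slot strictly below $x$), while $\eta_{i'',1,-}$ contributes $2^{1/\delta+7}\bar g_1(i'')x+2^7+16$, and the sum must equal the gap's low-order part $2^{1/\delta+7}\bigl(\bar g_1(k)x+g_1(i)\bigr)+2^8+24$. The constants match on their own, and because all of $g_1(\cdot),\bar g_1(\cdot)$ are bounded by $n^\delta<x$ there is no carry between the $x^1$-slot and the slot below it (the same scale separation used in Lemma~\ref{le:small x^j}); comparing slot by slot yields $\bar g_1(i'')=\bar g_1(k)$ and $g_1(i')=g_1(i)$. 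The point being used here is the deliberate asymmetry in the agent jobs: $\eta_{\bullet,1,+}$ stores $g_1$ below the $x$-level whereas $\eta_{\bullet,1,-}$ stores $\bar g_1$ at the $x$-level, so the two quantities occupy distinct digit positions of the gap and can be recovered independently.

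The last step invokes uniqueness of the index: the recursive partitioning that defines the functions recovers an index uniquely from the tuple $(f_{1/\delta}(i),f_{1/\delta-1}(i),\dots,f_2(i),g_1(i))$, and a negative-literal index from $(\bar f_{1/\delta}(i),\dots,\bar f_2(i),\bar g_1(i))$ --- the general-$\delta$ form of the uniqueness used in the proof of Lemma~\ref{le:sympl satisfy}. From $f_\kappa(i')=f_\kappa(i)$ for all $\kappa$ and $g_1(i')=g_1(i)$ we conclude $i'=i$. For $i''$, Observation~2 applied to $(z_i\vee\neg z_k)\in C_2$ gives $\bar f_\kappa(k)=f_\kappa(i)$ for all $\kappa$, hence $\bar f_\kappa(i'')=\bar f_\kappa(k)$; together with $\bar g_1(i'')=\bar g_1(k)$ this forces $i''=k$. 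Therefore $\theta_{i,k,C_2}$ shares its machine with $\eta_{i,1,+}$ and $\eta_{k,1,-}$, i.e., it is satisfied. I expect the only genuinely delicate point to be the second step --- verifying that the polynomial-in-$x$ encodings are carry-free at the relevant scales so that an equality between sums of jobs forces equality of individual digit-coefficients, and keeping track of the asymmetric placement of $g_1$ versus $\bar g_1$; everything else is a direct application of the well-cancellation lemmas and of index uniqueness.
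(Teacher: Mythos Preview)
Your proposal is correct and follows essentially the same route as the paper's proof: identify the two companions as $\eta_{i',1,+}$ and $\eta_{i'',1,-}$ via the structure table, use the well-cancellation lemma to match all $f_\kappa$ (resp.\ $\bar f_\kappa$) values for $\kappa\ge 2$, exploit the scale gap $g_1(\cdot),\bar g_1(\cdot)\le n^\delta < x$ to separate the $x^1$-slot from the slot below it and recover $g_1(i')=g_1(i)$ and $\bar g_1(i'')=\bar g_1(k)$, and finally invoke the uniqueness of an index from its $(f_\bullet,g_1)$-tuple (equivalently, Observation~3). The paper's write-up is terser but the argument is the same; your explicit invocation of Observation~2 to translate $f_\kappa(i)$ into $\bar f_\kappa(k)$ is a nice clarification of a step the paper leaves implicit.
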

\begin{proof}
Consider each agent-agent machine, say, $\theta_{i_0,k_0,C_2}$. We
can assume that the other two jobs on it are $\eta_{i,1,+}$ and
$\eta_{k,1,-}$. Then according to Lemma \ref{le:regular
well-cancel}, we have
\begin{eqnarray*}
&&f_l(i)=\bar{f}_l(k)=f_l(i_0)=\bar{f}_l(k_0),\quad
l=2,3,\cdots,1/\delta\\
&&g_1(i)+\bar{g}_1(k)x=g_1(i_0)+\bar{g}_1(k_0)x.
\end{eqnarray*}

Since $x=4n^{\delta}$, while
$g_1(i),g_1(i_0),\bar{g}_1(k),\bar{g}_1(k_0)\le n^{\delta}$, thus
$g_1(i)=g_1(i_0)$, $\bar{g}_1(k)=\bar{g}_1(k_0)$. According to the
construction of functions $f$ and $g$ (see Observation 3), we know
that $i=i_0$ and $k=k_0$.
\end{proof}

We consider variable-clause machines. Notice that for each $i_0$ and
$k_0\in\{i_0,i_0+1,i_0+2\}$, either $\theta_{i_0,k_0,+,C_1}$ or
$\theta_{i_0,k_0,-,C_1}$ exists.

\begin{lemma}
\label{le:clause job with variable} Machine $\theta_{1,k,+,C_1}$ or
$\theta_{1,k,-,C_1}$ ($k=1,2,3$) is satisfied. The machine
$\theta_{i_0,k_0,+,C_1}$ or $\theta_{i_0,k_0,-,C_1}$ for $i_0\ge 2$
and $k_0\in\{i_0,i_0+1,i_0+2\}$ is satisfied if:
\begin{itemize}
\item{For $i<i_0$, each machine $\theta_{i,k,+,C_1}$ or $\theta_{i,k,-,C_1}$ is satisfied.}
\item{All variable jobs $v_{k',\iota}$ with $k'< i_0$ and $\iota=1,2,3,4$ are not scheduled on this
machine.}
\end{itemize}
\end{lemma}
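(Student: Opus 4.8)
The plan is to treat the base case $i_0=1$ and the inductive step $i_0\ge 2$ by one and the same computation: extract from the load equation on a variable-clause machine the exact relation forced on the indices of the three non-huge jobs, then eliminate every configuration but the satisfied one, using the spacing of $R$ for the base case and the two hypotheses for the step.

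First I would fix a variable-clause machine $\theta_{i_0,k_0,\sigma,C_1}$ with $\sigma\in\{+,-\}$ and read off its contents. By Lemma~\ref{le:sche-3sat'-step 1} it carries, besides the huge job, exactly one clause job $u_{j'}$, one dummy job, and one job of $G_0$; the residue-mod-$2^{1/\delta+7}$ bookkeeping behind Table~\ref{fig:structure} pins that $G_0$-job down to a variable job $v_{i'',1}$ when $\sigma=+$ (resp.\ $v_{i'',3}$ when $\sigma=-$) and makes its trailing constant match that of the huge job. Note $j'\in R$, since clause jobs exist only for indices in $R$, and $1\le i''\le n$. The $x^{1/\delta}$-term of the huge job is a singular term equal to $2^{1/\delta+7}f_{1/\delta}(k_0)x^{1/\delta}$ (resp.\ $2^{1/\delta+7}\bar f_{1/\delta}(k_0)x^{1/\delta}$), and among the other three jobs only the variable job has a nonzero $x^{1/\delta}$-term, so Lemma~\ref{le:singular well-cancel} gives $f_{1/\delta}(i'')=f_{1/\delta}(k_0)$ (resp.\ $\bar f_{1/\delta}(i'')=\bar f_{1/\delta}(k_0)$). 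By Lemma~\ref{le:r and delta equal} the $r$-terms on the machine sum to $10^5r$ and the $x^k$-terms with $2\le k\le 1/\delta-1$ sum to $0$; since none of these four jobs contributes an $x^k$-term with $2\le k\le 1/\delta-1$ and the $x^{1/\delta}$-terms already cancel, the remaining parts of the four processing times sum to $0$. After the $2^{1/\delta+6}$ and the trailing offsets of the huge and the variable job cancel, this reads
\[
2^{1/\delta+9}(j'-i_0)+2^{1/\delta+7}(i''-k_0)=0,\qquad\text{i.e.}\qquad i''-k_0=4(i_0-j').
\]

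Then comes the case analysis. If $j'>i_0$ then $j'\ge i_0+3$ (consecutive elements of $R$ differ by $3$), hence $i''=k_0-4(j'-i_0)\le (i_0+2)-12=i_0-10<i_0$; either $i''<1$, which is impossible, or $1\le i''<i_0$, contradicting the hypothesis that no variable job $v_{k',\iota}$ with $k'<i_0$ lies on this machine, and for $i_0=1$ only $i''<1$ can occur, so no hypothesis is needed there. If $j'<i_0$, then by the hypothesis that every machine $\theta_{i,k,\pm,C_1}$ with $i<i_0$ is satisfied, for each clause $c_i$ with $i\in R$, $i<i_0$, the three copies of $u_i$ occupy its three distinct variable-clause machines, so no clause job of index $<i_0$ is free, contradicting $j'<i_0$. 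Hence $j'=i_0$, so $i''=k_0$, and $\theta_{i_0,k_0,\sigma,C_1}$ is satisfied; feeding this into an induction on the clauses of $C_1$ in increasing order of index then shows every variable-clause machine is satisfied.

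The step I expect to be the main nuisance, rather than a genuine obstacle, is the bookkeeping: checking via Table~\ref{fig:structure} and the $2^{1/\delta+7}$-residues that the machine really contains only $u_{j'}$, a dummy, and $v_{i'',1}$ (resp.\ $v_{i'',3}$), so that no stray $x^{1/\delta}$- or constant-term enters the cancellation, and verifying that the $2^{1/\delta+6}$ terms and the trailing $+1$/$+3$ line up so the displayed identity is clean. Conceptually this is the same index-pinning mechanism used for $\theta_{\eta,1,+}$ in the proof of Lemma~\ref{le:sympl satisfy}, now driven by the spacing $3$ of $R$ together with the two inductive hypotheses instead of by Observation~3.
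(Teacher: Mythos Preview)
Your proposal is correct and follows essentially the same route as the paper: fix the contents of the machine via Lemma~\ref{le:sche-3sat'-step 1} and the residue table, extract from the load equation the linear relation between the clause index and the variable index, and eliminate all but the satisfied configuration using the inductive hypotheses. The paper keeps $\iota$ free and argues by inequalities (e.g.\ $2^{1/\delta+9}i'+2^{1/\delta+7}k'+\iota>2^{1/\delta+9}i_0+2^{1/\delta+7}(i_0+3)+1$ when $i'\ge i_0+1$), while you pin $\iota$ from Table~\ref{fig:structure} first and then read off the exact identity $i''-k_0=4(i_0-j')$, using the spacing of $R$; these are cosmetic differences and the mechanism is identical.
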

\begin{proof}
We consider clause $c_{1}\in C_1$. As $c_{1}$ contains three
variables $z_1$, $z_2$ and $z_3$, there are three huge jobs
$\theta_{1,1,\sigma_1,C_1}$, $\theta_{1,2,\sigma_2,C_1}$ and
$\theta_{1,3,\sigma_3,C_1}$ where
$\sigma_1,\sigma_2,\sigma_3\in\{+,-\}$. Meanwhile there are three
clause jobs of $u_1$.

For $i_0=1$ and any $k_0\in\{1,2,3\}$, suppose
$\theta_{1,k_0,+,C_1}$ exists, and the two jobs together with it are
a clause job $u_{i}$ and a variable job $v_{k,\iota}$ with
$\iota\in\{1,2,3,4\}$. Since
$s(\theta_{1,k_0,+,C_1})=10^5r-11005r-(2^{1/\delta+7}f_{1/\delta}(1)+2^{1/\delta+9}+2^{1/\delta+7}k_0+2^{1/\delta+6}+1)$,
according to Lemma \ref{le:r and delta equal}, we have
$2^{1/\delta+9}i+2^{1/\delta+7}k+1=2^{1/\delta+9}+2^{1/\delta+7}k_0+\iota$.
If $i\ge 2$, then the left side is at least $2^{1/\delta+10}$, while
the right side is at most $2^{1/\delta+9}+2^{1/\delta+7}\times
3+4<2^{1/\delta+10}$, which is a contradiction. Thus $i=1$ and it
follows directly that $k=k_0$, $\iota=1$. Otherwise
$\theta_{1,k_0,-,C_1}$ exists, and the proof is just similar. Thus,
machine $\theta_{1,k_0,+,C_1}$ or $\theta_{1,k_0,-,C_1}$
($k_0=1,2,3$) is satisfied.

When $i_0\ge 2$ and $k_0\in\{i_0+1,i_0+2,i_0+3\}$, again we suppose
that $\theta_{i_0,k_0,+,C_1}$ exists. Notice that for any $i\le
i_0-1$, $c_{i}$ contains three variables. According to the
hypothesis, the three clause jobs $u_{i}$ are scheduled on three
machines, they are $\theta_{i,i,+,C_1}$ or $\theta_{i,i,-,C_1}$,
$\theta_{i,i+1,+,C_1}$ or $\theta_{i,i+1,-,C_1}$ and
$\theta_{i,i+2,+,C_1}$ or $\theta_{i,i+2,-,C_1}$. Thus when we
consider machine $\theta_{i_0,k_0,+,C_1}$, all clause jobs $u_{i}$
with $i\le i_0-1$ could not be scheduled on this machine.

Again suppose that the two jobs scheduled together with
$\theta_{i_0,k_0,+,C_1}$ are $u_{i'}$ and $v_{k',\iota}$, then
$2^{1/\delta+9}i_0+2^{1/\delta+7}k_0+1=2^{1/\delta+9}i'+2^{1/\delta+7}k'+\iota$.
Since $i'\ge i_0-1$ and $k'\ge i'$, if $i'\ge i_0+1$, then we have
$2^{1/\delta+9}i'+2^{1/\delta+7}k'+\sigma>2^{1/\delta+9}(i_0+1)+2^{1/\delta+7}(i_0+1)\ge
2^{1/\delta+9}i_0+2^{1/\delta+7}(i_0+3)+1$, which is a
contradiction. Thus $i'=i_0$, $k'=k_0$ and $\iota=1$, which means
machine $\theta_{i_0,k_0,+,C_1}$ is satisfied.

Similarly if $\theta_{i_0,k_0,-,C_1}$ exists, this machine is also
satisfied.
\end{proof}

\begin{lemma}
\label{le:truth assignment with variable} Machines $\theta_{1,a,c}$,
$\theta_{1,b,d}$, $\theta_{1,a,d}$ and $\theta_{1,b,c}$ are
satisfied. Moreover, machines $\theta_{i_0,a,c}$,
$\theta_{i_0,b,d}$, $\theta_{i_0,a,d}$ and $\theta_{i_0,b,c}$ for
$i_0\ge 2$ are satisfied if:
\begin{itemize}
\item{Machines $\theta_{i,a,c}$,
$\theta_{i,b,d}$, $\theta_{i,a,d}$ and $\theta_{i,b,c}$ are
satisfied for $i\le i_0-1$.}
\item{All variable jobs $v_{i',\iota}$ with
$i'< i_0$ and $\iota\in\{1,2,3,4\}$ are not scheduled on these
machines.}
\end{itemize}
\end{lemma}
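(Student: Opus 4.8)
The plan is to run exactly the kind of level‑by‑level induction used for the variable‑clause machines in Lemma~\ref{le:clause job with variable}, now applied to the four variable‑assignment machines $\theta_{i_0,a,c},\theta_{i_0,b,d},\theta_{i_0,a,d},\theta_{i_0,b,c}$ of a fixed index level $i_0$. First I would invoke Lemma~\ref{le:sche-3sat'-step 1} together with Table~\ref{fig:structure} to pin down the occupants: besides its huge job, $\theta_{i_0,a,c}$ carries one variable job $v_{\alpha,1}$, one job of $A$ and one job of $C$; $\theta_{i_0,b,d}$ carries $v_{\alpha,2}$, a job of $B$ and a job of $D$; and likewise for the remaining two, the truth‑assignment occupants being dictated by the subscript pattern $a,c\,/\,b,d\,/\,a,d\,/\,b,c$. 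On each of the four machines the singular $x^{1/\delta}$‑term of the gap is well‑canceled (Lemma~\ref{le:singular well-cancel}), and since no job of $A\cup B\cup C\cup D$ carries an $x^{1/\delta}$‑term, this already forces $f_{1/\delta}$ (resp.\ $\bar f_{1/\delta}$) of the variable job to match that of the huge job, so the $x^{1/\delta}$‑terms cancel identically and play no role in what follows.

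Next I would extract a single scalar identity per machine. By Lemma~\ref{le:r and delta equal} the $r$‑terms on each machine add up to $10^5r$, hence the small‑$r$‑terms add up to $0$. Writing out the remaining (constant‑order) parts and using that the numerical offsets were engineered to cancel — on $\theta_{i_0,a,c}$ the offsets $-25,+1,+8,+16$; on $\theta_{i_0,b,d}$ the offsets $-98,+2,+32,+64$; on $\theta_{i_0,a,d}$ the offsets $-75,+3,+8,+64$; on $\theta_{i_0,b,c}$ the offsets $-52,+4,+32,+16$ — and that the $2^{1/\delta+6}$ term of the huge job cancels the one of the variable job, one is left on every one of the four machines with
\[
2^{1/\delta+7}(\alpha-i_0)+2^{7}(\beta+\gamma-2i_0)=0,\qquad\text{equivalently}\qquad 2^{1/\delta}(\alpha-i_0)=2i_0-\beta-\gamma,
\]
where $\alpha$ is the index of the variable job on that machine and $\beta,\gamma$ are the indices of its two truth‑assignment jobs.

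The induction then closes by a sign comparison. In the base case $i_0=1$ every job index is at least $1$, so the right‑hand side is $\le 0$ while the left‑hand side is $\ge 0$; both therefore vanish, giving $\alpha=1$ and $\beta+\gamma=2$, hence $\beta=\gamma=1$, so each of $\theta_{1,a,c},\theta_{1,b,d},\theta_{1,a,d},\theta_{1,b,c}$ is satisfied. In the inductive step $i_0\ge 2$, the second hypothesis gives $\alpha\ge i_0$, so the left‑hand side is $\ge 0$; for the reverse inequality I would note that, by Lemma~\ref{le:sche-3sat'-step 1}, jobs of $A\cup B\cup C\cup D$ occur only on variable‑assignment machines, and per index level the number of $G_1$‑slots plus $G_2$‑slots is exactly $8$, matching the $8$ truth‑assignment jobs of that level; so once the first hypothesis says all variable‑assignment machines of index $<i_0$ are satisfied, both copies of every $a_i,b_i,c_i,d_i$ with $i<i_0$ are pinned to those machines and cannot reappear here, whence $\beta,\gamma\ge i_0$ and the right‑hand side is $\le 0$. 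Equality then forces $\alpha=i_0$ and $\beta+\gamma=2i_0$ with $\beta,\gamma\ge i_0$, so $\beta=\gamma=i_0$, and all four machines are satisfied.

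The arithmetic is routine once the job identities are fixed, so the step I expect to require the most care is the bookkeeping behind $\beta,\gamma\ge i_0$: one must ensure that a truth‑assignment job of a small index cannot drift onto a variable‑assignment machine of a larger index, which is precisely where the slot/job counting of Lemma~\ref{le:sche-3sat'-step 1} and the running induction over index levels are indispensable. This in turn forces the global induction to be organized as one package, advancing level by level with this lemma interleaved with Lemma~\ref{le:clause job with variable} and the (entirely analogous) statements for the variable‑agent and layer‑decreasing machines.
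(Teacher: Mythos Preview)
Your proposal is correct and follows essentially the same route as the paper: fix the occupants via Table~\ref{fig:structure}, cancel the $x^{1/\delta}$‑ and $2^{1/\delta+6}$‑terms, reduce to the scalar identity $2^{1/\delta+7}(\alpha-i_0)+2^7(\beta+\gamma-2i_0)=0$, and finish by a sign argument. Your write‑up is in fact more explicit than the paper's, which disposes of the inductive step with ``the proof \ldots\ is the same''; in particular, your slot/job counting to obtain $\beta,\gamma\ge i_0$ from the first hypothesis is exactly the missing detail one needs (without it the equation alone does not pin down $\alpha=i_0$ for large $i_0$), and your remark that the four variable‑assignment machines of each level absorb all eight truth‑assignment jobs of that level is the right justification.
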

\begin{proof}
Consider machine $\theta_{1,a,c}$. Except the huge job, let the
other three jobs be $v_{i_1,\iota}$ ($\iota\in\{1,2,3,4\}$),
$a_{i_2}$ and $c_{i_3}$. Then we have
$$2^{1/\delta+7}i_1+2^{1/\delta+6}+\iota_1+2^7i_2+8+2^7i_3+16=2^{1/\delta+7}+2^{1/\delta+6}+2^8+25.$$
It can be easily seen that $i_1=i_2=i_3=1$ and $\iota=1$. Thus,
machine $\theta_{1,a,c}$ is satisfied. Using similar arguments we
can show that machines $\theta_{1,b,c}$, $\theta_{1,a,d}$ and
$\theta_{1,b,d}$ are satisfied.

The proof that machines $\theta_{i_0,a,c}$, $\theta_{i_0,b,d}$,
$\theta_{i_0,a,d}$ and $\theta_{i_0,b,c}$ are satisfied for $i_0\ge
2$ if two conditions of the lemma hold is the same.
\end{proof}

For simplicity, we call variable jobs $v_{i,\iota_1}$ with
$\iota_1\in\{1,2,3,4\}$ and agent jobs $\eta_{i,j,\iota_2}$ with
$\iota_2\in\{+,-\}$ as jobs of index-level $i$.

In contrast, let $\sigma\in\{+,-\}$, we call machine
$\theta_{\eta,i,\sigma}$, $\theta_{i,j,\sigma}$, machine
$\theta_{i',i,\sigma,C_1}$, machine $\theta_{i,\sigma}$, machines
$\theta_{i,a,c}$, $\theta_{i,a,d}$, $\theta_{i,b,c}$,
$\theta_{i,b,d}$ as machines of index-level $i$.

Specifically, machine $\theta_{i,k,C_2}$ is of index-level $i$ and
also of index-level $k$, i.e., this machine would appear in the set
of machines with index-level of $i$ as well as the set of machines
with index-level of $k$. Notice that according to Lemma
\ref{le:Theta_0^3} these machines are already satisfied.

\begin{lemma}
\label{le:each satisfy} In $Sol^*$, every machine (gap) is
satisfied.
\end{lemma}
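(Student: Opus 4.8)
The plan is to prove the lemma by induction on the \emph{index-level} $i$, extending to all of the agent layers $1,2,\ldots,1/\delta-1$ the argument used for Lemma~\ref{le:sympl satisfy} in the special case $\delta=1/2$. Two families of machines need no induction: a variable-dummy machine $\theta_{i,\pm}$ has nothing to match beyond its single variable job, and by Lemma~\ref{le:Theta_0^3} every agent-agent machine $\theta_{i,k,C_2}$ is already satisfied. For all the other machines I would run an outer induction on $i$; the base case $i=1$ and the step $i=i_0$ have identical structure, so I describe the step. The induction hypothesis is that every machine of index-level $<i_0$ is satisfied. Combined with a slot-counting argument this forces every variable job $v_{k,\iota}$ and every agent job $\eta_{k,j,\sigma}$ with $k<i_0$ to already sit on a satisfied machine of index-level $k$; hence any variable or agent job appearing on a machine of index-level $i_0$ carries an index $\ge i_0$.

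Given this, I would pin down the machines of index-level $i_0$ in the order suggested by the agent chain: first the layer-decreasing machines $\theta_{i_0,1,\pm},\theta_{i_0,2,\pm},\ldots,\theta_{i_0,1/\delta-2,\pm}$ in increasing order of layer, and then the variable-agent machines $\theta_{\eta,i_0,\pm}$. By Lemma~\ref{le:sche-3sat'-step 1} and Table~\ref{fig:structure} such a gap is filled by two jobs of consecutive layers, say $\eta_{i',j,\sigma}$ together with $\eta_{i'',j+1,\sigma}$ (respectively $v_{i',2}$ with $\eta_{i'',1/\delta-1,+}$ on a variable-agent machine, and symmetrically $v_{i',4}$ with $\eta_{i'',1/\delta-1,-}$). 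The well-cancellation Lemmas~\ref{le:singular well-cancel} and~\ref{le:regular well-cancel} force every coefficient $f_k$ ($2\le k\le 1/\delta$) of these two jobs to equal the corresponding coefficient of the huge job, and then Lemma~\ref{le:r and delta equal}, applied to the remaining low-order terms, yields a single linear relation among the $g$-values: $g_j(i')+g_{j+1}(i'')=g_j(i_0)+g_{j+1}(i_0)$ in the layer-decreasing case and $i'+g_{1/\delta-1}(i'')=i_0+g_{1/\delta-1}(i_0)$ in the variable-agent case. Since the $f$-prefixes agree and $i',i''\ge i_0$ by the paragraph above, the strict monotonicity of Observation~3 applied to the appropriate prefix leaves only $i'=i''=i_0$, so the machine is satisfied. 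I would then invoke Lemma~\ref{le:clause job with variable} for the three variable-clause machines of index-level $i_0$ and Lemma~\ref{le:truth assignment with variable} for the four variable-assignment machines, both of whose hypotheses now hold (the lower-index machines of the relevant kind are satisfied, and by the counting no variable job of index $<i_0$ remains unplaced). Lastly $\theta_{i_0,\pm}$ is pinned by the same $r$-term and residue-modulo-$2^{1/\delta+7}$ bookkeeping used inside Lemma~\ref{le:clause job with variable}. This exhausts every machine and every variable and agent job of index-level $i_0$, which closes the induction.

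The main obstacle is the slot-counting that makes the induction self-consistent, i.e.\ that closes the loop between ``a satisfied machine consumes its jobs'' and ``a job is forced to index-level $i_0$ because all lower levels are exhausted''. One must verify that for each layer $j$ the number of layer-$j$ agent jobs of a given index-level (the two copies $T$ and $F$) equals the number of layer-$j$ slots among the machines of that index-level --- the agent-agent slot used up by Lemma~\ref{le:Theta_0^3}, the two slots on $\theta_{i,j-1,\pm}$ and $\theta_{i,j,\pm}$, and the slot on $\theta_{\eta,i,\pm}$ --- and likewise for $v_{i,1},v_{i,2},v_{i,3},v_{i,4}$; and one must treat carefully the fact that an agent-agent machine $\theta_{i,k,C_2}$ has two index-levels, so that it is not double-counted. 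Once every machine is satisfied, reading the residues modulo $2^{1/\delta+7}$ then pins down which of $v_{i,1},\ldots,v_{i,4}$ fills each gap, and the $r$-terms fix the superscripts, giving the three bullet conditions of Subsection~\ref{sec:overview} and hence a satisfying assignment for $I_{sat}$.
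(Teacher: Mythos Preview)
Your proposal is correct and follows essentially the same route as the paper: an induction on the index-level $i$, with the variable-dummy and agent-agent machines handled outside the induction, and each inductive step combining the well-cancellation lemmas with Observation~3's monotonicity to force $i'=i''=i_0$, plus Lemmas~\ref{le:clause job with variable} and~\ref{le:truth assignment with variable} for the variable-clause and variable-assignment machines. The only cosmetic differences are that the paper treats the variable-agent machine $\theta_{\eta,i_0,\pm}$ before the layer-decreasing machines (the order is immaterial since both arguments rely only on $i',i''\ge i_0$ from the outer induction, not on any intra-level ordering), and the paper folds ``all variable and agent jobs of index-level $i$ sit on machines of index-level $i$'' directly into the induction hypothesis rather than re-deriving it each time from slot-counting; your identification of that counting as the main obstacle is exactly right and matches the paper's Figure~\ref{fig:sche-index1}.
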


\begin{proof}
We prove it through induction on the index-level of machines. We
start with $i=1$.

Consider machine $\theta_{\eta,1,+}$. We assume jobs $v_{i,2}$ and
$\eta_{i',1/\delta-1,+}$ are on it. Then simple calculations show
that
$$2f_{1/\delta}(1)x^{1/\delta}+1+g_{1/\delta-1}(1)=f_{1/\delta}(i)x^{1/\delta}+i+f_{1/\delta}(i')x^{1/\delta}+g_{1/\delta-1}(i').$$

According to Lemma \ref{le:regular well-cancel},
$f_{1/\delta}(1)=f_{1/\delta}(i)=f_{1/\delta}(i')$.

Since $i'\ge 1$, according to Observation 3 we have
$g_{1/\delta-1}(i')\ge g_{1/\delta-1}(1)$. Meanwhile $i\ge 1$, thus
$g_{1/\delta-1}(i')= g_{1/\delta-1}(1)$ and $i=1$. Again, due to
Observation 3 we have $i=i'=1$. Thus $v_{1,2}$ and
$\eta_{1,1/\delta-1,+}$ are on machine $\theta_{\eta,1,+}$, i.e.,
this machine is satisfied. Similarly we can prove that $v_{1,4}$ and
$\eta_{1,1/\delta-1,-}$ are on machine $\theta_{\eta,1,-}$.

Consider machine $\theta_{1,j,+}$ for $1\le j\le 1/\delta-2$. We
assume jobs $\eta_{i,j,+}$ and $\eta_{i',j+1,+}$ are on it. Then
simple calculations show that
$$2\sum_{l=j+2}^{1/\delta}f_{l}(1)x^l+f_{j+1}(1)x^{j+1}+g_{j+1}(1)+g_{j}(1)=
\sum_{l=j+1}^{1/\delta}f_{l}(i)x^l+\sum_{l=j+2}^{1/\delta}f_l(i')x^l+g_{j+1}(i')+g_{j}(i).$$

According to Lemma \ref{le:regular well-cancel}, we have
$$f_l(i)=f_l(1), \quad l=j+1,j+2,\cdots, 1/\delta,$$
$$f_l(i')=f_l(1),\quad l=j+2,j+3,\cdots, 1/\delta.$$
Thus $g_{j+1}(1)+g_{j}(1)=g_{j+1}(i')+g_{j}(i).$

According to Observation 3, we have $g_{j}(i)\ge g_{j}(1)$ and
$g_{j+1}(i')\ge g_{j+1}(1)$. Thus $g_j(i)=g_j(1)$,
$g_{j+1}(i')=g_{j+1}(1)$. Again due to Observation 3 we have
$i=i'=1$, i.e., machine $\theta_{1,j,+}$ is satisfied.

Similarly we can prove that machine $\theta_{1,j,-}$ for $2\le j\le
1/\delta-2$ is also satisfied. For $j=1$, recall that there is a
slight difference between $\theta_{1,1,-}$ and $\theta_{1,1,+}$, we
prove that machine $\theta_{1,1,-}$ is satisfied separately.

Consider $\theta_{1,1,-}$ and assume jobs $\eta_{i,1,-}$ and
$\eta_{i',2,-}$ are on it. Then
$$2\sum_{l=3}^{1/\delta}\bar{f}_{l}(1)x^l+\bar{f}_{2}(1)x^2+\bar{g}_{2}(1)+\bar{g}_{1}(1)x=
\sum_{l=2}^{1/\delta}\bar{f}_{l}(i)x^l+\sum_{l=3}^{1/\delta}\bar{f}_l(i')x^l+\bar{g}_{2}(i')+\bar{g}_{1}(i)x.$$

According to Lemma \ref{le:regular well-cancel}, we have
$$\bar{f}_l(i)=\bar{f}_l(1), \quad l=2,3,\cdots, 1/\delta,$$
$$\bar{f}_l(i')=f_l(1),\quad l=3,4,\cdots, 1/\delta.$$
Thus
$\bar{g}_{2}(1)+\bar{g}_{1}(1)x=\bar{g}_{2}(i')+\bar{g}_{1}(i)x$.
Similarly due to observation 3 we have $\bar{g}_{2}(i')\ge
\bar{g}_{2}(1)$, and $\bar{g}_{1}(i)\ge \bar{g}_{1}(1)$. Thus again
we can prove $i=i'=1$, which implies that machine $\theta_{1,1,-}$
is also satisfied.

Combining Lemma \ref{le:clause job with variable}, Lemma
\ref{le:truth assignment with variable} and Lemma
\ref{le:Theta_0^3}, we have proved so far that each machine of
index-level $1$ is satisfied. We further show that indeed, all the
variable and agent jobs of index-level $1$ are on machines of
index-level $1$. To see why, see Figure \ref{fig:sche-index1} for an
overview of the scheduling of jobs of index-level 1 (here Case 1
means $z_1\in C_1$, while Case 2 means $\neg z_1\in C_1$).

\begin{figure}[!h]
\begin{center}
\includegraphics[scale=0.25]{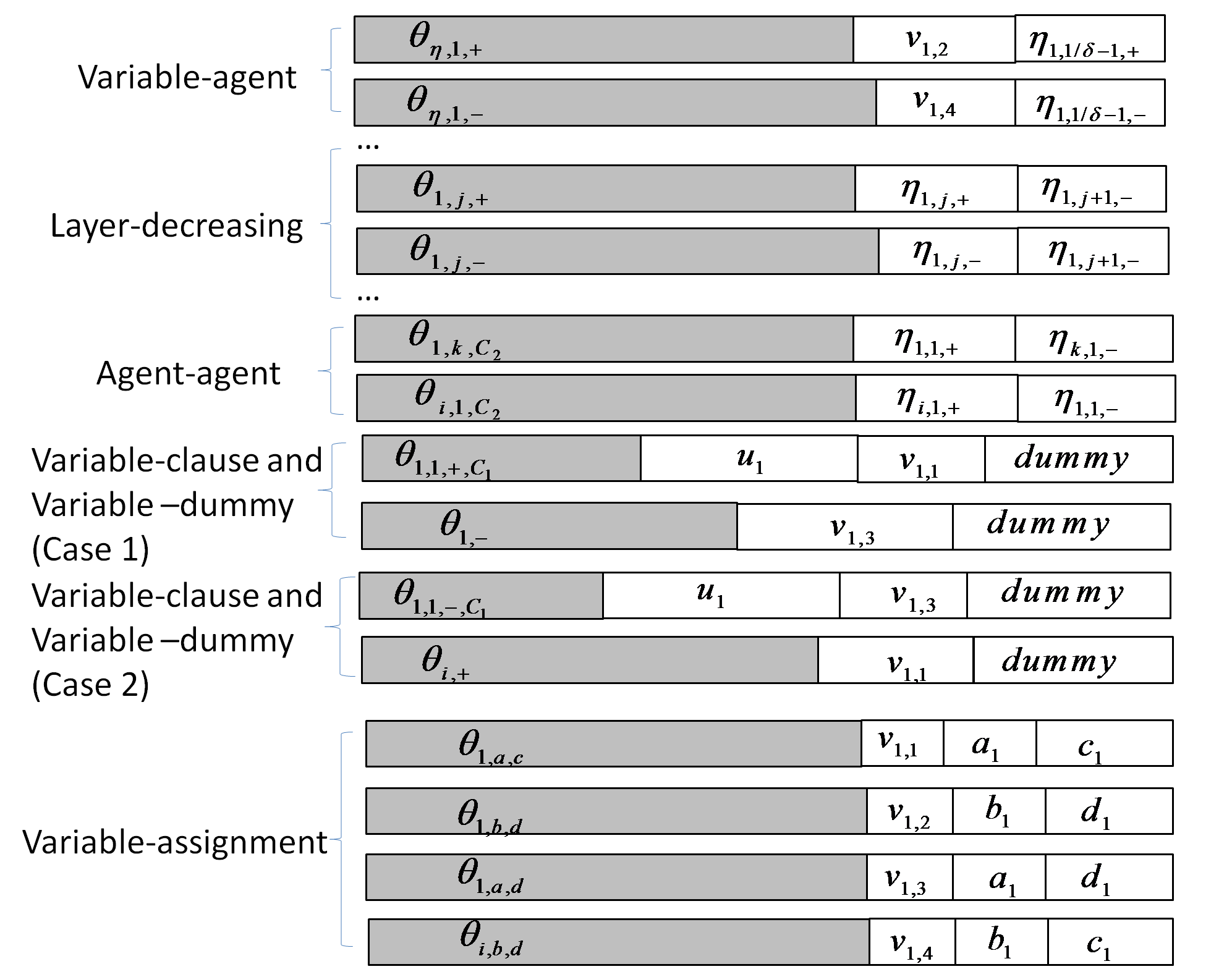}
\end{center}
\caption{scheduling-of-indexlevel-1}\label{fig:sche-index1}
\end{figure}

Suppose that for any $i< i_0\le n$, each machine of index-level $i$
is satisfied and all the variable or agent jobs of index-level $i$
are on machines of index-level $i$. We consider $i=i_0$.

According to Lemma \ref{le:clause job with variable} and Lemma
\ref{le:truth assignment with variable}, we know that machines
$\theta_{i_0,k,+,C_1}$ (or $\theta_{i_0,k,-,C_1}$) for $k\in
\{i_0,i_0+1,i_0+2\}$ and machines $\theta_{i_0,a,c}$,
$\theta_{i_0,b,d}$, $\theta_{i_0,a,d}$, $\theta_{i_0,b,c}$ are
satisfied.

Consider machine $\theta_{\eta,i_0,+}$ which is of index-level
$i_0$. Again we may assume jobs $v_{i,2}$ and
$\eta_{i',1/\delta-1,+}$ are on it, and the induction hypothesis
implies that $i\ge i_0$, $i'\ge i_0$. Simple calculations show that
$$2f_{1/\delta}(i_0)x^{1/\delta}+i_0+g_{1/\delta-1}(i_0)=f_{1/\delta}(i)x^{1/\delta}+i+f_{1/\delta}(i')x^{1/\delta}+g_{1/\delta-1}(i').$$

According to Lemma \ref{le:regular well-cancel},
$f_{1/\delta}(i_0)=f_{1/\delta}(i)=f_{1/\delta}(i')$. Since $i'\ge
i_0$, according to Observation 3 we have $g_{1/\delta-1}(i')\ge
g_{1/\delta-1}(i_0)$. Meanwhile $i\ge i_0$, thus
$g_{1/\delta-1}(i')= g_{1/\delta-1}(i_0)$ and $i=i_0$. We can
conclude that $i=i'=i_0$. So, $v_{i_0,2}$ and
$\eta_{i_0,1/\delta-1,+}$ are on machine $\theta_{\eta,i_0,+}$,
i.e., this machine is satisfied. Similarly we can prove that
$v_{i_0,4}$ and $\eta_{i_0,1/\delta-1,-}$ are on machine
$\theta_{\eta,i_0,-}$.

Consider machine $\theta_{i_0,j,+}$ for $1\le j\le 1/\delta-2$. We
assume jobs $\eta_{i,j,+}$ and $\eta_{i',j+1,+}$ are on it. Then
simple calculations show that
$$2\sum_{l=j+2}^{1/\delta}f_{l}(i_0)x^l+f_{j+1}(i_0)x^{j+1}+g_{j+1}(i_0)+g_{j}(i_0)=
\sum_{l=j+1}^{1/\delta}f_{l}(i)x^l+\sum_{l=j+2}^{1/\delta}f_l(i')x^l+g_{j+1}(i')+g_{j}(i).$$

According to Lemma \ref{le:regular well-cancel}, we have
$$f_l(i)=f_l(i_0), \quad l=j+1,j+2,\cdots, 1/\delta,$$
$$f_l(i')=f_l(i_0),\quad l=j+2,j+3,\cdots, 1/\delta.$$
Thus $g_{j+1}(i_0)+g_{j}(i_0)=g_{j+1}(i')+g_{j}(i).$

According to the hypothesis we know $i,i'\ge i_0$. Due to
Observation 3, we have $g_{j}(i)\ge g_{j}(i_0)$ and $g_{j+1}(i')\ge
g_{j+1}(i_0)$. Thus $g_j(i)=g_j(i_0)$, $g_{j+1}(i')=g_{j+1}(i_0)$,
which implies again that $i=i'=i_0$, i.e., machine
$\theta_{i_0,j,+}$ is satisfied.

Similarly we can prove that machine $\theta_{i_0,j,-}$ for $1\le
j\le 1/\delta-2$ is also satisfied (again we need to prove machine
$\theta_{i_0,1,-}$ is satisfied separately, and the proof is
actually the same as the case when $i_0=1$).

The above analysis shows that each machine of index-level $i_0$ is
satisfied. Similar to the case when $i_0=1$, we can further show
that all the variable and agent jobs of index-level $i_0$ are on
machines of index-level $i_0$.
\end{proof}

A machine is called truth benevolent if one of the following three
conditions holds.
\begin{itemize}
\item{For a variable-agent, layer-decreasing or agent-agent machine, the two jobs (variable or agent) on it are one true and one false.}
\item{For a variable-clause machine, The variable and clause job on
it are of the form $(T,T)$, $(F,F)$ or $(T,F)$.}
\item{For a variable-assignment machine, the variable and truth-assignment jobs on it are of the form
$(F,F,F)$ or $(T,T,T)$.}
\end{itemize}

We have the following lemma.

\begin{lemma}
\label{le:truth benevolent} In $Sol^*$, every machine of is truth
benevolent.
\end{lemma}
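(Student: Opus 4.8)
The plan is to exploit the single arithmetic identity supplied by Lemma~\ref{le:r and delta equal}: on every machine the $r$-terms of all the jobs placed on it sum to exactly $10^5r$. In the construction the superscript $T/F$ of a non-huge job perturbs only its $r$-term — indeed $s(v^F)=s(v^T)+2r$, $s(\eta^F)=s(\eta^T)+2r$, $s(k_i^T)=s(k_i^F)+r$ for $k\in\{a,b,c,d\}$, while the two clause copies $u_j^T,u_j^F$ differ by $2r$ and so do the two dummy sizes $1000r,1002r$. Hence this identity, combined with the structural description of Lemma~\ref{le:sche-3sat'-step 1} of which job types sit on a machine of each type, leaves only finitely many admissible superscript patterns, and the benevolent patterns will come out as the unique solutions. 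So the proof is a short case analysis, one case per machine type.

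First I would record the $r$-terms. Every huge job has $r$-term $10^5r-cr$ where $c=4$ for the variable-agent, layer-decreasing and agent-agent jobs, $c=11005$ for variable-clause jobs, $c=1003$ for variable-dummy jobs and $c=115$ for variable-assignment jobs. Among the non-huge jobs, a variable or agent job has $r$-term $r$ if its superscript is $T$ and $3r$ if it is $F$; a job of $A\cup B$ has $r$-term $12r$ ($T$) or $11r$ ($F$); a job of $C\cup D$ has $r$-term $102r$ ($T$) or $101r$ ($F$); a clause job $u_j$ has $r$-term $10004r$ ($T$) or $10002r$ ($F$); and a dummy job has $r$-term $1000r$ or $1002r$. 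Hence, by Lemma~\ref{le:r and delta equal} together with Lemma~\ref{le:sche-3sat'-step 1}, after deleting the huge job from a machine the $r$-terms of the remaining two or three jobs must sum to $4r$, $11005r$, $1003r$ or $115r$ according to the machine's type.

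Then I would run the enumeration. On a variable-agent, layer-decreasing or agent-agent machine the two remaining jobs are variable or agent jobs with $r$-terms in $\{r,3r\}$ summing to $4r$, and the only possibility is $r+3r$, so one is $T$ and one is $F$. On a variable-clause machine the clause, dummy and variable job have $r$-terms summing to $11005r$; checking the eight combinations, the solutions are $(10004r,1000r,r)$, $(10002r,1002r,r)$ and $(10002r,1000r,3r)$, which are exactly the patterns $(T,T)$, $(T,F)$ and $(F,F)$ for (variable job, clause job). On a variable-assignment machine, writing the $r$-terms of the variable job, the $A\cup B$ job and the $C\cup D$ job as $\alpha r$, $(11+\beta)r$, $(101+\gamma)r$ with $\alpha\in\{1,3\}$ and $\beta,\gamma\in\{0,1\}$, the equation $\alpha+\beta+\gamma=3$ forces $(\alpha,\beta,\gamma)\in\{(1,1,1),(3,0,0)\}$, i.e. all three jobs are $T$ or all three are $F$. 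Variable-dummy machines are not among the machine types appearing in the definition of truth benevolent, so nothing is required there; one only notes that the corresponding equation is consistent, with either $1002r+r$ or $1000r+3r$ making up the required $1003r$. This covers all machine types and proves the lemma.

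I do not expect a genuine obstacle: the content lies entirely in the earlier, careful choice of the $r$-coefficients in the construction, which is precisely what forces each of these small linear equations to have a unique (benevolent) solution. The only points demanding care are bookkeeping ones — confirming that the $T/F$ label moves only the $r$-term and nothing else, and making sure Lemma~\ref{le:sche-3sat'-step 1} has been invoked so that the exact multiset of non-huge job types on each machine is known before the arithmetic begins.
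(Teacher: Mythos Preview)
Your proposal is correct and is essentially identical to the paper's own proof: both invoke Lemma~\ref{le:r and delta equal} to reduce the question to the sum of $r$-terms on each machine, use Lemma~\ref{le:sche-3sat'-step 1} to fix the multiset of non-huge job types present, and then enumerate the admissible superscript patterns case by case (obtaining $r+3r=4r$ on the first type, the three triples summing to $11005r$ on variable-clause machines, and the two triples summing to $115r$ on variable-assignment machines). Your write-up is in fact slightly more detailed than the paper's, e.g.\ in explicitly noting that variable-dummy machines are not constrained by the definition.
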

\begin{proof}
Consider a variable-agent, layer-decreasing or agent-agent machine.
On each of these machines, the $r$-terms of the two (variable or
agent) jobs should add up to $4r$ according to Lemma \ref{le:r and
delta equal}, thus the two jobs are one true and one false.

Consider a variable-clause machine. We check the $r$-terms of the
clause, variable and dummy job. According to Lemma \ref{le:r and
delta equal}, there are three possibilities that the three $r$-terms
add up to $11005r$, which are $r+10004r+1000r$, $3r+10002r+1000r$
and $r+10002r+1002r$, thus the variable and clause jobs are always
of the form $(T,T)$, $(F,F)$ or $(T,F)$.

Consider variable-assignment machines. We check the $r$-terms.
Except for the huge job, the $r$-terms of the variable job, $a_i$ or
$b_i$, $c_i$ or $d_i$ should add up to $115r$ and thus there are
only two possibilities, $r+12r+102r$ and $3r+11r+101r$, which
implies that they are of the form $(F,F,F)$ or $(T,T,T)$.
\end{proof}

Now we come to the proof of Lemma \ref{le:sche->3sat'}.
\begin{proof}
We assign values to variables according to the variable-assignment
machines. For each $i$, consider the four machines,
$\theta_{i,a,c}$, $\theta_{i,b,d}$, $\theta_{i,a,d}$ and
$\theta_{i,b,c}$. Since the three jobs are $(T,T,T)$ or $(F,F,F)$,
thus $a_i^T$ is on the same machine with either $c_i^T$ or $d_i^T$.

If $a_i^T$ is scheduled with $c_i^T$, then the jobs on the two
machines with $\theta_{i,a,c}$ and $\theta_{i,b,d}$ are
$(v_{i,1}^T,a_i^T,c_i^T)$, $(v_{i,2}^T,b_i^T,d_i^T)$. We let
variable $z_i$ be false. Otherwise $a_i^T$ is scheduled with
$d_i^T$, and the jobs on the two machines with $\theta_{i,a,d}$ and
$\theta_{i,b,c}$ are $(v_{i,3}^T,a_i^T,d_i^T)$ and
$(v_{i,4}^T,b_i^T,c_i^T)$. We let variable $z_i$ be true. We show
that every clause is satisfied.

For each $c_j\in C_1$, there is one job $u_j^T$, and it should be
scheduled with a true variable job. If it is $v_{i,1}^T$ where
$i=j,j+1$ or $j+2$, then it turns out that $z_i$ is true because
otherwise $v_{i,1}^T$ is already scheduled with $a_i^T$ and $c_i^T$.
Notice that either machine $\theta_{j,i,+,C_1}$ or machine
$\theta_{j,i,-,C_1}$ exists. Since $v_{i,1}$ is scheduled with
$u_j$, machine $\theta_{j,i,-,C_1}$ does not exist because otherwise
$v_{i,3}$, instead of $v_{i,1}$, is scheduled together with $u_j$ on
this machine. Thus the huge job $\theta_{j,i,+,C_1}$ exists, which
means the positive literal $z_i$ appears in $c_j$, thus $c_j$ is
satisfied. Otherwise it is $v_{i,3}^T$, then it turns out that $z_i$
is false. As $v_{i,3}$ is scheduled with $u_j$, they are together
with $\theta_{j,i,-,C_1}$, which means the negative literal $\neg
z_i$ appears in $c_j$, and thus $c_j$ is satisfied.

Consider each $(z_i\vee \neg z_k)\in C_2$. There is a huge job
$\theta_{i,k,C_2}$. As machine $\theta_{i,k,C_2}$ is satisfied and
truth benevolent, $\eta_{i,1,+}$ and $\eta_{k,1,-}$ on this machine
should be one true and one false according to Lemma \ref{le:truth
benevolent}.

Suppose on machine $\theta_{i,k,C_2}$, $\eta_{i,1,+}$ is false and
$\eta_{k,1,-}$ is true. Notice that there are two jobs,
$\eta_{i,1,+}^T$ and $\eta_{i,1,+}^F$. Since $\eta_{i,1,+}^F$ is on
machine $\theta_{i,k,C_2}$, $\eta_{i,1,+}^T$ should be on machine
$\theta_{i,1,+}$, and thus on this machine the other job is
$\eta_{i,2,+}^F$. This further implies that $\eta_{i,2,+}^T$ and
$\eta_{i,3,+}^F$ are on machine $\theta_{i,2,+}$. Carry on the above
analysis until we reach machine $\theta_{i,1/\delta-2,+}$, and we
know that $\eta_{i,1/\delta-1,+}^F$ is on this machine. Thus on
machine $\theta_{\eta,i,+}$ the two jobs are
$\eta_{i,1/\delta-1,+}^T$ and $v_{i,2}^F$. See Figure
\ref{fig:truthassign} for an illustration.

\begin{figure}[!h]
\begin{center}
\includegraphics[scale=0.4]{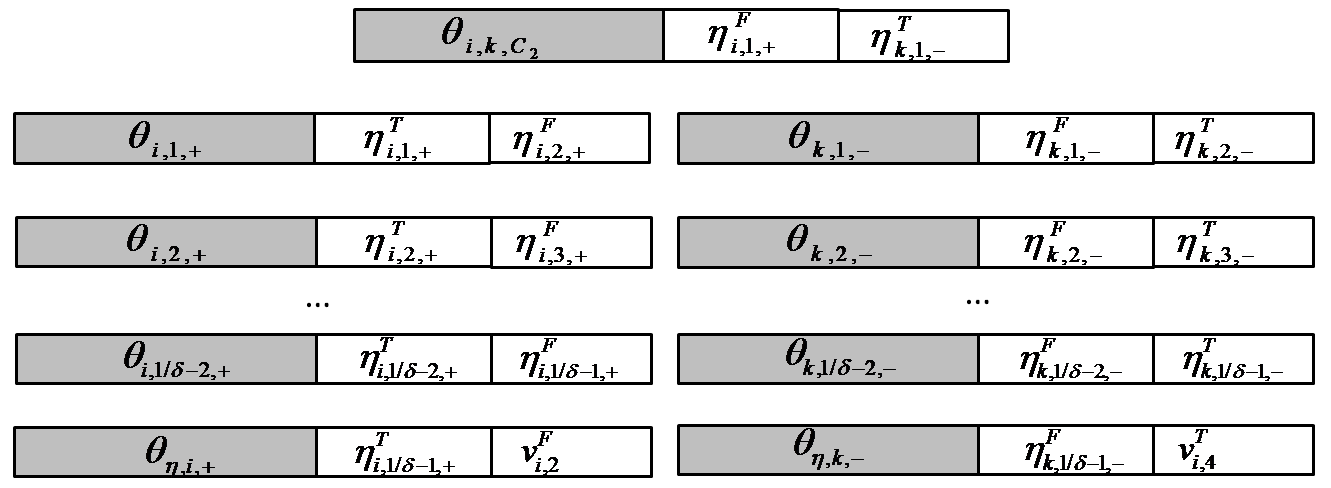}
\caption{truth-assignment}\label{fig:truthassign}
\end{center}
\end{figure}

Similarly, we can show that on machine $\theta_{\eta,k,-}$ the two
jobs are $\eta_{k,1/\delta-1,-}^F$ and $v_{k,4}^T$. Thus, we can
conclude that the variable $z_k$ is false, because otherwise
$v_{k,4}^T$ should be scheduled with $b_k^T$ and $c_k^T$, which is a
contradiction. So the clause $(z_i\vee \neg z_k)$ is satisfied.

Otherwise on machine $\theta_{i,k,C_2}$, the two jobs are
$\eta_{i,1,+}^T$ and $\eta_{k,1,-}^F$. Using the same argument as
before we can show that on machine $\theta_{\eta,i,+}$, the job
$\eta_{i,1/\delta-1,+}$ is false and the job $v_{i,2}$ is true,
while on machine $\theta_{\eta,k,-}$, the job
$\eta_{k,1/\delta-1,-}$ is true and the job $v_{k,4}$ is false.
Thus, the variable $z_i$ is true because otherwise $v_{i,2}^T$
should be scheduled with $b_i^T$ and $d_i^T$, which is a
contradiction. This implies that the clause $(z_i\vee \neg z_k)$ is
satisfied. In both cases, every clause is satisfied, which means
that $I_{sat}$ is satisfiable.
\end{proof}

Recall that given any instance of the 3SAT' problem with $n$
variables, for any $\delta>0$ we construct a scheduling instance
with $O(n/\delta )$ jobs such that it admits a feasible schedule of
makespan $K=O(2^{3/\delta}n^{1+\delta})$ if and only if the given
3SAT' instance is satisfiable. Thus Theorem~\ref{th:sche-makespan}
(and also Theorem~\ref{th:id-sche-linear}) follows directly. We
prove Theorem~\ref{th:id-sche-number}.

\begin{proof}
Suppose the theorem fails, then there exists an exact algorithm for
the restricted scheduling problem that runs in
$2^{O(n^{1-\delta_0})}$ time for some $\delta_0>0$, then we may
simply choose $\delta=\delta_0$ in our reduction. Since $\delta_0$
is some fixed constant, the scheduling problem we construct contains
$O(n)$ jobs with the processing time of each job bounded by
$O(n^{1+\delta_0})$. Then we apply the scheduling algorithm to get
an optimum solution, and it runs in
$2^{O(n^{(1-\delta_0)(1+\delta_0)})}$, i.e.,
$2^{O(n^{1-\delta_0^2})}$ time. Through the makespan of this optimum
solution, we can determine whether the given 3SAT' instance is
satisfiable in $2^{O(n^{1-\delta_0^2})}$ time for some fixed
$\delta_0>0$, resulting a contradiction.
\end{proof}

\section{From $I_{sat}$ to $I_{3dm}$}
\label{appendix:I_3dm}
\subsection{From $I_{sat}$ to $I_{sat}'$}
Suppose we are given an arbitrary $3SAT'$ instance $I_{sat}$ with
$n$ variables. We further apply Tovey's method~\cite{tovey} to
transform $I_{sat}$ into $I_{sat}'$, i.e., we replace each
occurrence of a variable in $I_{sat}'$ with a new variable, and then
add new clauses to enforce that new variables corresponding to the
same original variable are taking the same truth value.

Recall that each variable appears exactly three times in $I_{sat}$,
thus there are in all $3n$ variables in $I_{sat}'$. All the clauses
of $I_{sat}'$ could be divided into two sets, namely $C_1$ and
$C_2$. Every variable appears exactly once in clauses of $C_1$, and
appears twice in clauses of $C_2$. Furthermore, by re-indexing, we
may assume that all the clauses of $C_2$ are $(z_{3k+1}\vee\neg
z_{3k+2})$, $(z_{3k+2}\vee\neg z_{3k+3})$, $(z_{3k+3}\vee\neg
z_{3k+1})$ for $k=0,1,\cdots,n-1$.

We may further assume that $n$ could be divided by $m$ by adding
dummy variables. To see why, suppose $n=qm+r$ with $0<r<m$. Since
$n\ge m$, $q\ge 1$. We could then add additionally $3(m-r)$ dummy
variables, say, $z^d_{3i+1}$, $z^d_{3i+2}$ and $z^d_{3i+3}$ for
$0\le i\le m-r-1$. For these dummy variables, we further introduce
$m-r$ dummy clauses in $C_1$ as $(z^d_{3i+1}\vee z^d_{3i+2}\vee
z^d_{3i+3})$, and $3(m-r)$ clauses in $C_2$ as $(z_{3i+1}^d\vee\neg
z_{3i+1}^d)$, $(z_{3i+2}^d\vee\neg z_{3i+3}^d)$,
$(z_{3i+3}^d\vee\neg z_{3i+1}^d)$ for each $i$.

It is not difficult to verify that $I_{sat}$ is satisfiable if and
only if $I_{sat}'$ is satisfiable.

\subsection{From $I_{sat}'$ to $I_{3dm}$}
We construct an instance of the generalized 3DM problem based on
$I_{sat}'$ (with $3n$ variables). We first construct elements.

We construct two variable elements for each variable $z_i$, i.e., we
construct $w_i$ corresponding to $z_i$ and $\bar{w}_i$ corresponding
to $\neg z_i$. Let $W$ be the set of them. It can be easily seen
that $|W|=6n$. We construct a clause element $s_j\in X$ for each
$c_j\in C_1$.

Recall that all the clauses of $C_2$ could be listed as
$(z_{3i+1}\vee\neg z_{3i+1})$, $(z_{3i+2}\vee\neg z_{3i+3})$,
$(z_{3i+3}\vee\neg z_{3i+1})$ for $i=0,1,\cdots,n-1$. For every $i$,
we construct $a_{3i+1},a_{3i+2},a_{3i+3}\in X$ and
$b_{3i+1},b_{3i+2},b_{3i+3}\in Y$.

This completes the construction of elements and it can be easily
seen that $|X|=3n+m$, and $|Y|=3n$. We construct matchings. For each
variable $z_i$, we construct two matchings of $T_1$, namely $(w_i)$
and $(\bar{w}_i)$.

For each clause $c_j\in C_1$, if the positive literal $z_i\in c_j$,
then we construct $(w_i,s_j)\in T_2$. Else if the negative literal
$\neg z_i\in c_j$, then we construct $(\bar{w}_i,s_j)$. Notice that
$c_j$ might contain two or three literals, thus two or three
matchings of $T_2$ are constructed corresponding to it.

For each $0\le i\le n-1$, 6 matchings of $T_3$ are constructed for
the three clauses $(z_{3i+1}\vee\neg z_{3i+1})$, $(z_{3i+2}\vee\neg
z_{3i+3})$ and $(z_{3i+3}\vee\neg z_{3i+1})$, namely
$(w_{3i+1},a_{3i+1},b_{3i+1})$, $(w_{3i+2},a_{3i+2},b_{3i+2})$,
$(w_{3i+3},a_{3i+3},b_{3i+3})$ and
$(\bar{w}_{3i+1},a_{3i+1},b_{3i+2})$,
$(\bar{w}_{3i+2},a_{3i+2},b_{3i+3})$,
$(\bar{w}_{3i+3},a_{3i+3},b_{3i+1})$.

It can be easily seen that $|T_1|=6n$, $|T_2|=3n$, $|T_3|=6n$. An
exact cover is a subset of matches in which every element appears
once. We prove the following lemma.

\begin{lemma}
\label{le:3sat' to 3dm} $I_{sat}'$ is satisfied if and only if
$I_{3dm}$ admits an exact cover.
\end{lemma}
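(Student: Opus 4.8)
**

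The plan is to prove Lemma~\ref{le:3sat' to 3dm} by a direct two-directional argument, exploiting the fixed structure of $T_1$ and $T_3$ and the simple combinatorics of the construction. First I would record the key structural facts that drive everything: each variable element $w_i$ (resp.\ $\bar w_i$) either appears once in $T_2$ (in a $(w_i,s_j)$ match) or not at all, and always appears in exactly one $T_3$-match; each $s_j$ appears two or three times in $T_2$ (once per literal of $c_j$); and for each block $i$, the six $T_3$-matches form two ``triangles'' on $\{a_{3i+1},a_{3i+2},a_{3i+3}\}$ and $\{b_{3i+1},b_{3i+2},b_{3i+3}\}$ via the cyclic shift $\zeta$. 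The crucial observation is that in any exact cover, every $a_{3i+k}$ and $b_{3i+k}$ must be covered, and the only matches touching them are $T_3$-matches; so within each block $i$ the exact cover must pick either all three ``positive'' $T_3$-matches $(w_{3i+k},a_{3i+k},b_{3i+k})$ or all three ``negative'' ones $(\bar w_{3i+k},a_{3i+k},b_{\zeta(3i+k)})$ — any mixture fails to cover the $b$-elements consistently because of the cyclic shift. This forces a global ``all-positive or all-negative per block'' choice, which is exactly a truth assignment.

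For the forward direction (satisfiable $\Rightarrow$ exact cover): given a satisfying assignment, for each block $i$ include the three positive $T_3$-matches if the corresponding variable is \emph{false} and the three negative ones if it is \emph{true} (the choice here is dictated by which of $w_i$, $\bar w_i$ still needs covering — if $z_i$ is true then $z_i$ satisfies its $C_1$-clause so $w_i$ is used by a $T_2$-match, hence $w_i$ is unavailable for $T_3$, so we must take the negative $T_3$-match there; I would spell this out carefully block by block). Then for each $w_i$ (resp.\ $\bar w_i$) not consumed by its $T_3$-match, cover it by the singleton $(w_i)\in T_1$ (resp.\ $(\bar w_i)$). Finally, for each clause $c_j\in C_1$, pick one literal satisfying it and include the corresponding $(w_i,s_j)$ or $(\bar w_i,s_j)\in T_2$ to cover $s_j$; the remaining literal-elements of $c_j$ are already covered by $T_1$-singletons or $T_3$-matches. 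I would check that each element is covered exactly once: the $a,b$-elements by $T_3$; each $w_i$/$\bar w_i$ by exactly one of $T_1$, $T_2$, $T_3$ (consistency guaranteed because the $C_1$-clause is satisfied, so the chosen $T_2$-match uses a literal-element whose variable-side $T_3$-match was \emph{not} the positive/negative one consuming it); and each $s_j$ by exactly one $T_2$-match.

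For the reverse direction (exact cover $\Rightarrow$ satisfiable): by the block observation above, the exact cover induces a per-block choice; define $z_i$ to be true iff the block chose the negative $T_3$-matches (equivalently, iff $w_i$ is covered by a $T_1$- or $T_2$-match rather than a $T_3$-match). I would then verify each $c_j\in C_1$ is satisfied: $s_j$ is covered by exactly one $T_2$-match, say $(w_i,s_j)$; then $w_i$ is consumed by $T_2$, so the block chose negatively, so $z_i$ is true, and since $z_i\in c_j$, the clause is satisfied (symmetrically for $(\bar w_i,s_j)$). Clauses of $C_2$ are automatically satisfied because they are exactly the constraints forcing the renamed copies of a variable to agree, and the per-block structure of $\zeta$ encodes precisely this agreement — here I would invoke the earlier construction showing $C_2$ consists of the cyclic equality clauses, so a consistent block choice makes them all true. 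The main obstacle is bookkeeping: getting the polarity conventions straight (which block-choice corresponds to ``true'') and checking in the forward direction that the $T_2$-match chosen to cover $s_j$ does not collide with the $T_3$-match at the same variable element — this is where satisfiability of $c_j$ is essential and must be used at exactly the right point. Everything else is routine counting against $|T_1|=6n$, $|T_2|=3n$, $|T_3|=6n$ and $|W|=6n$, $|X|=3n+m$, $|Y|=3n$.
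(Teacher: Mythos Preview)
Your proposal is correct and follows essentially the same argument as the paper: the block structure of $T_3$ forces an all-positive or all-negative choice per block (encoding the truth assignment), and the $T_2$-match covering each $s_j$ witnesses satisfaction of $c_j\in C_1$. One small bookkeeping fix in the forward direction: choose the $T_2$-matches \emph{before} filling in with $T_1$-singletons (the paper's order is $T_3$, then $T_2$, then $T_1$), since as written your step~2 covers every $w_i$ not consumed by $T_3$ with a singleton, which would then collide with the $T_2$-match $(w_i,s_j)$ you add in step~3.
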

\begin{proof}
Suppose $I_{sat}'$ is satisfiable, we choose matchings out of $T$ to
form an exact cover.

We know that for each $0\le i\le n-1$, $z_{3i+1}$, $z_{3i+2}$ and
$z_{3i+3}$ are either all true or all false. If they are all true,
then we choose $(\bar{w}_{3i+1},a_{3i+1},b_{3i+2})$,
$(\bar{w}_{3i+2},a_{3i+2},b_{3i+3})$,
$(\bar{w}_{3i+3},a_{3i+3},b_{3i+1})$. Otherwise they are all false,
and $(w_{3i+1},a_{3i+1},b_{3i+1})$, $(w_{3i+2},a_{3i+2},b_{3i+2})$,
$(w_{3i+3},a_{3i+3},b_{3i+3})$ are chosen instead.

Now every element of $Y$ appears exactly once in the matches we
choose currently. Since each clause $c_j\in C_1$ is satisfied, it is
satisfied by at least one variable. We choose the variable that
leads to the satisfaction of $c_j$ (if there are multiple such
variables, we choose arbitrarily one). Suppose this variable is
$z_i$. If $z_i$ is true, then we know the positive literal $z_i\in
c_j$. According to our construction $(w_i,s_j)\in T_2$ and we choose
it. Otherwise $z_i$ is false, and the negative literal $\neg z_i\in
c_j$. Again it follows that $(\bar{w}_i,s_j)\in T_2$ and we choose
it.

Consider the matches we have chosen so far. Every element of $X$ and
$Y$ appears exactly once in these matchings. Moreover, each element
of $W$ appears at most once in these matchings. To see why, notice
that if we choose $(w_i,s_j)\in T_2$, for example, then $z_i$ is
true and we do not choose matchings of $T_3$ that contain $w_i$.
Finally, we choose matchings of $T_1$ to enforce that every element
of $W$ appears exactly once.

On the contrary, suppose there exists an exact cover of $I_{3dm}$,
we prove that $I_{sat}'$ is satisfiable. Consider elements of $X$
and $Y$. For each $0\le i\le n-1$, to ensure that $a_{3i+1}$,
$b_{3i+1}$, $a_{3i+2}$, $b_{3i+2}$ and $a_{3i+3}$, $b_{3i+3}$ appear
once respectively, in the exact cover $T'$ we have to choose either
$(\bar{w}_{3i+1},a_{3i+1},b_{3i+2})$,
$(\bar{w}_{3i+2},a_{3i+2},b_{3i+3})$,
$(\bar{w}_{3i+3},a_{3i+3},b_{3i+1})$, or choose
$(w_{3i+1},a_{3i+1},b_{3i+1})$, $(w_{3i+2},a_{3i+2},b_{3i+2})$,
$(w_{3i+3},a_{3i+3},b_{3i+3})$.

If $(\bar{w}_{3i+1},a_{3i+1},b_{3i+2})$,
$(\bar{w}_{3i+2},a_{3i+2},b_{3i+3})$,
$(\bar{w}_{3i+3},a_{3i+3},b_{3i+1})$ are in $T'$, we set $z_{3i+1}$,
$z_{3i+2}$ and $z_{3i+3}$ to be true. Otherwise we set $z_{3i+1}$,
$z_{3i+2}$ and $z_{3i+3}$ to be false. It can be easily seen that
every clause of $C_2$ is satisfied.

We consider $c_j\in C_1$. Notice that $s_j\in X$ appears once in
$T'$. Suppose the match containing $s_j$ is $(s_j,w_i)$ for some
$i$, then it follows that the positive literal $z_i\in c_j$. The
fact that $w_i$ also appears once implies that $\bar{w}_i$ appears
in $T'\cap T_3$, and thus variable $z_i$ is true and $c_j$ is
satisfied.

Otherwise the matching containing $s_j$ is $(s_j,\bar{w}_i)$ for
some $i$, then similar arguments show that the negative literal
$\neg z_i\in c_j$ and variable $z_i$ is false. Again $c_j$ is
satisfied.
\end{proof}

\section{Dynamic Programming for $Pm||C_{max}$}
\label{appendix:dynamic} We show in this section that the
traditional dynamic programming algorithm for the scheduling problem
runs in $2^{O(\sqrt{ m|I|\log m}+ m\log |I|)}$ time.

Consider the dynamic programming algorithm for the scheduling
problem. Suppose jobs are sorted beforehand as $p_1\le p_2\le
\cdots\le p_n$. We use a vector $(k,t_1,t_2,\cdots,t_{ m})$ to
represent a schedule for the first $k$ jobs where the load of
machine $i$ (i.e., total processing times of jobs on machine $i$) is
$t_i$. Let $ST_k$ be the set of all these vectors that correspond to
some schedules. We could determine $ST_k$ iteratively in the
following way.

Let $ST_0=(0,0,0,\cdots,0)$. For $k\ge 1$, $(k,t_1,t_2,\cdots,t_{
m})\in ST_k$ if there exists some $(k-1,t_1',t_2',\cdots,t_{ m}')\in
ST_{k-1}$ such that for some $1\le i\le  m$, $t_i=t_i'+p_k$, and
$t_j=t_j'$ for $j\neq i$. Since each vector of $ST_{k-1}$ can give
rise to at most $ m$ different vectors of $ST_k$, the computation of
the set $ST_k$ thus takes $O(m|ST_{k-1}|)$ time. Meanwhile, once
$ST_{n}$ is determined, we check each vector of it and select the
one whose makespan is minimized, which also takes $O(m|ST_n|)$ time.
After the desired vector is chosen, we may need to backtrack to
determine how jobs are scheduled on each machine, and this would
take $O(n)$ time.

Thus, the overall running time of the dynamic programming algorithm
mainly depends on the size of the set $|ST_k|$ for $1\le k\le n$. We
have the following lemma.

\begin{lemma}
\label{le:dynamic} $$|ST_k|\le 2^{O(\sqrt{ m|I|\log m}+m\log
|I|)}.$$
\end{lemma}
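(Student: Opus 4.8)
First I would observe that since jobs are sorted $p_1 \le \cdots \le p_n$, the key quantity is the total processing time $|I|$ (or rather the sum $P = \sum_j p_j \le |I|$), which bounds the load of any machine. A crude count gives $|ST_k| \le (P+1)^m \le 2^{O(m \log |I|)}$, but this is not good enough on its own — we need the $\sqrt{m|I|\log m}$ term to dominate in the regime where it is smaller. The trick, which is the standard Eppstein–type / Woeginger counting argument for subset-sum-like states, is that the load vector is determined not by $m$ \emph{independent} coordinates but by how the jobs — grouped by distinct size — are distributed among the machines. So the plan is to split into two cases depending on the relative sizes of $m$ and $|I|$.

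Here is the order of steps. (1) Let $P = \sum_j p_j$; note $P \le |I|$ up to constants and every $t_i \le P$. (2) Case A: if $m \ge \sqrt{|I|/\log |I|}$ (equivalently $m\log|I| \gtrsim \sqrt{m|I|\log|I|}$ — I'd pick the exact threshold to make the arithmetic clean), just use $|ST_k| \le (P+1)^m \le 2^{m\log(|I|+1)} = 2^{O(m\log|I|)}$, which is absorbed by the claimed bound. (3) Case B: $m$ is small relative to $|I|$. Here group the first $k$ jobs by distinct processing-time value; there are at most $D$ distinct values where $D \le \min(k, \sqrt{2|I|})$ since the $d$-th smallest distinct value is $\ge d$ and they sum to $\le |I|$ — actually the sharper bound is that the number of distinct values among integers summing to $|I|$ is $O(\sqrt{|I|})$. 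For each distinct value $v$ appearing $n_v$ times, a state records how those $n_v$ identical jobs are split among $m$ machines: the number of such splittings is $\binom{n_v + m - 1}{m-1}$. A state in $ST_k$ is determined by (one such split for each distinct value) — actually we must be slightly careful that the machine \emph{labels} are consistent across values, so I would instead bound $|ST_k| \le \prod_v \binom{n_v+m-1}{m-1}$, and then bound $\binom{n_v+m-1}{m-1} \le 2^{n_v + m - 1}$ or more usefully $\le (n_v+1)^{m} $; combined across the $\le O(\sqrt{|I|})$ distinct values this gives $\sum_v$ in the exponent. Using $\sum_v \log(n_v+1) \le (\#\text{distinct values}) \cdot \log(\max_v n_v + 1) \le O(\sqrt{|I|}) \cdot \log(|I|)$ — hmm, that's $\sqrt{|I|}\log|I|$, too big by a $\sqrt{\log}$ factor and missing the $m$; so the right estimate is $\log\binom{n_v+m-1}{m-1} \le m\log(n_v/m + 2)$ via the standard $\binom{a}{b}\le (ea/b)^b$ bound, and then by concavity of $\log$, $\sum_v m\log(n_v/m+2) \le m \cdot D \cdot \log\big(\frac{\sum_v n_v}{mD} + 2\big) \le mD\log(\frac{k}{mD}+2)$. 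With $D = O(\sqrt{|I|})$ and $k \le |I|$ this is $O(\sqrt{m|I|}\cdot \sqrt{m}\,\log(\cdots))$ — I need to recheck; the intended final form $\sqrt{m|I|\log m}$ suggests balancing $D \le \sqrt{|I|\log m / m}$ against something, so I would optimize the split threshold between "few large jobs treated coordinate-wise" and "many small jobs treated by multiplicity" rather than using all distinct values uniformly.

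**The main obstacle** is getting the exponents to balance to exactly $\sqrt{m|I|\log m} + m\log|I|$ rather than a weaker bound like $\sqrt{m|I|}\log|I|$: the clean way is to threshold jobs at size $\theta$ (to be chosen), handle the $\le |I|/\theta$ "large" jobs by recording each one's machine assignment ($m^{|I|/\theta}$ choices, i.e.\ $\frac{|I|}{\theta}\log m$ in the exponent) and the "small" jobs by recording per-machine load contributions which lie in $\{0,\dots,P\}$ but are multiples constrained enough that... — actually simpler: small jobs each have size $\le \theta$, there are $\le |I|$ of them, and the number of distinct small sizes is $\le \theta$; record multiplicities-per-machine giving $\le \prod (n_v+1)^m \le 2^{m\theta\log(|I|)}$-ish. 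Then balance $\frac{|I|}{\theta}\log m$ against $m\theta$ (dropping logs) to get $\theta = \sqrt{|I|\log m/m}$ and exponent $\sqrt{m|I|\log m}$, with the $m\log|I|$ term coming from the ambient $(P+1)^m$-type slack and the small-distinct-values count. I would present Case A / Case B cleanly, defer the exact constant-chasing, and cite the analogous counting in Woeginger's survey \cite{woe} for the style of argument.
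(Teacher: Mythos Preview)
Your high-level strategy --- split jobs at a size threshold, count the large ones by their machine assignment and the small ones by the load vectors they can produce, then balance --- is exactly the right idea and is what the paper does. But your execution has a fatal misreading that breaks the arithmetic, and your small-job counting is needlessly complicated.

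The key error is your identification ``$P=\sum_j p_j\le |I|$''. In this paper $|I|$ denotes the \emph{bit-length} of the input, so what is bounded by $|I|$ is $\sum_j \log_2 p_j$, not $\sum_j p_j$. Consequently your claim that the number of jobs of size exceeding $\theta$ is at most $|I|/\theta$ is false; a single job can have $p_j$ exponential in $|I|$. The correct threshold is on $\log_2 p_j$: if you call a job \emph{large} when $\log_2 p_j>\theta$, then the number of large jobs is at most $(\sum_j\log_2 p_j)/\theta\le |I|/\theta$, and assigning them to machines costs at most $m^{|I|/\theta}=2^{(|I|/\theta)\log m}$ states. This is exactly the paper's Case~1/large half of Case~3, with $\theta=\sqrt{|I|\log m/m}$.

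For the small jobs you are working far too hard with distinct values and multinomial multiplicities, and (as you noticed) the resulting bound carries an extra $\log|I|$ that does not match. The paper's observation is much simpler: if every remaining job has $\log_2 p_j\le\theta$, then each machine's load from these jobs is at most $k\cdot 2^{\theta}$, so the number of possible load vectors from the small jobs is at most $(k\cdot 2^{\theta})^m=2^{m\theta+m\log k}$. Multiplying the two contributions and setting $\theta=\sqrt{|I|\log m/m}$ gives exponent $2\sqrt{m|I|\log m}+m\log k\le O(\sqrt{m|I|\log m}+m\log|I|)$, which is the claimed bound. Drop the distinct-value combinatorics entirely; once you have the right meaning of $|I|$, the load-vector count for small jobs is a one-liner.
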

\begin{proof}
Notice that each vector of $ST_k$ corresponds to some schedule. Let
$J_1,J_2\cdots,J_k$ be the first $k$ jobs with processing times
$1\le p_1\le p_2\le \cdots\le p_k$ and
$\lambda_k=\log_2\prod_{i=1}^k p_i$. Notice that such an indexing of
jobs is only used in the proof, while in the dynamic programming
jobs are in arbitrary order. There are three possibilities.

\vskip 2mm\noindent\textbf{Case 1:} $\log_2 p_1\ge
\sqrt{\lambda_k\log_2m/m}$. Since each vector in $ST_k$ corresponds
to a schedule, we consider all possible assignments of the $k$ jobs.
Each job could be assigned to $ m$ machines, thus there are at most
$m^{k}=2^{k\log_2m}$ different assignments for $k$ different jobs.

Since $1\le p_1\le p_2\le \cdots\le p_k$, we have
$$\lambda_k=\sum_{i=1}^k\log_2 p_i\ge k\sqrt{\lambda_k\log_2m/ m},$$
thus $k\log_2( m+1)\le \sqrt{\lambda_k m\log_2m}$, which implies
that
$$|ST_k|\le 2^{k\log_2m}\le 2^{\sqrt{\lambda_k m\log_2m}}.$$

\vskip 2mm\noindent\textbf{Case 2:} $\log_2 p_k\le
\sqrt{\lambda_k\log_2m/ m}$. Consider any vector of $ST_k$, say,
$(k,t_1,t_2,\cdots,t_{m})$. As $t_i\le kp_k$, there are at most
$(kp_k)^{ m}=2^{ m(\log_2k+\log_2p_k)}$ different vectors. It can be
easily seen that
$$|ST_k|\le 2^{ m(\log_2k+\log_2p_k)}\le 2^{\sqrt{\lambda_k m\log_2m}+ m\log_2k}.$$

\vskip 2mm\noindent\textbf{Case 3:} There exists some $1\le k_0\le
k-1$ such that $\log_2 p_{k_0}\le \sqrt{\lambda_k\log_2m/ m}$ and
$\log_2 p_{k_0+1}\ge \sqrt{\lambda_k\log_2m/ m}$.

Notice that each vector of $ST_k$ corresponds to some schedule.
Given $(k,t_1,t_2,\cdots,t_{m})\in ST_k$, we may let $G_i$ be the
set of jobs on machine $i$. Group $G_i$ can be split into two
subgroups, i.e., jobs belonging to the set
$\{J_1,J_2,\cdots,J_{k_0}\}\cap G_i$ and the set
$\{J_{k_0+1},J_{k_0+2},\cdots,J_{k}\}\cap G_i$. Let $t_i^{(1)}$ be
the total processing time of jobs in the former subgroup and
$t_i^{(2)}$ be the total processing time of jobs in the latter
subgroup. Then the vector $(t_1,t_2,\cdots,t_{m})$ can be expressed
as the sum of two vectors
$$(t_1,t_2,\cdots,t_{m})=(t_1^{(1)},t_2^{(1)},\cdots,t_{m}^{(1)})+(t_1^{(2)},t_2^{(2)},\cdots,t_{m}^{(2)}).$$

Let $ST_k^{(1)}$ and $ST_k^{(2)}$ be the sets of all possible
vectors $(t_1^{(1)},t_2^{(1)},\cdots,t_{m}^{(1)})$ and
$(t_1^{(2)},t_2^{(2)},\cdots,t_{m}^{(2)})$ respectively, then we
know $|ST_k|\le |ST_k^{(1)}|\times|ST_{k}^{(2)}|$. Consider each
vector of $ST_k^{(1)}$, it corresponds to some feasible schedule of
jobs $1$ to $k_0$ over machines. Since $t_i^{(1)}\le k_0p_{k_0}$ and
$\log_2 p_{k_0}\le \sqrt{\lambda_k\log_2m/ m}$, we have
$$|ST_k^{(1)}|\le(k_0p_{k_0})^{m}\le 2^{m\log_2 k_0+\sqrt{\lambda_km\log_2m}}.$$

Consider each vector of $ST_k^{(2)}$, it corresponds to some
feasible schedule of jobs $k_0+1$ to $k$ over machines. To assign
$k-k_0$ different jobs to $m$ machines, there are at most
$m^{k-k_0}=2^{(k-k_0)\log_2m}$ different assignments. Since $\log_2
p_{k_0+1}\ge \sqrt{\lambda_k\log_2m/m}$, we have
$$\lambda_k\ge \sum_{i=k_0+1}^k\log_2 p_i\ge (k-k_0)\sqrt{\lambda_k\log_2m/ m},$$
thus $(k-k_0)\log_2m\le \sqrt{\lambda_k m\log_2m}$, which implies
that
$$|ST_k^{(2)}|\le 2^{(k-k_0)\log_2m}\le 2^{\sqrt{\lambda_km\log_2m}}.$$
Thus, $$|ST_k|\le |ST_k^{(1)}|\times|ST_k^{(2)}|\le 2^{m\log_2
k_0+2\sqrt{\lambda_km\log_2m}}.$$

In any of the above three cases, we always have $$|ST_k|\le
2^{O(\sqrt{m|I|\log m}+ m\log |I|)}.$$
\end{proof}


\begin{thebibliography}{12}
\bibitem{alon}
N. Alon, Y. Azar, G.J. Woeginger, and T. Yadid: Approximation
schemes for scheduling on parallel machines, {\em Journal on
Scheduling}, 1, 1998, 55--66.

\bibitem{lowrank} A. Bhaskara, R. Krishnaswamy, K. Talwar, U. Wieder:
Minimum Makespan Scheduling with Low Rank Processing Times. {24th
Annual ACM-SIAM Symposium on Discrete Algorithms (SODA 2013)},
937--947.

\bibitem{sparcification}
C. Calabro, R. Impagliazzo, and R. Paturi: A duality between clause
width and clause density for sat. {\em IEEE Conference on
Computational Complexity (CCC 2006)}, 252--260.

\bibitem{chen2006}
J. Chen, X. Huang, I. Kanj, and G. Xia: On the computational
hardness based on linear FPT-reductions. {\em Journal of
Combinatorial Optimization}, 11, 2006, 231-247.

\bibitem{note}
L. Chen, D.S. Ye, G.C. Zhang: A note on scheduling with low rank
processing times. arXiv:1306.3727

\bibitem{cardinality}
L. Chen, K. Jansen, W.C. Luo, G.C. Zhang: Approximation algorithms for scheduling parallel
machines with capacity constraints. Manuscript.

\bibitem{fishkin}
A. Fishkin, K. Jansen, and M. Mastrolilli: Grouping techniques for
scheduling problems: simpler and faster, {\em Algorithmica}, 51,
2008, 183--199.

\bibitem{garey}
M.R. Garey, and D.S. Johnson: Computers and Intractability: A Guide
to the Theory of NP-Completeness, chapter 3\and 4. W. H. Freeman and
Company, 1979.

\bibitem{sat faster}
T. Hertli: 3-Sat faster and simpler - unique-Sat bounds for PPSZ
hold in general, { 52th IEEE Symposium on Foundations of Computer
Science (FOCS 2011)}, 277--284.

\bibitem{hochbaum}
D.S. Hochbaum and D.B. Shmoys: Using dual approximation algorithms
for scheduling problems: practical and theoretical results, {\em
Journal of the ACM}, 34, 1987, 144--162.

\bibitem{hochbaum:book}
D.S. Hochbaum: Various notions of approximations: good, better,
best, and more, in: {\em Approximation Algorithms for NP-Hard
Problems}, D.S. Hochbaum, ed., Prentice Hall, 1977, 346--398.

\bibitem{sahni}
E. Horowitz and S. Sahni: Exact and approximate algorithms for
scheduling nonidentical processors. {\em Journal of the ACM}, 23,
1976, 317--327.


\bibitem{IPZ2001}
R. Impagliazzo, R. Paturi, and F. Zane: Which problems have strongly
exponential complexity? {\em Journal of Computer and System
Science}, 63.4, 2001, 512--530.

\bibitem{jansen}
K. Jansen: An EPTAS for Scheduling Jobs on Uniform Processors: Using
an MILP Relaxation with a Constant Number of Integral Variables.
{\em SIAM Journal on Discrete Mathematics} 24(2), 2010, 457--485.

\bibitem{jansen2013}
K. Jansen, F. Lang, and K. Lang: Bounding the running time of
algorithms for scheduling and packing problems. University of Kiel,
Technical Report 1302 (2013).

\bibitem{jansen unrelated2}
K. Jansen and M. Mastrolilli: Scheduling unrelated parallel
machines: linear programming strikes back. University of Kiel,
Technical Report 1004 (2010).

\bibitem{jansen unrelated}
K. Jansen and L. Porkolab: Improved approximation schemes for
scheduling unrelated parallel machines. {\em Mathematics of
Operations Research} 26(2), 2001, 324--338.

\bibitem{jansen chri}
K. Jansen and C.  Robenek: Scheduling jobs on identical and uniform
processors revisited. {\em Workshop on Approximation and Online
Algorithms (WAOA 2011)}, 109--122.

\bibitem{kulik2010}
A. Kulik and H. Shachnai: There is no EPTAS for two-dimensional
knapsack. {\em Information Processing Letters}, 110, 2010, 707--710.

\bibitem{LeShTa90}
J.~K. Lenstra, D.~B. Shmoys, and Eva Tardos:
\newblock Approximation algorithms for scheduling unrelated parallel machines.
\newblock {\em Mathematical Programing}, 46, 1980, 259--271.

\bibitem{kindt}
C. Lente, M. Liedloff, A. Soukhal, and V. T'Kindt: On an extension
of the sort and search method with application to scheduling theory.
submitted to {\em Theoretical Computer Science}.

\bibitem{leung}
J. Leung: Bin packing with restricted piece sizes, {\em Information
Processing Letters}, 31, 1989, 145--149.

\bibitem{marx2}
D. Lokshtanov, D. Marx, and S. Saurabh: Lower bounds based on the
Exponential Time Hypothesis.  {\em Bulletin of the EATCS} 105, 2011,
41--72.

\bibitem{marx1}
D. Marx: On the optimality of planar and geometric approximation
schemes. {\em IEEE Symposium on Foundations of Computer Science
(FOCS 2007)}, 338--348.

\bibitem{marx3}
D. Marx: Parameterized complexity and approximation algorithms. {\em
The Computer Journal}, 51(1), 2008, 60--78.

\bibitem{sub-ex}
T. O'Neil: Sub-exponential algorithms for 0/1 knapsack and bin
packing. Unpublished Manuscript.

\bibitem{sub neil}
T. O'Neil and S. Kerlin: A simple $2^{O(\sqrt{x})}$ algorithm for
partition and subset sum. {\em International Conference on
Foundations of Computer Science (FCS 2010)}, 55--58.

\bibitem{pat2010}
M. Patrascu and R. Williams: On the possibility of faster SAT
algorithms. {\em ACM-SIAM Symposium on Discrete Algorithms (SODA
2010)}, 1065--1075.


\bibitem{tovey}
C.A. Tovey: A simplified satisfiability problem, {\em Discrete
Applied Mathematics}, 8, 1984, 85--89.

\bibitem{woe}
G. Woeginger: Exact algorithms for NP-hard problems: A survey, in M.
Junger, G. Reinelt, G. Rinaldi (Eds): Combinatorial
Optimization--Eureka! You shrink!, LNCS 2570 (2003), 185--207.


\end{thebibliography}
\end{document}